\documentclass[12pt]{article}

\usepackage{amssymb,amsmath,amsthm,amsfonts,eurosym,geometry,ulem,graphicx,caption,color,setspace,sectsty,comment,footmisc,natbib,pdflscape,array,bbm,hyperref,mathrsfs,mathtools,enumerate,enumitem,indentfirst,todonotes,yhmath,bm,algorithm2e,cleveref,scalerel,stackengine,booktabs,threeparttable,multirow,float,xcolor,soul,tcolorbox,makecell,subcaption,longtable,placeins}

\stackMath
\newcommand\What[1]{%
\savestack{\tmpbox}{\stretchto{%
  \scaleto{%
    \scalerel*[\widthof{\ensuremath{#1}}]{\kern-.6pt\bigwedge\kern-.6pt}%
    {\rule[-\textheight/2]{1ex}{\textheight}}%
  }{\textheight}%
}{0.5ex}}%
\stackon[1pt]{#1}{\tmpbox}%
}

\usepackage[toc,page]{appendix}
\RestyleAlgo{ruled}
\SetKwComment{Comment}{/* }{ */}

\allowdisplaybreaks[1]

\newcounter{question}
\def\sym#1{\ifmmode^{#1}\else\(^{#1}\)\fi}

\newcommand{\eps}{\varepsilon}
\newcommand{\cQ}{\mathcal{Q}} %
\newcommand{\mA}{\mathcal{A}}

\newcommand{\mK}{\mathcal{K}}
\newcommand{\mL}{\mathcal{L}}

\newcommand{\mP}{\mathcal{P}}

\newcommand{\mV}{\mathcal{V}}

\newcommand{\mX}{\mathcal{X}}
\newcommand{\mY}{\mathcal{Y}}
\newcommand{\mZ}{\mathcal{Z}}

\newcommand{\bC}{\mathbb{C}}

\newcommand{\bE}{\mathbb{E}}

\newcommand{\bR}{\mathbb{R}}

\newcommand{\bfP}{\mathbf{P}}

\newcommand{\bfY}{\mathbf{Y}}

\newcommand{\argmin}{\operatornamewithlimits{argmin}}
\newcommand{\argmax}{\operatornamewithlimits{argmax}}

\newtheorem{theorem}{Theorem}
\newtheorem{assumption}{Assumption}
\newtheorem{proposition}{Proposition}
\newtheorem{corollary}{Corollary}
\newtheorem{lemma}{Lemma}
\newtheorem{definition}{Definition}

\theoremstyle{definition}
\newtheorem{example}{Example}

\newcommand{\itemsnewline}{\hfil\allowbreak\newline\vspace{-\dimexpr\baselineskip*2/3\relax}}

\newlist{assumptionitems}{enumerate}{1}
\setlist[assumptionitems]{label=(\roman*), ref=\theassumption(\roman*), align=left, leftmargin=2.2em, topsep=0pt, partopsep=0pt, before=\itemsnewline}
\Crefname{assumption}{Assumption}{Assumptions}
\crefname{assumptionitemsi}{assumption}{assumptions}

\newlist{lemmaitems}{enumerate}{1}
\setlist[lemmaitems]{label=(\roman*), ref=\thelemma(\roman*), align=left, leftmargin=2.2em, topsep=0pt, partopsep=0pt, before=\itemsnewline}
\Crefname{lemma}{Lemma}{Lemmas}
\crefname{lemmaitemsi}{lemma}{lemmas}

\newlist{propositionitems}{enumerate}{1}
\setlist[propositionitems]{label=(\roman*), ref=\theproposition(\roman*), align=left, leftmargin=2.2em, topsep=0pt, partopsep=0pt, before=\itemsnewline}
\Crefname{proposition}{Proposition}{Propositions}
\crefname{propositionitemsi}{proposition}{propositions}

\newlist{theoremitems}{enumerate}{1}
\setlist[theoremitems]{label=(\roman*), ref=\thetheorem(\roman*), align=left, leftmargin=2.2em, topsep=0pt, partopsep=0pt, before=\itemsnewline}
\Crefname{theorem}{Theorem}{Theorems}
\crefname{theoremitemsi}{theorem}{theorems}

\newlist{definitionitems}{enumerate}{1}
\setlist[definitionitems]{label=(\roman*), ref=\thedefinition(\roman*), align=left, leftmargin=2.2em, topsep=0pt, partopsep=0pt, before=\itemsnewline}
\Crefname{definition}{Definition}{Definitions}
\crefname{definitionitemsi}{definition}{definitions}

\theoremstyle{definition}
\newtheorem{remark}{Remark}

\newcolumntype{L}[1]{>{\raggedright\let\newline\\arraybackslash\hspace{0pt}}m{#1}}
\newcolumntype{C}[1]{>{\centering\let\newline\\arraybackslash\hspace{0pt}}m{#1}}
\newcolumntype{R}[1]{>{\raggedleft\let\newline\\arraybackslash\hspace{0pt}}m{#1}}

\newcommand{\continuation}{??}

\hypersetup{colorlinks=true,linkcolor=black,citecolor=blue,urlcolor=blue}

\newcommand{\blueref}[1]{\hyperref[#1]{\textcolor{blue}{\ref*{#1}}}}
\begin{document}

\title{Sequential Estimation of Dynamic Discrete Choice Models with Unobserved Heterogeneity\thanks{The use of Worldpanel by Numerator data does not imply the endorsement of Worldpanel by Numerator in relation to the interpretation or analysis of the data. All errors are our own.}}
\author{Ertian Chen\thanks{School of Economics \& Paula and Gregory Chow Institute for Studies in Economics, Xiamen University and IFS. Email: \href{mailto:ertianchen@xmu.edu.cn}{ertianchen@xmu.edu.cn}.}
\and
Hiroyuki Kasahara\thanks{Vancouver School of Economics, University of British Columbia. Email: \href{mailto:hkasahar@mail.ubc.ca}{hkasahar@mail.ubc.ca}.}
\and
Katsumi Shimotsu\thanks{Faculty of Economics, University of Tokyo. Email: \href{mailto:shimotsu@e.u-tokyo.ac.jp}{shimotsu@e.u-tokyo.ac.jp}.}
}
\date{\today}
\maketitle
\vspace{-1cm}
\begin{center}
\end{center}

\begin{abstract} 

\noindent Unobserved heterogeneity is empirically central in dynamic discrete choice models but computationally costly to incorporate: estimation requires repeatedly solving fixed-point equations for every latent type. We develop EM-NPL($q$), a unified framework combining the sequential pseudo-likelihood (NPL) and finite-mixture Expectation-Maximization (EM) algorithms, truncating the inner solver to $q$ iterations, and accommodating the Bellman, policy valuation, Euler, and Efficient Pseudo-Likelihood (EPL) equations, with step-by-step implementation guidance. For linear-in-parameters models estimated via the policy valuation or EPL equations, we establish truncation invariance: for any $q\geq 1$, EM-NPL($q$) is numerically identical to the fully converged EM-NPL estimator, so q affects computation but not statistical properties. We establish consistency, asymptotic normality, and local convergence. Truncation reduces runtime by up to 26\% for PV\_GMRES in single-agent simulations and 58\% for EPL in dynamic games. In a cola-demand application, ignoring unobserved heterogeneity understates own-price elasticities and soda-tax compensating variation by up to 85\% and 90\%.
\bigskip

\noindent \textbf{Keywords:} dynamic discrete choice, unobserved heterogeneity, finite mixtures, sequential estimation.

\bigskip
\noindent \textbf{JEL Codes:} C13, C57, C63, D12

\bigskip
\end{abstract}
\pagebreak \newpage

\onehalfspacing
\section{Introduction}

\noindent Dynamic discrete choice (DDC) models map panel data on discrete decisions into the primitives of dynamic optimization. These primitives support counterfactual and welfare analysis of policies whose effects unfold over time, such as a tax on a storable good. Agents and markets differ persistently in dimensions the econometrician does not observe, so persistent choices admit two explanations, true state dependence and unobserved heterogeneity, a distinction \citet{heckman1981heterogeneity} emphasized. The distinction matters for policy: under true state dependence a temporary intervention can have lasting effects, whereas stable unobserved differences imply none; conflating the two distorts estimated dynamics and counterfactuals. Such heterogeneity is quantitatively important across labor, demand estimation, and industrial organization: unobserved endowments account for 90 percent of the variance in lifetime utility in the career decisions of young men \citep{keane1997career}, with similar evidence for fertility and work among married women \citep{francesconi2002joint} and for dynamic demand \citep{osborne2011consumer,gowrisankaran2012dynamics,wang2015impact}. A finite mixture of latent types captures this heterogeneity tractably, and nonparametric results give conditions for identifying finite-mixture DDC models from panel data \citep{kasahara2009nonparametric,hu2012nonparametric}.

At the same time, incorporating unobserved heterogeneity is computationally costly because it requires repeatedly solving fixed-point equations for all unobserved types. The estimation problem has a nested structure: outer iterations update the structural parameters, the mixture weights, and the conditional choice probabilities (CCPs), and each outer iteration requires inner fixed-point solves for every latent type, so the cost of the inner solver multiplies across types and across outer iterations. This burden is substantial in single-agent models with large state spaces and is magnified further in dynamic games \citep{aguirregabiria2010dynamic,arcidiacono2011practical}. A natural question is whether one can reduce the cost of these inner fixed-point solvers without sacrificing statistical efficiency.

Existing approaches reduce this burden in different ways. Full-solution methods solve the Bellman equation at each parameter evaluation \citep{rust1987optimal}, while CCP-based estimators replace the dynamic programming problem with fixed-point problems in CCP spaces \citep{hotz1993conditional,aguirregabiria2002swapping,aguirregabiria2007sequential,arcidiacono2011conditional}. The Mathematical Programming with Equilibrium Constraints approach treats the fixed-point problem as a constraint in a joint optimization problem \citep{su2012constrained,dube2012improving}. Other approaches exploit special structure of the model, such as finite dependence \citep{arcidiacono2011conditional,aguirregabiria2023solving} and index invertibility \citep{bunting2025faster}. \citet{bugni2021iterated} study the truncation of outer iterations, \citet{adusumilli2025temporal} develop temporal-difference estimation, which approximates the value function with a sieve, and \citet{dearing2025efficient} propose the Efficient Pseudo-Likelihood (EPL) estimator for dynamic games. These methods substantially expand the range of feasible applications, yet they leave two gaps for a practitioner facing latent types. The first gap is practical: the tools were developed for particular fixed-point equations, largely in models without unobserved heterogeneity, and there is little guidance on how to combine the equation, the inner algorithm, and the truncation level once latent types enter the model. The inner maps of these equations differ in whether they are contractions and in whether they depend on the current parameter, so guidance developed for one equation does not transfer automatically to another. The second gap is theoretical: it is unknown how truncating the inner fixed-point solver affects the statistical properties of the resulting estimator, so any runtime saved on the inner loop comes with unquantified statistical risk.

This paper develops a unified framework for sequential estimation of DDC models with unobserved heterogeneity. We make two contributions. First, we propose the EM-NPL($q$) estimator, which combines the sequential pseudo-likelihood (NPL) framework of \citet{aguirregabiria2002swapping,aguirregabiria2007sequential} with the finite-mixture Expectation-Maximization (EM) approach of \citet{arcidiacono2011conditional}. At each outer iteration, the estimator applies only $q$ steps of a chosen inner algorithm to approximate the fixed-point object required in the M-step, and then updates the structural parameters and mixture weights. The framework accommodates different fixed-point equations, including the Bellman equation \citep{rust1987optimal}, policy valuation \citep{hotz1993conditional,aguirregabiria2007sequential}, the Euler equation \citep{aguirregabiria2023solving}, and EPL \citep{dearing2025efficient,fukasawa2025epl}, and it can be paired with different inner algorithms such as successive approximation, the Generalized Minimal Residual (GMRES) method, or Newton-type methods. When the model has a single latent type, the E-step is vacuous and the EM-NPL($q$) estimator reduces to the NPL($q$) estimator of \citet{aguirregabiria2007sequential}, so existing sequential estimators arise as special cases. In this framework, the fixed-point equation, the inner algorithm, and the truncation level $q$ are three distinct design choices, and the step-by-step implementation guidance described below shows how to set them jointly.

Second, we establish a truncation-invariance result for the workhorse class of DDC models with linear-in-parameters utility estimated via the policy valuation or EPL equations. In this class, the fixed-point map does not depend on the current structural parameter: it reduces to a parameter-independent linear system solved once per outer iteration rather than at every parameter evaluation. The observation is simple, and its consequence is exact. We show that the EM-NPL($q$) estimator, for any $q \geq 1$, is numerically identical to the EM-NPL estimator that solves the inner fixed-point problem until convergence. The identity holds in any finite sample and involves no asymptotic approximation. Consequently, the number of inner iterations and the choice of inner solver affect only computation, never statistical properties: there is no efficiency loss from inner truncation, and NPL standard errors remain valid without any adjustment. The result gives the question posed above a sharp answer: within this class, truncating the inner solver costs nothing statistically. It also decouples numerical implementation from asymptotic properties, and the practical guidance exploits this decoupling.

The implementation guidance proceeds in three stages. In the first stage, we benchmark candidate inner solvers on the initial fixed-point problem, before any outer iteration is run, and discard those that are dominated. In the second stage, we select the fixed-point equation and the inner algorithm from three model features. The first feature is the functional form of the utility function: linear-in-parameters utility reduces the policy valuation and EPL problems to linear systems suited to GMRES, whereas nonlinear-in-parameters utility calls for Newton-type methods on the Bellman equation. The second is finite dependence, which makes the Euler equation available \citep{aguirregabiria2023solving}; its contraction modulus is strictly below the discount factor $\beta$, so even successive approximation converges relatively quickly. The third is the discount factor itself, which governs the convergence rate of successive approximation: when $\beta$ is as low as 0.8, the choice of algorithm matters little, whereas a $\beta$ close to one makes successive approximation extremely slow. In the third stage, we choose the truncation level $q$, and the appropriate level is equation-specific. In our simulations, with superlinear solvers such as GMRES and Newton, $q = 4$ to $6$ usually suffices; successive approximation may need $q = 8$ to $10$, or more, when $\beta$ is large, since the inner contraction modulus approaches one as $\beta$ approaches one. EPL is a separate case: its inner map is not a contraction, and our simulations favor $q = 8$ to $12$ for computational speed. In practice, we recommend starting with a low $q$ and increasing it only if the outer loop fails to converge. Under linear-in-parameters utility, the truncation-invariance result makes $q$ a pure runtime choice; outside that class, $q$ balances inner accuracy against the number of outer iterations.

We support these contributions with a full asymptotic and convergence theory. We establish consistency and $\sqrt{N}$ asymptotic normality of the EM-NPL($q$) estimator. We characterize local convergence of the algorithm, whose key determinant is the spectral radius of the untruncated algorithm's iteration map. If the EM-NPL algorithm converges locally, then the EM-NPL($q$) algorithm also converges locally for sufficiently large $q$. We derive an iteration-count bound that links the number of outer iterations to the approximation error of the truncated inner solver, making precise how inner accuracy translates into outer-loop effort. Outside the linear-in-parameters case, we show that the asymptotic variance of EM-NPL($q$) differs from that of EM-NPL by a term of the order of the inner-solver approximation error, so the faster the inner algorithm converges, the faster this term vanishes in $q$. Together, these results clarify when inner truncation changes only runtime and when it may also affect convergence behavior.

Two earlier approaches are closest to our use of truncation. \citet{kasahara2012sequential} propose the $q$-NPL algorithm, which replaces the CCP mapping with its $q$-fold composition within each outer iteration. This improves the convergence of sequential estimators, and under a stability condition the efficiency approaches that of maximum likelihood as $q$ grows; see also \citet{kasahara2008pseudo} for sequential pseudo-likelihood estimation of single-agent models and \citet{kasahara2011sequential} for a value-function-mapping variant. In the $q$-NPL algorithm, $q$ counts exact evaluations of the fixed-point mapping itself; in ours, $q$ counts steps of a numerical solver applied to the fixed-point equation, inside an EM loop over latent types. We also cover four classes of fixed-point equations and, for the linear-in-parameters class, establish exact numerical invariance at every finite $q$ rather than gains that accrue as $q$ grows. \citet{bugni2021iterated} study truncation of the outer iteration sequence and derive the asymptotic properties of $k$-step estimators. Our truncation operates on the inner solver while holding the outer structure fixed, so the two forms of truncation are complementary: outer truncation reduces the number of pseudo-likelihood updates, whereas inner truncation reduces the cost of each update.

Two further comparisons sharpen the contribution. Relative to \citet{arcidiacono2011conditional}, whose finite-mixture EM approach our M-step embeds, we add truncated inner solvers, the menu of fixed-point equations, and convergence theory for the resulting algorithm. Relative to \citet{dearing2025efficient}, whose EPL estimator arises as one instance of our framework, we extend EPL to finite-mixture unobserved heterogeneity and quantify the computational effect of its inner truncation. These connections keep the framework close to existing practice: the additions concern how, and how accurately, the inner fixed-point problems are solved.

We apply EM-NPL($q$) to two Monte Carlo simulations: a single-agent entry--exit model with unobserved heterogeneity and a dynamic game. In the single-agent model, truncated PV\_GMRES implementations reduce runtime by about 7--26\% relative to full inner convergence and are often several times faster than standard alternatives. In dynamic games, policy valuation GMRES is the fastest method. The EPL estimator, even with its own inner truncation, is substantially slower than the policy valuation approach but achieves a smaller mean squared error. At well-chosen truncation levels, EPL\_GMRES reduces runtime by about 58\% relative to full-convergence EPL\_GMRES. These simulations underpin the three-stage guidance: they reveal which solvers are dominated, how the preferred equation and algorithm vary with model features, and which truncation levels are sensible for each equation.

We then apply EM-NPL($q$) to household demand for cola using Worldpanel by Numerator Take Home data. We estimate up to ten latent types separately for three income groups. Computational time is approximately linear in the number of types, and estimating a ten-type model takes about 4.8 minutes on average. Ignoring unobserved heterogeneity understates long-run own-price elasticities by up to about 85\%, short-run own-price elasticities by up to about 85\%, and compensating variation from a soda tax by up to about 90\%. These magnitudes show that unobserved heterogeneity is quantitatively important, while the runtime demonstrates that accounting for it is computationally feasible at a realistic scale.

\noindent \textbf{Notation.} All limits below are taken as $N \rightarrow \infty$, unless stated otherwise. Let $\coloneqq$ denote ``equals by definition.'' Let $\mathbbm{1}\{A\}$ denote an indicator function that takes the value of one when $A$ is true and zero otherwise. Let $\Delta(\mA)$ denote the probability simplex over $\mA$. For vectors, $\|\cdot\|$ denotes the Euclidean norm and $\|\cdot\|_{\infty}$ denotes the sup norm. For matrices, $\|\cdot\|_{F}$ denotes the Frobenius norm. For a square matrix $A$, $\mathrm{spec}(A)$ denotes its spectrum (the set of eigenvalues) and $\rho(A) \coloneqq \max_{\lambda \in \mathrm{spec}(A)}|\lambda|$ its spectral radius. With a slight abuse of notation, for (column) vectors $a_1, \ldots, a_k$, we write $a = (a_1, \ldots, a_k)$ to denote the stacked column vector $a = (a_1^{\top}, \ldots, a_k^{\top})^{\top}$.

\noindent \textbf{Outline.} \Cref{sec: model} describes the finite-mixture dynamic discrete choice model. \Cref{sec: estimator} describes the EM-NPL($q$) estimator. \Cref{sec: theory} presents the theoretical properties. \Cref{sec: practical guidance} provides implementation guidance. \Cref{sec: simulation} provides Monte Carlo simulations. \Cref{sec:empirical} presents the cola-demand application. \Cref{sec: conclusion} concludes. Appendix~\ref{sec:proofs} collects auxiliary lemmas, all proofs, and additional implementation details.

\section{Model} \label{sec: model}

\noindent We consider a stationary dynamic discrete choice (DDC) model with unobserved heterogeneity following \citet{kasahara2009nonparametric} and \citet{arcidiacono2011conditional}. Heterogeneity is modeled as a finite mixture with $M$ unobserved types.

A panel of $N$ markets is observed, each with $J$ firms indexed by $j = 1, \ldots, J$ and observed for $T$ periods.  The unobserved heterogeneity operates at the market level: market $i$'s type $m_i \in \{1,\ldots,M\}$ is drawn once at the start and remains fixed throughout the sample. Let $\pi_m = \Pr(m_i = m)$ denote the mixing weight, with $\pi_m > 0$ for all $m$ and $\sum_{m=1}^{M} \pi_m = 1$.  In what follows, we suppress the market index $i$.

Time is discrete and infinite, indexed by $t = 0, 1, 2, \ldots$.  Denote the action space by $a \in \mA = \{0, 1, \ldots, |\mA|-1\}$, where action 0 denotes the outside option.  In each period $t$, given a vector of state variables $x_{t} \in \mX$ observable to all firms, and firm $j$'s private information $\eps_{jt}\coloneqq(\eps_{jt}(a))_{a \in \mA} \in \bR^{|\mA|}$, firm $j$ chooses an action $a_{jt} \in \mA$ simultaneously with other firms.  The state variables follow a first-order controlled Markov process with transition density $f(x_{t+1}, \eps_{t+1} \mid x_{t}, a_{t}, \eps_{t}; \theta_{f})$, where $a_{t} = (a_{1t}, \ldots, a_{Jt})$ collects the actions of all firms in period $t$, $\eps_{t} = (\eps_{1t}, \ldots, \eps_{Jt})$, and $\theta_{f} \in \Theta_{f}$ is finite-dimensional.

Each firm in a market of type $m$ has flow utility $U_{j}(x_{t}, a_{t}, \eps_{jt}; \theta^{m})$, where $\theta^{m} \in \Theta \subseteq \bR^{d}$ is a vector of structural parameters indexed by the type $m \in \{1, \ldots, M\}$. Firm $j$ chooses its actions to maximize expected discounted utility, 
\begin{equation*}
    \bE\left[ \sum_{t=0}^{\infty} \beta^{t} U_{j}(x_{t}, a_{t}, \eps_{jt}; \theta^{m}) \mid x_{0}, \eps_{j0} \right],
\end{equation*}
where $\beta \in (0, 1)$ is the discount factor.  When $M = 1$, this reduces to the standard single-type DDC model.  We impose the following standard assumptions.

\begin{assumption} \label{assump: standard}
\begin{assumptionitems}
    \item (Additive separability) $U_{j}(x_{t}, a_{t}, \eps_{jt}; \theta^{m}) = U_{j}(x_{t}, a_{t}; \theta^{m}) + \eps_{jt}(a_{jt})$.
    \item \label{assump: conditional} (Conditional independence) $f(x_{t+1}, \eps_{t+1}|x_{t}, a_{t}, \eps_{t}; \theta_{f}) = f_{x}(x_{t+1}|x_{t}, a_{t}; \theta_{f})f_{\eps}(\eps_{t+1})$.
    \item (Independent Private Values) Private values are i.i.d.\ across firms, time, markets, and actions. Moreover, they follow a Type I extreme value distribution.
    \item (Finite state space) The observable state space $\mX$ is finite.
\end{assumptionitems}
\end{assumption}
The transition density is common across types, and $\theta_{f}$ can be estimated separately from $\theta^m$.  Thus, we focus on the estimation of $\theta^{m}$ and suppress the dependence on $\theta_{f}$.

We consider Markov Perfect Equilibrium, in which each firm's strategy depends only on the current state variables.  We restrict attention to stationary equilibria and suppress the time subscript $t$.  For each type $m$, let $\sigma^{m} \coloneqq(\sigma_{1}^{m}, \ldots, \sigma_{J}^{m})$ be a strategy profile where $\sigma_{j}^{m}: \mX \times \bR^{|\mA|} \to \mA$ maps the current state variables and private information to an action.  The Conditional Choice Probability (CCP) for firm $j$ in a market of type $m$ is
\begin{equation*}
    P_{j}^{\sigma^{m}}(a|x) = \int \mathbbm{1}\{ \sigma_{j}^{m}(x, \eps) = a \} f_{\eps}(\eps) d\eps .
\end{equation*}

For a market of type $m$ and firm $j$, let $\sigma_{-j}^{m}$ be the strategy profile of the other firms, and let $P_{-j}^{\sigma^{m}}$ be the corresponding CCP profile. The expected utility and transition density for firm $j$ are
\begin{align*}
    & U_{j}(x, a_{j}; \theta^{m}, P_{-j}^{\sigma^{m}}) = \sum_{a_{-j} \in \mA^{J-1}} U_{j}(x, a_{j}, a_{-j}; \theta^{m}) \prod_{k \neq j} P_{k}^{\sigma^{m}}(a_{k}|x), \\
    & f_{x}(x'|x, a_{j}; P_{-j}^{\sigma^{m}}) = \sum_{a_{-j} \in \mA^{J-1}} f_{x}(x'|x, a_{j}, a_{-j}) \prod_{k \neq j} P_{k}^{\sigma^{m}}(a_{k}|x).
\end{align*}

Under the extreme value assumption on private values, the \textit{integrated value function} for firm $j$ given $\sigma^{m}_{-j}$ satisfies the Bellman equation
\begin{equation*}
    V_{j}(x; \theta^{m}, P_{-j}^{\sigma^{m}}) = \log \left( \sum_{a \in \mA} \exp\left( v_{j}(x, a; \theta^{m}, P_{-j}^{\sigma^{m}})\right) \right) + \kappa ,
\end{equation*}
where $\kappa$ is Euler's constant, and $v_{j}(x, a; \theta^{m}, P_{-j}^{\sigma^{m}})$ is the conditional value function for firm $j$ defined as
\begin{equation*}
    v_{j}(x, a; \theta^{m}, P_{-j}^{\sigma^{m}}) \coloneqq U_{j}(x, a; \theta^{m}, P_{-j}^{\sigma^{m}}) + \beta \sum_{x' \in \mX} f_{x}(x'|x, a; P_{-j}^{\sigma^{m}}) V_{j}(x'; \theta^{m}, P_{-j}^{\sigma^{m}}).
\end{equation*}

The following definition characterizes a Markov Perfect Equilibrium, a strategy profile in which each firm's strategy is a best response given the others' strategies.

\begin{definition}[Markov Perfect Equilibrium]
A strategy profile $\sigma^{m*} = (\sigma_{1}^{m*}, \ldots, \sigma_{J}^{m*})$ is a Markov Perfect Equilibrium for type $m$ if for each firm $j$ and for all $(x, \eps) \in \mX \times \bR^{|\mA|}$,
\begin{equation*}
    \sigma_{j}^{m*}(x, \eps) \in \argmax_{a \in \mA} \left\{ v_{j}(x, a; \theta^{m}, P_{-j}^{\sigma^{m*}}) + \eps(a) \right\}.
\end{equation*}
\end{definition}
Under the Type I extreme value assumption (Assumption~\ref{assump: standard}(iii)), the equilibrium CCPs take the logit form
\begin{equation} \label{eq: ccp}
    P_{j}^{\sigma^{m*}}(a|x) = \frac{\exp\left(v_{j}(x, a; \theta^{m}, P_{-j}^{\sigma^{m*}}) \right)}{\sum_{a' \in \mA} \exp\left( v_{j}(x, a'; \theta^{m}, P_{-j}^{\sigma^{m*}}) \right)}.
\end{equation}

\section{Framework} \label{sec: estimator}

\noindent Following the sequential estimation approach of \citet{aguirregabiria2002swapping,aguirregabiria2007sequential} and the Expectation--Maximization (EM) framework of \citet{arcidiacono2011conditional}, we define the EM-NPL($q$) estimator as a fixed point of a system combining pseudo-likelihood maximization with fixed-point equations. The conventional approach solves the fixed-point equations exactly; our estimator replaces the exact solver with $q$ iterations of an inner algorithm. We first introduce the fixed-point equations that arise in estimation, present the inner algorithms used to solve them, and define the EM-NPL($q$) estimator.

\subsection{Fixed-Point Equations for Estimation} \label{sec: em npl m step}

\noindent In this section, we suppress the type index $m$ for notational simplicity. Let $P_j \coloneqq \{P_j(a|x): a \in \mA, x \in \mX\}$, and let $P \coloneqq (P_{1}, \ldots, P_{J}) \in \mP$ stack the CCPs across firms. At each maximization step, given $(P, \tilde{\theta}, \tilde Y)$ from the previous iteration, we solve for the nuisance parameter $Y \in \mY$ (for example, the value function) that satisfies the fixed-point equation
\begin{equation} \label{eq: fixed point general}
    Y = G(\theta, Y, P, \tilde{\theta}, \tilde Y).
\end{equation}
Here $\tilde{\theta}$ and $\tilde Y$ are the structural parameter and nuisance value from the previous iteration, and, for linearization-based mappings, $\tilde Y$ is the center about which the mapping is built. A given mapping need not depend on all arguments: the policy valuation mapping of \Cref{ex: policy valuation} is $G(\theta, Y, P)$ and the single-agent Bellman mapping of \Cref{ex: bellman equation} is $G(\theta, Y)$, whereas the EPL mapping of \Cref{ex: epl} is $G(\theta, Y, \tilde{\theta}, \tilde Y)$ and carries no exogenous $P$. As the examples below illustrate, different choices of $G$ and $Y$ give rise to different estimators.

\begin{example}[Policy Valuation and Value Function] \label{ex: policy valuation}
As shown in \citet{aguirregabiria2002swapping,aguirregabiria2007sequential}, the value function for firm $j$ satisfies
\begin{equation}\label{eq:Bellman}
    V_{j}(x) = \sum_{a \in \mA} P_{j}(a|x) \Bigl[ U_{j}(x, a; \theta, P_{-j}) + \kappa - \log P_{j}(a|x) \Bigr] + \beta \sum_{x' \in \mX} f^P(x'|x) V_{j}(x'),
\end{equation}
where $f^P(x'|x) \coloneqq \sum_{a \in \mA^J} (\prod_{j=1}^J P_{j}(a_j|x)) f_{x}(x'|x, a)$ is the transition probability of $x$ induced by $P$. Let $U_{j}^{P}(x; \theta) \coloneqq \sum_{a \in \mA} P_{j}(a|x) \bigl[ U_{j}(x, a; \theta, P_{-j}) + \kappa - \log P_{j}(a|x) \bigr]$ be the expected flow payoff under $P$ inclusive of the entropy term, and let $U_j^P(\theta) \coloneqq \{U_j^P(x; \theta): x \in \mX\}$. Let $Y = (V_{1}, \ldots, V_{J})$ with $V_j \coloneqq \{V_j(x): x \in \mX\}$, and let $F^P$ denote the matrix with entries $f^P(x'|x)$, so that \eqref{eq:Bellman} can be written as $(I - \beta F^P) V_j = U_j^P(\theta)$. Then, the above can be written as
\begin{equation*}
    Y = G(\theta, Y, P),
\end{equation*}
where $G$ stacks the right-hand side across firms and states.  This is the fixed-point equation used in the NPL estimator of \citet{aguirregabiria2007sequential}.

When the utility function is linear in parameters, $U_{j}(x,a;\theta, P_{-j}) = \phi_{j}(x,a; P_{-j})^{\top}\theta = \sum_{\ell=1}^{d_\theta} \phi_{j,\ell}(x,a; P_{-j})\,\theta_\ell$, we can remove $\theta$ from the fixed-point map entirely, because $\theta$ enters the policy valuation equation only through the flow payoff, additively and linearly, while the other terms depend on $P$ alone. We can write the stacked Bellman equation \eqref{eq:Bellman} as $V_j = \sum_{\ell=1}^{d_\theta}  \bar\phi_{j,\ell}\theta_\ell + \bar\phi_{j,P} + \beta F^P V_j$, where $\bar{\phi}_{j,\ell}$ stacks $\sum_{a \in \mA} P_{j}(a|x)\,\phi_{j,\ell}(x,a; P_{-j})$ over $x \in \mX$ and $\bar{\phi}_{j,P}$ stacks $\sum_{a \in \mA}P_{j}(a|x)(\kappa- \log P_{j}(a|x))$ over $x \in \mX$. We then solve for the value contribution of $\bar\phi_{j,\ell}$ and $\bar\phi_{j,P}$ separately via the linear systems
\begin{align*}
    W_{j,\ell} & = \bar{\phi}_{j,\ell} + \beta F^P W_{j,\ell}, \qquad \ell = 1,\ldots,d_{\theta}, \\
    W_{j,P} & = \bar{\phi}_{j,P} + \beta F^P W_{j,P}.
\end{align*}
All $d_\theta + 1$ systems share the operator $(I - \beta F^P)$ and differ only in their forcing terms. Reassembling these components recovers the value function as an affine function of $\theta$ as
\[
    V_{j} = \sum_{\ell=1}^{d_{\theta}}  W_{j,\ell} \theta_{\ell} + W_{j,P},
\]
with coefficients that depend only on $P$. Stacking $(W_{j,\ell}, W_{j,P})$ across components and firms into $Y$, the fixed-point equation becomes
\begin{equation} \label{eq: linear policy value}
    Y = G(Y, P),
\end{equation}
in which $G$ does not depend on $\theta$. Hence we solve \eqref{eq: linear policy value} only once per outer iteration, and evaluating $V_j$ for any candidate $\theta$ reduces to forming the linear combination above. At each maximization step we solve $MJ(d_{\theta}+1)$ linear systems of size $|\mX|$; since the $d_\theta + 1$ systems for each firm and type share the operator $(I - \beta F^P)$, a single factorization per firm-type is reused across all right-hand sides.
\end{example}

\begin{example}[Efficient Pseudo-Likelihood] \label{ex: epl}
\citet{dearing2025efficient} propose the EPL estimator for dynamic discrete games. The key insight is to use a quasi-Newton step to obtain the zero-Jacobian property that guarantees efficiency and convergence of the sequential estimator. The EPL estimator can work on the CCP space or the conditional value function space. As shown in \citet[Section 3.3]{dearing2025efficient}, working on the conditional value function space can exploit the linearity of the utility function to achieve tremendous computational simplicity. We also focus on the conditional value function space in this example. For firm $j$, define the mapping $\Phi_{j}$ as
\begin{equation} \label{eq:phi_j}
    \Phi_{j}(x, a; \theta, v) \coloneqq U_{j}(x, a; \theta, P_{-j}(v)) + \beta \sum_{x' \in \mX} f_{x}(x' \mid x, a; P_{-j}(v))\, S(v_{j}(x'))
\end{equation}
where $P_{-j}(v)$ denotes the CCP profile obtained from $v$ via the logit formula \eqref{eq: ccp}, and $S(v_{j}(x')) = \log\bigl(\sum_{a' \in \mA} \exp(v_{j}(x', a'))\bigr)+\kappa$ is the social surplus function. Stacking across all $(j, x, a)$, define $\Phi: \Theta \times \bR^{J|\mX||\mA|} \to \bR^{J|\mX||\mA|}$, so that equilibrium conditional value functions satisfy $v = \Phi(\theta, v)$. Let $\nabla_v \Phi(\theta, v)$ denote the Jacobian of $\Phi$ with respect to $v$.

Given the previous iterates $(\tilde{\theta}, \tilde{v})$ and current candidate $\theta$, EPL updates the conditional value function via a quasi-Newton step
\begin{equation} \label{eq:newton_EPL}
    v = \tilde{v} - \left[I - \nabla_v \Phi(\tilde{\theta}, \tilde{v})\right]^{-1} \left(\tilde{v} - \Phi(\theta, \tilde{v})\right).
\end{equation}
Defining $Y=v$ and $\tilde Y=\tilde v$, we can write \eqref{eq:newton_EPL} as $[I - \nabla_v \Phi(\tilde{\theta}, \tilde{Y})] Y =  \Phi(\theta, \tilde{Y}) - \nabla_v \Phi(\tilde{\theta}, \tilde{Y}) \tilde{Y}$.
With $G$ defined by
\begin{equation} \label{eq:fp_EPL}
G(\theta, Y, \tilde{\theta}, \tilde{Y}) = [\Phi(\theta, \tilde{Y})- \nabla_v \Phi(\tilde{\theta}, \tilde{Y}) \tilde{Y}] + \nabla_v \Phi(\tilde{\theta}, \tilde{Y}) Y ,
\end{equation}
\eqref{eq:newton_EPL} is equivalent to the fixed-point equation $Y = G(\theta, Y, \tilde{\theta}, \tilde{Y})$.

If the utility function is linear in parameters, $U_{j}(x,a;\theta, P_{-j}(v)) = \sum_{\ell=1}^{d_\theta} \phi_{j,\ell}(x,a; P_{-j}(v))\,\theta_\ell$, the mapping $\Phi(\theta,v)$ becomes linear in $\theta$: $\Phi(\theta,v) = \sum_{\ell=1}^{d_\theta} \Phi_{U,\ell}\theta_\ell +  \Phi_{S}(v)$, where $\Phi_{U,\ell}$ stacks $\phi_{j,\ell}(x,a; P_{-j}(v))$ across all $(j, x, a)$, and $\Phi_{S}(v)$ stacks the second term on the right-hand side of \eqref{eq:phi_j} across all $(j, x, a)$. Then, similarly to \Cref{ex: policy valuation}, we can define the fixed-point equation \eqref{eq:fp_EPL} in terms of a system of equations that does not depend on $\theta$. At each maximization step, the EPL needs to solve $M (d_{\theta}+1)$ linear systems of equations of size $J|\mX||\mA|$, which involves larger linear systems than those in \Cref{ex: policy valuation}.

\end{example}

\begin{example}[Bellman Equation and Value Function] \label{ex: bellman equation}
For the single-agent ($J=1$) DDC model, the Bellman equation gives
\begin{equation} \label{eq: bellman}
    V(x) = \log \biggl(\sum_{a \in \mA} \exp\Bigl(U(x,a; \theta) + \beta \sum_{x' \in \mX} f_{x}(x'|x, a)\, V(x') \Bigr) \biggr) + \kappa,
\end{equation}
which can be written as a fixed-point equation $Y = G(\theta, Y)$, where $Y$ stacks $V(x)$ across all $x \in \mX$. For dynamic games, given the CCP profile $P_{-j}$ of other firms, the Bellman equation can also be used to solve for firm $j$'s value function, giving $Y = G(\theta, Y, P)$.
\end{example}

\begin{example}[Euler Equation and Conditional Value Function Difference] \label{ex: euler}
    For single-agent DDC, \citet{aguirregabiria2023solving} propose the Euler fixed-point mapping based on the finite-dependence property, whose Lipschitz constant is strictly less than the discount factor $\beta$.  Let the state variable be $x_{t} = (a_{t-1},z_{t})$ where $a_{t-1}$ is the action taken in the previous period and $z_{t} \in \mZ$ is an exogenous state variable.  As the only endogenous state variable is the lagged action, the model has the two-period finite-dependence property.  That is, conditional on any pair of choice paths $(a_{t}, a_{t+1})$ and $(a'_{t}, a'_{t+1})$, the distribution of $x_{t+2}$ is the same as long as $a_{t+1} = a'_{t+1}$.  \citet{aguirregabiria2023solving} show that the conditional value function difference, $\tilde{v}(x,a) \coloneqq v(x,a) - v(x,0)$, is the solution to the Euler equation
    \begin{equation} \label{eq: euler}
        \tilde{v}(x,a) = c(x,a;\theta) + \beta \sum_{z' \in \mZ} \bigl(S(\tilde{v}(a,z')) - S(\tilde{v}(0,z')) \bigr)\, f_{z}(z'|z),
    \end{equation}
    where $\tilde{v}(x)$ is the vector of $\tilde{v}(x,a)$ for all $a \in \mA$, $S(\tilde{v}(x)) = \log \bigl(\sum_{a \in \mA} \exp(\tilde{v}(x,a))\bigr)+\kappa$ is the social surplus function, and
    \begin{equation*}
        c(x,a;\theta) \coloneqq U(x,a;\theta) - U(x,0;\theta) + \beta \sum_{z' \in \mZ} \bigl(U(a,z',0;\theta) - U(0,z',0;\theta) \bigr)\, f_{z}(z'|z).
    \end{equation*}
Let $Y$ stack $\tilde{v}(x,a)$ across all $(x,a)$; then the Euler equation \eqref{eq: euler} can be written as $Y = G(\theta, Y)$.
\end{example}

\subsection{The Fixed-Point Solver $\Gamma^q$} \label{sec: inner algorithms}

\noindent Many estimators for DDC models estimate $\theta$ by maximizing an objective function in which $Y$ is determined by the fixed-point equation \eqref{eq: fixed point general}. Solving \eqref{eq: fixed point general} to full convergence is expensive because it must be repeated at every evaluation of the objective function as $\theta$ varies. The EM-NPL($q$) estimator substitutes a $q$-step iterative algorithm 
\begin{equation}\label{eq:gamma_q_defn}
\Gamma^{q}(\theta, Y^{(0)}, P, \tilde{\theta}) 
\end{equation}
into the objective function in place of the exact solution, where $Y^{(0)}$ is an initial guess for $Y$. Each solver sets the center equal to the initial guess, $\tilde Y = Y^{(0)}$; hence $\Gamma^q$ takes no separate $\tilde Y$ argument. A given algorithm $\Gamma^{q}$ need not depend on all arguments. We present three such algorithms below; for a comprehensive reference, see \citet{judd1998numerical}.
\begin{example}[Successive Approximation]
Successive approximation (SA) is universally applicable to all fixed-point equations in \Cref{sec: em npl m step} and requires only matrix--vector multiplications per iteration. It iterates the mapping $G$ in its second argument, starting from $Y^{(0)}$ and holding the center $\tilde Y = Y^{(0)}$ fixed throughout
\begin{equation*}
    \Gamma^{q}_{\mathrm{SA}}(\theta, Y^{(0)}, P, \tilde{\theta}) = \underbrace{G(\theta, G(\theta, \cdots G(\theta, Y^{(0)}, P, \tilde{\theta}, Y^{(0)}) \cdots , P, \tilde{\theta}, Y^{(0)}), P, \tilde{\theta}, Y^{(0)})}_{q \text{ times}}.
\end{equation*}
For the policy valuation, Bellman, and Euler mappings (Examples~\ref{ex: policy valuation}, \ref{ex: bellman equation}, and~\ref{ex: euler}), $G$ does not depend on $\tilde Y$, so the fifth argument is inert and $\Gamma^q_{\mathrm{SA}}$ reduces to ordinary SA. The convergence speed of SA depends on the contraction modulus of $G$. When $\beta$ is close to 1, SA may require a large $q$ to converge.
\end{example}

\begin{example}[Generalized Minimal Residual Method (GMRES)] \label{ex: linear equation solver}
In Examples \ref{ex: policy valuation} and \ref{ex: epl}, the fixed-point equation reduces to a linear system $AY = b$, where $A = I - \beta F^P$ and $b$ is $U_j^P(\theta)$ for policy valuation, or $A = I - \nabla_v \Phi(\tilde{\theta}, \tilde{Y})$ and $b = \Phi(\theta, \tilde{Y}) - \nabla_v \Phi(\tilde{\theta}, \tilde{Y}) \tilde{Y}$ for the EPL mapping. In the latter case, $A$ and $b$ depend on $\tilde{Y}$, and the solver sets $\tilde{Y} = Y^{(0)}$.

The Generalized Minimal Residual (GMRES) method is a Krylov subspace method for large linear systems. Its convergence is fast when the eigenvalues of $A$ are clustered, as they are for $A = I - \beta F^P$ unless $\beta$ is very close to one, and it is typically much faster than SA. The cost per iteration is comparable to SA, as it requires only one matrix--vector product with $A$ and a few vector operations.
Given $Y^{(0)}$ and residual $r_0 = b - AY^{(0)}$, GMRES builds an orthonormal basis for the Krylov subspace $\mK_q(A, r_0) = \mathrm{span}\{r_0, Ar_0, \ldots, A^{q-1}r_0\}$ via the Arnoldi process and returns the iterate $Y^{(q)}$ that minimizes $\|AY^{(q)} - b\|$ over the affine subspace $Y^{(0)} + \mK_q$ as
\begin{equation*}
    \Gamma^q_{\mathrm{GMRES}}(\theta, Y^{(0)}, P, \tilde{\theta}) = \argmin_{Y \,\in\, Y^{(0)} + \mK_q(A,\, r_0)} \|AY - b\|, \qquad r_0 = b - AY^{(0)}.
\end{equation*}
Since $\mK_q(A,r_0) \subseteq \mK_{q+1}(A,r_0)$, the residual $\|AY^{(q)} - b\|$ is nonincreasing in $q$, and $\Gamma^{q}_{\mathrm{GMRES}}$ with $q \geq \dim(Y)$ recovers the exact solution $A^{-1}b$ in exact arithmetic. See \citet{saad2003iterative} for a comprehensive review of other iterative solvers. Standard GMRES is not differentiable when its initial residual is zero because the first Arnoldi normalization is then undefined. The implementation in \Cref{sec: gmres algorithm} therefore uses a locally smoothed GMRES map that replaces the Arnoldi recursion by a differentiable convergent iteration only near a zero residual. This modification leaves every exact solution unchanged and makes the solver map three times continuously differentiable on a neighborhood of the fixed point, so that $\Gamma^q_{\mathrm{GMRES}}$ satisfies \Cref{assump: LS smoothness} and \Cref{assump: Gamma smoothness} below.
\end{example}

\begin{example}[Newton--Kantorovich Method]
Newton's method achieves local quadratic convergence for nonlinear problems such as the Bellman and Euler equations (Examples~\ref{ex: bellman equation} and~\ref{ex: euler}). Applying $q$ Newton steps from $Y^{(0)}$ defines $\Gamma^q_{\mathrm{Newton}}(\theta, Y^{(0)}, P, \tilde{\theta})= Y^{(q)}$, where
\begin{equation*}
    Y^{(s+1)} = Y^{(s)} - \left[I - \nabla_Y G(\theta, Y^{(s)}, P, \tilde{\theta})\right]^{-1}(Y^{(s)} - G(\theta, Y^{(s)}, P, \tilde{\theta})),
\end{equation*}
for $s = 0, \ldots, q-1$. Due to local quadratic convergence, a much smaller $q$ suffices compared to SA. However, each Newton step requires inverting $I - \nabla_Y G$, which is expensive for large state spaces; GMRES can be used to solve the Newton linear system instead.
\end{example}

\subsection{The EM-NPL($q$) Estimator} \label{sec: em npl estimator}

\noindent Following \citet{aguirregabiria2007sequential} and \citet{arcidiacono2011conditional}, we define the EM-NPL($q$) estimator as a fixed point of a system of equations that maximizes the pseudo-likelihood.

For each firm $j$, the CCP mapping $\Lambda_j: \Theta \times \mY \times \mP \to \Delta(\mA)^{|\mX|}$ maps structural parameters, the nuisance parameter, and the CCP profile to firm $j$'s CCP. We write $\Lambda = (\Lambda_1, \ldots, \Lambda_J)$ for the stacked mapping across all firms.  Let $\Gamma^{q}(\theta, Y, P, \tilde{\theta})$ denote the inner algorithm \eqref{eq:gamma_q_defn}, which approximates the solution of \eqref{eq: fixed point general}, with the initial guess set to the estimate from the previous iteration, $Y^{(0)} = Y$. Let $\boldsymbol{\theta} = (\theta^1, \ldots, \theta^M)$, $\bfY = (Y^1, \ldots, Y^M)$, $\bfP = (P^1, \ldots, P^M)$, and $\boldsymbol{\pi} = (\pi_1, \ldots, \pi_M)$. We write $\boldsymbol{\Gamma}^q(\boldsymbol{\theta}, \bfY, \bfP, \tilde{\boldsymbol{\theta}})$ and $\boldsymbol{\Lambda}(\boldsymbol{\theta}, \bfY, \bfP)$ for the type-by-type stacked mappings, i.e., $\boldsymbol{\Gamma}^q(\boldsymbol{\theta}, \bfY, \bfP, \tilde{\boldsymbol{\theta}}) \coloneqq\bigl(\Gamma^q(\theta^1, Y^1, P^1, \tilde{\theta}^1), \ldots, \Gamma^q(\theta^M, Y^M, P^M, \tilde{\theta}^M)\bigr)$ and analogously for $\boldsymbol{\Lambda}$. Because the panel's first observed state $x_{i1}$ is generally correlated with market $i$'s latent type (the initial-conditions problem of \citealp{heckman1981heterogeneity}), the pseudo-likelihood below accounts for the type-specific distribution of $x_{i1}$. For a CCP profile $P^m$, define the type-$m$ transition kernel over $x$ induced by $P^m$ as
\begin{equation*}
    f_x^{P^m}(x'\mid x) \coloneqq\sum_{a \in \mA^J} f_x(x'\mid x, a;\theta_f) \prod_{k=1}^J P_k^m(a_k \mid x), \qquad x, x' \in \mX.
\end{equation*}
We impose the following assumption to resolve this problem, following \citet[Section 3.5]{aguirregabiria2007sequential}. \Cref{assump: stationary initial existence} holds automatically when $f_x^{P^m}(x'\mid x)>0$ for all $x,x'\in\mX$. \Cref{assump: stationary initial draw} reflects a panel observed after equilibrium play has settled into its steady state.
\begin{assumption}\label{assump: stationary initial}
\begin{assumptionitems}
    \item \label{assump: stationary initial existence} For every type $m$ and every CCP profile $P^m$ in a neighborhood of $P_0^m$, the Markov kernel $f_x^{P^m}$ has a unique stationary distribution $p_*^m(\cdot\,;P^m)$ on $\mX$.
    \item \label{assump: stationary initial draw} Conditional on market $i$'s type $m_i = m$, the initial state $x_{i1}$ is drawn from $p_*^m(\cdot\,;P_0^m)$, independently across markets.
\end{assumptionitems}
\end{assumption}

Define the pseudo-likelihood of market $i$ for type $m$ with $(Y^m, P^m, \tilde{\theta}^m)$ from the previous iteration and with $\Gamma^q$ in place of the exact solution to the fixed-point equation as
\begin{equation}\label{eq: Ell_i}
\mL_i^{q}(\theta^m;Y^m, P^m, \tilde{\theta}^m) \coloneqq p_*^m(x_{i1};P^m) \prod_{t=1}^{T} \prod_{j=1}^{J} \Lambda_j\bigl(\theta^m,\, \Gamma^q(\theta^m, Y^m, P^m, \tilde{\theta}^m),\, P^m\bigr)(a_{jit}|x_{it}).
\end{equation}
Define the pseudo-likelihood function with $(\bfY, \bfP, \tilde{\boldsymbol{\theta}})$ from the previous iteration as
\begin{equation*}
    \cQ_N^q(\boldsymbol{\theta}, \boldsymbol{\pi};\, \bfY, \bfP,\tilde{\boldsymbol{\theta}}) \coloneqq\frac{1}{N} \sum_{i=1}^N \log \left( \sum_{m=1}^M \pi_m \, \mL_i^{q}(\theta^m;Y^m, P^m, \tilde{\theta}^m) \right).
\end{equation*}

\begin{definition}[EM-NPL($q$) Estimator] \label{def: em npl estimator}
    The EM-NPL($q$) fixed points are defined as
    \begin{equation} \label{eq: em npl fp set}
        \mathcal{E}_N^q \coloneqq\left\{(\boldsymbol{\theta}, \boldsymbol{\pi}, \bfY, \bfP) \;\middle|\;
        \begin{aligned}
            & (\boldsymbol{\theta}, \boldsymbol{\pi}) = \argmax_{\boldsymbol{\vartheta},\boldsymbol{\varpi}}\; \cQ_N^q(\boldsymbol{\vartheta}, \boldsymbol{\varpi};\, \bfY, \bfP, \boldsymbol{\theta}), \\
            & \bfY = \boldsymbol{\Gamma}^q(\boldsymbol{\theta}, \bfY, \bfP, \boldsymbol{\theta}),\quad \bfP = \boldsymbol{\Lambda}(\boldsymbol{\theta}, \bfY, \bfP)
        \end{aligned}
        \right\}.
    \end{equation}
    The EM-NPL($q$) estimator is the element of $\mathcal{E}_N^q$ that maximizes the pseudo-likelihood.
\end{definition}
The fixed-point conditions in \eqref{eq: em npl fp set} require: (i)~(\textit{argmax condition}) $(\boldsymbol{\theta}, \boldsymbol{\pi})$ maximize the pseudo-likelihood given $(\boldsymbol{\theta}, \bfY, \bfP)$ where $\boldsymbol{\theta}$ on the right-hand side is the previous iterate; (ii)~(\textit{$Y$-condition}) $\bfY$ is a fixed point of the inner algorithm $\Gamma^q$; and (iii)~(\textit{$P$-condition}) $\bfP$ is a fixed point of the mapping $\Lambda$. 

We next define the estimator based on the exact solution of \eqref{eq: fixed point general}, for which we impose the following assumption.
\begin{assumption}[Unique Fixed Point] \label{assumption: unique fixed point}
For any $(\theta, P, \tilde{\theta}, \tilde{Y})$, the fixed-point equation \eqref{eq: fixed point general} has a unique fixed point in $Y$ denoted by $Y_{\mathrm{fp}}(\theta, P, \tilde{\theta}, \tilde{Y})$.
\end{assumption}
\Cref{assumption: unique fixed point} ensures that the nuisance parameter is pinned down once the structural parameters, CCPs, and the previous-iteration values $(\tilde{\theta}, \tilde{Y})$ are known. The assumption is mild: for Examples \ref{ex: policy valuation}, \ref{ex: bellman equation}, and \ref{ex: euler}, the operator $G$ is a contraction, so the fixed point is unique, and for the EPL mapping, the fixed point is unique whenever $I - \nabla_v \Phi(\tilde{\theta}, \tilde{Y})$ is nonsingular. We maintain \Cref{assumption: unique fixed point} throughout the remainder of the paper and do not repeat it in formal statements.

\begin{definition}[EM-NPL Estimator] \label{def: em npl estimator inf}
    The EM-NPL estimator is defined analogously to \Cref{def: em npl estimator}, with the exact solution $Y_{\mathrm{fp}}$ of \eqref{eq: fixed point general} in place of $\Gamma^q$. The EM-NPL fixed-point set $\mathcal{E}_N$ is defined as $\mathcal{E}_N^q$ in \eqref{eq: em npl fp set} with $Y_{\mathrm{fp}}(\theta^m, P^m, \tilde{\theta}^m, Y^m)$ substituted for $\Gamma^q(\theta^m, Y^m, P^m, \tilde{\theta}^m)$ throughout. The EM-NPL estimator is the element of $\mathcal{E}_N$ that maximizes the pseudo-likelihood.
\end{definition}

To compute the EM-NPL($q$) estimator, we define the EM-NPL($q$) algorithm following \citet{aguirregabiria2007sequential} and \citet{arcidiacono2011conditional}. We first impose two primitive regularity conditions used repeatedly below.
\begin{assumption} \label{assump: LS primitives}
\begin{assumptionitems}
    \item \label{assump: LS positive ccp} $\Lambda_j(\theta, \Gamma^q(\theta, Y, P, \tilde{\theta}), P)(a|x) > 0$ for all $(\theta, Y, P, \tilde{\theta}) \in \Theta \times \mY \times \mP\times \Theta $ and all $(x, a, j)$.
    \item \label{assump: LS smoothness} The mappings $G$, $\Lambda$, and $\Gamma^q$ are twice continuously differentiable in all arguments.
\end{assumptionitems}
\end{assumption}

In the EM-NPL($q$) algorithm defined below, we use $\theta^{m,(k-1)}$ as the input $\tilde{\theta}$ in $\mL_i^{q}$ in the $k$-th iteration. This is because, in $G(\theta,Y,P,\tilde{\theta},\tilde{Y})$, $\tilde{\theta}$ and $\tilde{Y}$ must be evaluated at the same iterate, and $\Gamma^{q}(\theta,Y,P,\tilde{\theta})$ initializes its solver at $Y^{(0)}=Y$, so $Y$ itself serves as the center $\tilde{Y}$. 

\begin{definition}[EM-NPL($q$) Algorithm] \label{alg: em npl}
    Given inner algorithm $\Gamma^q$, convergence tolerance $\varepsilon > 0$, and initial values $( \boldsymbol{\pi}^{(0)}, \boldsymbol{\theta}^{(0)}, \bfY^{(0)}, \bfP^{(0)})$,\footnote{See \Cref{sec: initialization} on how to choose the initial values $( \boldsymbol{\pi}^{(0)}, \boldsymbol{\theta}^{(0)},\bfY^{(0)}, \bfP^{(0)})$.} the EM-NPL($q$) algorithm iterates the following steps for $k = 1, 2, \ldots$. 

    \medskip
    \textbf{E-Step.}  Compute the posterior type probability for each market $i$ and type $m$ as
    \begin{equation*}
        w_{im}^{(k)} = \frac{\pi_m^{(k-1)}\,\mL_i^{q}\bigl(\theta^{m,(k-1)};Y^{m,(k-1)},P^{m,(k-1)},\theta^{m,(k-1)}\bigr)}{\sum_{m'=1}^M \pi_{m'}^{(k-1)}\,\mL_i^{q}\bigl(\theta^{m',(k-1)};Y^{m',(k-1)},P^{m',(k-1)},\theta^{m',(k-1)}\bigr)}.
    \end{equation*}

    \textbf{M-Step.}  For each type $m = 1,\ldots,M$, update the structural parameters as
    \begin{equation} \label{eq: M-step_obj}
        \theta^{m,(k)} = \argmax_{\theta^m \in \Theta} \sum_{i=1}^N w_{im}^{(k)} 
        \log \bigl( \mL_i^{q}(\theta^m;Y^{m,(k-1)}, P^{m,(k-1)}, \theta^{m,(k-1)})\bigr),
    \end{equation}
    update the mixing weights by
    \begin{equation*}
        \pi_m^{(k)} = \frac{1}{N} \sum_{i=1}^N w_{im}^{(k)},
    \end{equation*}
    and update the nuisance parameter and CCPs by 
    \begin{align*}
        Y^{m,(k)} = \Gamma^q\bigl(\theta^{m,(k)}, Y^{m,(k-1)}, P^{m,(k-1)}, \theta^{m,(k-1)}\bigr), \quad P^{m,(k)} = \Lambda\bigl(\theta^{m,(k)}, Y^{m,(k)}, P^{m,(k-1)}\bigr).
    \end{align*}

    \textbf{Stop} if $\max \bigl\{ \|\boldsymbol{\theta}^{(k)} - \boldsymbol{\theta}^{(k-1)}\|_{\infty},\, \|\bfP^{(k)} - \bfP^{(k-1)}\|_{\infty},\, \|\boldsymbol{\pi}^{(k)} - \boldsymbol{\pi}^{(k-1)}\|_{\infty} \bigr\} \leq \varepsilon$.
\end{definition}
We call $(\boldsymbol{\theta}^{(k)}, \boldsymbol{\pi}^{(k)}, \bfY^{(k)}, \bfP^{(k)})$, $k = 1, 2, \ldots$, the \textit{EM-NPL($q$) sequence}. The following lemma shows that if the EM-NPL($q$) sequence converges to a limit, then the limit is a stationary point of $\cQ_N^q$ that satisfies the $Y$- and $P$-conditions in \eqref{eq: em npl fp set}. This is a generalization of the standard convergence result of \citet{dempster1977maximum} and \citet{wu1983convergence}. 

\begin{lemma}\label{lemma: algorithm limit is fixed point}
    Suppose that \Cref{assump: LS primitives} holds and the EM-NPL($q$) sequence converges to $(\hat{\boldsymbol{\theta}}, \hat{\boldsymbol{\pi}}, \hat{\bfY}, \hat{\bfP})$ as $k \to \infty$, with $(\hat{\boldsymbol{\theta}}, \hat{\boldsymbol{\pi}})$ in the interior of $\Theta^M \times \Delta^{M-1}$. Then  $(\hat{\boldsymbol{\theta}}, \hat{\boldsymbol{\pi}}, \hat{\bfY}, \hat{\bfP})$ satisfies $\hat{\bfY} = \boldsymbol{\Gamma}^q(\hat{\boldsymbol{\theta}}, \hat{\bfY}, \hat{\bfP}, \hat{\boldsymbol{\theta}})$, $\hat{\bfP} = \boldsymbol{\Lambda}(\hat{\boldsymbol{\theta}}, \hat{\bfY}, \hat{\bfP})$, and $\nabla_{(\vartheta,\varpi)} \cQ_N^q(\hat{\boldsymbol{\theta}},\hat{\boldsymbol{\pi}};\hat{\bfY},\hat{\bfP},\hat{\boldsymbol{\theta}})= 0$, where the gradient with respect to $\boldsymbol{\varpi}$ is taken with respect to the free coordinates $(\varpi_1,\ldots,\varpi_{M-1})$ after substituting $\varpi_M = 1 - \sum_{m=1}^{M-1}\varpi_m$.
\end{lemma}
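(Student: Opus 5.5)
The proof passes to the limit $k\to\infty$ in each step of the EM-NPL($q$) algorithm (\Cref{alg: em npl}), using the continuity supplied by \Cref{assump: LS primitives}. Write $\boldsymbol{\xi}^{(k)} = (\boldsymbol{\theta}^{(k)}, \boldsymbol{\pi}^{(k)}, \bfY^{(k)}, \bfP^{(k)})$, so that $\boldsymbol{\xi}^{(k)}\to\hat{\boldsymbol{\xi}}$ by hypothesis.

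\textbf{Step 1: the $Y$- and $P$-conditions.} The update $Y^{m,(k)} = \Gamma^q(\theta^{m,(k)}, Y^{m,(k-1)}, P^{m,(k-1)}, \theta^{m,(k-1)})$ has both sides converging. Since $\Gamma^q$ is continuous by \Cref{assump: LS smoothness}, the limit gives $\hat Y^m = \Gamma^q(\hat\theta^m, \hat Y^m, \hat P^m, \hat\theta^m)$. The same argument applied to $P^{m,(k)} = \Lambda(\theta^{m,(k)}, Y^{m,(k)}, P^{m,(k-1)})$ gives $\hat P^m = \Lambda(\hat\theta^m, \hat Y^m, \hat P^m)$. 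This step is routine.

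\textbf{Step 2: first-order conditions of the M-step and their limits.} Define, for each $i$ and $m$,
\[
\ell_{im}^{(k)}(\vartheta) \coloneqq \log \mL_i^q(\vartheta; Y^{m,(k-1)}, P^{m,(k-1)}, \theta^{m,(k-1)}).
\]
Each $\ell_{im}^{(k)}$ is $C^2$ in $\vartheta$ and jointly continuous in $(\vartheta, \boldsymbol{\xi}^{(k-1)})$. This uses \Cref{assump: LS primitives}: positivity of the CCPs makes the logarithm well defined, and smoothness of $\Lambda$ and $\Gamma^q$ carries over. It also requires $p_*^m(x;P)$ to be smooth and positive in $P$ near $\hat P^m$. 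I would justify this via \Cref{assump: stationary initial existence}, since the stationary distribution of an irreducible kernel is a smooth function of the kernel. Positivity of $p_*^m(x_{i1};\cdot)$ at observed $x_{i1}$ I would treat as implicit, which is a point to flag.

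Because $\hat\theta^m$ is interior and $\theta^{m,(k)}\to\hat\theta^m$, the iterate $\theta^{m,(k)}$ is interior for large $k$. The M-step first-order condition then reads $\sum_i w_{im}^{(k)}\nabla_\vartheta \ell_{im}^{(k)}(\theta^{m,(k)}) = 0$.

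The E-step weights $w_{im}^{(k)}$ are continuous functions of $\boldsymbol{\xi}^{(k-1)}$, with denominators bounded away from zero by positivity. Hence they converge to
\[
\hat w_{im} = \frac{\hat\pi_m \mL_i^q(\hat\theta^m;\hat Y^m,\hat P^m,\hat\theta^m)}{\sum_{m'}\hat\pi_{m'}\mL_i^q(\hat\theta^{m'};\hat Y^{m'},\hat P^{m'},\hat\theta^{m'})}.
\]
Taking limits in the first-order condition, with $\nabla_\vartheta\ell$ jointly continuous, yields $\sum_i \hat w_{im}\nabla_\vartheta \log\mL_i^q(\hat\theta^m;\hat Y^m,\hat P^m,\hat\theta^m)=0$. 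The weight update gives $\hat\pi_m = N^{-1}\sum_i \hat w_{im}$.

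\textbf{Step 3: the EM identity (Fisher's identity).} We differentiate $\cQ_N^q(\boldsymbol{\vartheta},\boldsymbol{\varpi};\hat\bfY,\hat\bfP,\hat{\boldsymbol{\theta}})$ and evaluate at $(\hat{\boldsymbol{\theta}},\hat{\boldsymbol{\pi}})$. The key observation is that the previous-iterate arguments $(\hat\bfY,\hat\bfP,\hat{\boldsymbol{\theta}})$ are held fixed while $\boldsymbol{\vartheta}$ varies. Therefore
\[
\nabla_{\vartheta^m}\cQ_N^q = N^{-1}\sum_i \frac{\hat\pi_m \nabla\mL_i^q}{\sum_{m'}\hat\pi_{m'}\mL_i^q} = N^{-1}\sum_i \hat w_{im}\nabla\log\mL_i^q = 0
\]
by Step 2. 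For the free weights we have $\partial_{\varpi_m}\cQ_N^q = N^{-1}\sum_i(\hat w_{im}/\hat\pi_m - \hat w_{iM}/\hat\pi_M)$. Substituting the weight update, this equals $1-1=0$; here interiority of $\hat{\boldsymbol{\pi}}$ ensures $\hat\pi_m>0$.

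\textbf{Main obstacle.} The delicate point is the identification in Step 2. The M-step objective at iteration $k$ is built from the \emph{lagged} iterate $\boldsymbol{\xi}^{(k-1)}$, while the gradient in the conclusion is taken at the limit with the lagged arguments set equal to $\hat{\boldsymbol{\xi}}$. Convergence of the whole sequence is what lets both the current and the lagged arguments converge to the same limit, so that the center $\tilde{\boldsymbol{\theta}}$ coincides with $\hat{\boldsymbol{\theta}}$. The remaining care is to verify joint continuity of $\nabla_\vartheta\log\mL_i^q$ in all arguments, including through $p_*^m$, which is where \Cref{assump: LS smoothness} and \Cref{assump: stationary initial} are used.
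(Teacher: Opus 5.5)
Your proof is correct, but it reaches the stationarity conclusion by a different route from the paper.

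\emph{The paper's route.} It never differentiates along the sequence. It passes the global M-step inequality to the limit, which needs only continuity of $\mL_i^q$ and the fact that each M-step is an exact argmax over $\Theta$. From this it concludes that $(\hat{\boldsymbol{\theta}},\hat{\boldsymbol{\pi}})$ maximizes the limiting criterion $Q_N$ built with the fixed weights $\hat w_{im}$. It then uses Jensen's inequality to show the same point maximizes $H_N(\vartheta,\varpi)=N^{-1}\sum_i\sum_m \hat w_{im}\log w_{im}(\vartheta,\varpi)$. Finally, it reads off $\nabla \cQ_N^q=0$ from the identity $\cQ_N^q=Q_N-H_N$ and the two interior first-order conditions.

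\emph{Your route.} You use the M-step first-order condition at each iterate. This is legitimate once $\theta^{m,(k)}$ is interior, which convergence to an interior limit guarantees. You pass that condition to the limit via joint continuity of $\nabla_\vartheta\log\mL_i^q$. You then verify stationarity by differentiating the mixture log-likelihood explicitly. Your Fisher-identity computation is exactly the statement $\nabla H_N=0$ at self-consistent weights, obtained by direct calculation rather than by Jensen. Your $\varpi$-calculation likewise replaces the paper's strict-concavity argument.

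\emph{What each buys.} Your version is shorter and somewhat more robust: it only needs each M-step output to be a stationary point of the weighted criterion, so it survives if the M-step returns a local maximizer. The paper's argument would not survive that, since the argmax inequality must hold over all of $\Theta$ to pass to the limit. In exchange, the paper obtains the extra information that the limit globally maximizes the limiting M-step objective, which is the classical Dempster--Laird--Rubin/Wu structure.

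\emph{On the point you flag.} $p_*^m(x_{i1};P^m)$ does not depend on $\vartheta$, so it drops out of $\nabla_\vartheta\log\mL_i^q$ entirely. It matters only for convergence of the E-step weights, where continuity and positivity in $P$ suffice. The paper leaves this equally implicit.
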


\Cref{sec: convergence} establishes local convergence of the algorithm to the EM-NPL($q$) estimator given a consistent initial estimator.
\begin{remark}[Reduction to single-type case] \label{remark: single type reduction}
    When $M = 1$, the E-step is vacuous ($w_{i1}^{(k)} = 1$ for all $i$) and the EM-NPL($q$) estimator reduces to the NPL($q$) estimator of \citet{aguirregabiria2007sequential} with inner algorithm $\Gamma^q$.
\end{remark}

\begin{remark}[Alternative CCP updates] \label{remark: spectral}
The CCP update step can be replaced by an alternative mapping that improves the convergence properties of the outer iterations.  For example, the relaxation method \citep{kasahara2012sequential} or spectral methods \citep{aguirregabiria2021imposing} can be applied. 
\end{remark}

\section{Theoretical Properties} \label{sec: theory}

\noindent This section develops the theoretical properties of the EM-NPL($q$) estimator. We first establish the truncation invariance property, present consistency and asymptotic normality, and finally characterize local convergence of the EM-NPL($q$) algorithm.

\subsection{Truncation Invariance} \label{sec: equivalence}

\noindent Recall that $q$ is the number of inner iterations per outer step. We show that for the workhorse class of DDC models with linear-in-parameters utility, the EM-NPL($q$) estimator for any $q \geq 1$ is numerically identical to the EM-NPL estimator. Therefore, truncating the inner solver at any finite $q$ does not distort the estimator. The key observation is that for linear-in-parameters utility, the policy valuation (\Cref{ex: policy valuation}) and EPL (\Cref{ex: epl}) fixed-point equations do not depend on the current $\theta$. We formalize this as $\theta$-separability.
\begin{assumption}[$\theta$-separability] \label{assumption: theta separability}
The fixed-point equation \eqref{eq: fixed point general} takes the form $Y = G(Y, P, \tilde \theta, \tilde{Y})$ with $G$ independent of $\theta$.
\end{assumption}
Under $\theta$-separability, $Y$ can be solved once for given $(P, \tilde{\theta}, \tilde{Y})$ from the previous iteration, rather than re-solved for each candidate $\theta$ during the M-step. For \Cref{ex: policy valuation,ex: epl} under linear-in-parameters utility, $G$ satisfies Assumption \ref{assumption: theta separability} after the reformulation described in those examples. Since $\Gamma^q$ is applied to that reformulated system, it likewise does not depend on $\theta$. We further impose the following condition on the inner algorithm. 
\begin{assumption} \label{assumption: same fixed point}
For any $q$, $Y = \Gamma^q(\theta, Y, P, \tilde{\theta})$ holds if and only if $Y = Y_{\mathrm{fp}}(\theta, P, \tilde{\theta}, Y)$.
\end{assumption}
\Cref{assumption: same fixed point} is mild and can be verified for common algorithms. For example, when $G$ is a contraction, SA satisfies the condition. For linear systems, GMRES also typically satisfies the condition (see \citealp{saad2003iterative}). The following theorem shows that the EM-NPL($q$) estimator for any $q \geq 1$ is numerically identical to the EM-NPL estimator.
\begin{theorem}[Truncation Invariance] \label{theorem: truncation invariance}
    Suppose Assumptions \ref{assumption: theta separability} and \ref{assumption: same fixed point} hold. Then $\mathcal{E}_N^q = \mathcal{E}_N$ for all $q \geq 1$, and the EM-NPL($q$) estimator is numerically identical to the EM-NPL estimator.
\end{theorem}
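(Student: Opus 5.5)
The plan is to show the two fixed-point sets coincide by checking each of the three defining conditions in \eqref{eq: em npl fp set} at an arbitrary candidate point. Once $\mathcal{E}_N^q = \mathcal{E}_N$ is established, the estimators coincide. The key device is that under \Cref{assumption: theta separability}, neither $G$ nor the solver $\Gamma^q$ depends on the free argument $\theta$. Write $\Gamma^q(Y,P,\tilde\theta)$ and $Y_{\mathrm{fp}}(P,\tilde\theta,\tilde Y)$ accordingly; both are applied to the reformulated $\theta$-free system. The pseudo-likelihood then depends on the candidate $\vartheta^m$ only through $\Lambda_j(\vartheta^m, \cdot, P^m)$ and the stationary weight $p_*^m(\cdot;P^m)$.

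\textbf{Step 1 ($Y$-condition).} Fix any $(\boldsymbol\theta,\boldsymbol\pi,\bfY,\bfP)$. By \Cref{assumption: same fixed point}, applied type by type with $\tilde\theta=\theta^m$, we have $Y^m=\Gamma^q(\theta^m,Y^m,P^m,\theta^m)$ if and only if $Y^m = Y_{\mathrm{fp}}(\theta^m,P^m,\theta^m,Y^m)$. So the $Y$-conditions of the two sets are equivalent.

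\textbf{Step 2 (argmax and $P$-conditions).} Suppose the $Y$-condition holds. The crucial point is that in $\cQ_N^q(\boldsymbol\vartheta,\boldsymbol\varpi;\bfY,\bfP,\boldsymbol\theta)$ the inner object $\Gamma^q(\vartheta^m,Y^m,P^m,\theta^m)$ does not depend on $\vartheta^m$. It therefore equals $\Gamma^q(\theta^m,Y^m,P^m,\theta^m)=Y^m$. Likewise, in the EM-NPL objective $Y_{\mathrm{fp}}(\vartheta^m,P^m,\theta^m,Y^m)$ is constant in $\vartheta^m$ and equals $Y^m$. Hence the two objective functions are identical as functions of $(\boldsymbol\vartheta,\boldsymbol\varpi)$: both equal $\frac1N\sum_i\log\sum_m\varpi_m p_*^m(x_{i1};P^m)\prod_{t,j}\Lambda_j(\vartheta^m,Y^m,P^m)(a_{jit}|x_{it})$. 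Consequently they have the same argmax sets, and the argmax conditions coincide. The $P$-condition $\bfP=\boldsymbol\Lambda(\boldsymbol\theta,\bfY,\bfP)$ involves neither solver, so it is literally the same in both definitions. If the $Y$-condition fails, the point lies in neither set, so the equivalence of the full sets follows.

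\textbf{Step 3 (estimator).} Since the sets are equal, and on them the two pseudo-likelihood values coincide by Step 2, the maximizers over the sets are the same. The EM-NPL($q$) estimator is therefore numerically identical to the EM-NPL estimator.

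The main obstacle is Step 2, which requires a careful reading of what ``$G$ independent of $\theta$'' means for the solver. One must make explicit that $\Gamma^q$ and $Y_{\mathrm{fp}}$ are computed on the reformulated system, with $Y$ collecting the coefficient blocks $(W_{j,\ell},W_{j,P})$, and not on the original $\theta$-dependent $V$. One must also check that $\Lambda$ forms the value as an affine combination in $\vartheta$ from these blocks. I would state this explicitly as the interpretation of \Cref{assumption: theta separability} before Step 2, so that the substitution $\Gamma^q(\vartheta^m,\ldots)=Y^m$ is justified for every $\vartheta^m$, and not merely at $\vartheta^m=\theta^m$.
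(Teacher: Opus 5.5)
Your proposal is correct and follows essentially the same route as the paper. The $P$-condition is identical in both sets, the $Y$-conditions are equivalent by \Cref{assumption: same fixed point}, and at any point satisfying the $Y$-condition the two pseudo-likelihoods coincide as functions of $(\boldsymbol{\vartheta},\boldsymbol{\varpi})$, because $\Gamma^q(\vartheta^m,\ldots)=Y_{\mathrm{fp}}(\vartheta^m,\ldots)=Y^m$ for every $\vartheta^m$; hence the argmax conditions, the sets, and the estimators agree. Your explicit caveat, that $\Gamma^q$ and $Y_{\mathrm{fp}}$ must be read as acting on the reformulated $\theta$-free system, makes explicit a point the paper only asserts in passing.
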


\Cref{theorem: truncation invariance} has two important consequences for inference.  First, there is no efficiency loss from inner truncation. NPL standard errors are valid without any adjustment for the inner truncation. Second, the choices of $q$ and the inner algorithm $\Gamma^q$ become purely computational decisions: they affect the speed of convergence of the EM-NPL($q$) algorithm, but not the statistical properties of the estimator.

\subsection{Large-Sample Properties} \label{sec: large sample}

\noindent We establish consistency and asymptotic normality of the EM-NPL($q$) estimator by adapting \citet[Proposition 2]{aguirregabiria2007sequential} and \citet[Theorem 2]{arcidiacono2011conditional}. Define the population pseudo-likelihood as
\begin{equation*}
    \cQ^q(\boldsymbol{\theta}, \boldsymbol{\pi};\, \bfY, \bfP, \tilde{\boldsymbol{\theta}}) \coloneqq\bE\biggl[ \log \biggl( \sum_{m=1}^M \pi_m\, p_*^m(x_1;P^m) \prod_{t=1}^{T} \prod_{j=1}^{J} \Lambda_j\bigl(\theta^m,\, \Gamma^q(\theta^m, Y^m, P^m, \tilde{\theta}^m),\, P^m\bigr)(a_{jt}|x_{t}) \biggr)\biggr].
\end{equation*}
Define the population EM-NPL($q$) fixed-point set as
\begin{equation*}
    \mathcal{E}^q \coloneqq\left\{(\boldsymbol{\theta}, \boldsymbol{\pi}, \bfY, \bfP) \;\middle|\;
    \begin{aligned}
        & (\boldsymbol{\theta}, \boldsymbol{\pi}) = \argmax_{\boldsymbol{\vartheta},\boldsymbol{\varpi}}\; \cQ^q(\boldsymbol{\vartheta}, \boldsymbol{\varpi};\, \bfY, \bfP, \boldsymbol{\theta}), \\
        & \bfY = \boldsymbol{\Gamma}^q(\boldsymbol{\theta}, \bfY, \bfP, \boldsymbol{\theta}),\quad \bfP = \boldsymbol{\Lambda}(\boldsymbol{\theta}, \bfY, \bfP)
    \end{aligned}
    \right\}.
\end{equation*}

Let $(\boldsymbol{\theta}_0, \boldsymbol{\pi}_0, \bfY_0, \bfP_0) \in \mathcal{E}^q$ denote the true parameter value. We impose the following identification conditions.
\begin{assumption}\label{assump: nplq identification}
    \begin{assumptionitems}
        \item \label{assump: nplq identification1} Among all $(\boldsymbol{\theta}, \boldsymbol{\pi}, \bfY, \bfP) \in \mathcal{E}^q$, $(\boldsymbol{\theta}_0, \boldsymbol{\pi}_0, \bfY_0, \bfP_0)$ uniquely maximizes $\cQ^q(\boldsymbol{\theta}, \boldsymbol{\pi};\,  \bfY, \bfP, \boldsymbol{\theta})$ up to label permutation.
        \item \label{assump: nplq identification2} $(\boldsymbol{\theta}_0, \boldsymbol{\pi}_0, \bfY_0, \bfP_0)$ is an isolated population EM-NPL($q$) fixed point, i.e., it is unique or there exists an open ball around it that does not contain any other population EM-NPL($q$) fixed point.
    \end{assumptionitems}
\end{assumption}
\Cref{assump: nplq identification1} is the identification assumption, which can be established along the lines of \citet{kasahara2009nonparametric} and \citet{Higgins23joe}. We assume throughout that the number of types $M$ is known; in practice it can be selected by the Bayesian Information Criterion (BIC) or the sequential testing procedure of \citet{kasahara2014non}. \Cref{assump: nplq identification2} parallels the isolated-fixed-point condition in \citet[Proposition 2]{aguirregabiria2007sequential}.

\begin{remark}[Relationship to standard identification] \label{remark: id relationship}
    Under Assumptions \ref{assumption: theta separability} and \ref{assumption: same fixed point}, $G$ and hence $\Gamma^q$ do not depend on $\theta$, so the EM-NPL($q$) fixed-point system reduces to the standard NPL fixed-point system.  In that case, the population pseudo-likelihood simplifies to
    \begin{equation*}
      \bE\biggl[ \log \biggl( \sum_{m=1}^M \pi_m\, p_*^m(x_1;P^m) \prod_{t=1}^{T} \prod_{j=1}^{J} \Lambda_j\bigl(\theta^m,\, \Gamma^q(Y^m, P^m, \tilde{\theta}^m),\, P^m\bigr)(a_{jt}|x_{t}) \biggr)\biggr].
    \end{equation*}
    At the EM-NPL($q$) fixed point, $\Gamma^q(Y^m, P^m, \tilde{\theta}^m) = G(Y^m, P^m, \tilde{\theta}^m, Y^m)$, and the EM-NPL($q$) fixed points coincide with the NPL fixed points. Therefore, \Cref{assump: nplq identification} reduces to the standard identification condition for NPL estimators.
\end{remark}

Next, we impose regularity conditions to establish consistency and asymptotic normality.  These conditions are similar to those in \citet{aguirregabiria2007sequential}. Let $\boldsymbol{\alpha} \coloneqq (\boldsymbol{\theta}, \boldsymbol{\pi})$ with $\dim(\boldsymbol{\alpha} ) = d_\alpha = M \cdot d_\theta + M - 1$ collect all structural parameters and mixing probabilities, and let $\boldsymbol{\alpha}_0 \coloneqq (\boldsymbol{\theta}_0, \boldsymbol{\pi}_0)$. Let $\Gamma^q_Y$ and $\Gamma^q_P$ denote the Jacobians of $\Gamma^q(\theta, Y, P, \tilde{\theta})$ with respect to $Y$ and $P$ evaluated at $(\theta_0^m, Y_0^m, P_0^m, \theta_0^m)$, respectively, where we suppress the type superscript $m$, and define $\Lambda_Y$, $\Lambda_P$, and $G_Y$ similarly. Let $(Y_{\mathrm{fp}})_{\tilde{Y}}$ and $(Y_{\mathrm{fp}})_{P}$ denote the Jacobians of $Y_{\mathrm{fp}}(\theta, P, \tilde{\theta}, \tilde{Y})$ with respect to $\tilde{Y}$ and $P$ evaluated at $(\theta_0^m, P_0^m, \theta_0^m, Y_0^m)$. Because the solver sets the center equal to the initial guess, $\Gamma^q_Y$ is the total derivative with respect to $Y$, including the dependence through the center $\tilde{Y} = Y$.

\begin{assumption} \label{assump: LS regularity}
\begin{assumptionitems}
    \item \label{assump: LS iid} The observations $\{(x_{it}, (a_{jit})_{j=1}^{J})_{t=1}^{T}\}_{i=1}^N$ are i.i.d.\ across $i$ with common panel length $T$.  The process $\{(x_{it}, (a_{jit})_{j=1}^{J})\}_{t=1}^{T}$ is stationary Markov and $x_{it}$ has full support conditional on type $m_i = m$.  Types are drawn i.i.d.\ from $(\pi_{1,0},\ldots,\pi_{M,0})$.
    \item \label{assump: LS compactness} $\Theta$ is compact, $\theta_0^m \in \mathrm{int}(\Theta)$ for all $m$, and $\boldsymbol{\pi}_0 \in \mathrm{int}(\Delta^{M-1})$.
    \item \label{assump: LS info matrix} The Hessian $\nabla_{\boldsymbol{\alpha} \boldsymbol{\alpha}}^2 \cQ^q(\boldsymbol{\alpha}_0;\, \bfY_0, \bfP_0, \boldsymbol{\theta}_0)$ is negative definite, and the adjusted score Jacobian $\Omega^q_{\boldsymbol{\alpha}\boldsymbol{\alpha}}+S^q$ defined below is nonsingular. The same two conditions hold with $Y_{\mathrm{fp}}(\boldsymbol{\theta}, \bfP, \tilde{\boldsymbol{\theta}}, \bfY)$ in place of $\Gamma^q(\boldsymbol{\theta}, \bfY, \bfP, \tilde{\boldsymbol{\theta}})$ throughout, i.e., for the Hessian of $\cQ^{\mathrm{NPL}}$ and for $\Omega^{\mathrm{NPL}}_{\boldsymbol{\alpha}\boldsymbol{\alpha}}+S^{\mathrm{NPL}}$.
    \item \label{assump: LS local equiv} There exists a neighborhood of $(\bfY_0, \bfP_0,\boldsymbol{\theta}_0)$ such that the global maximizer $\boldsymbol{\alpha}(\bfY, \bfP,\tilde{\boldsymbol{\theta}})$ of $\cQ^q(\boldsymbol{\alpha};\, \bfY, \bfP, \tilde{\boldsymbol{\theta}})$ is unique up to label permutation for all $( \bfY, \bfP,\tilde{\boldsymbol{\theta}})$ in that neighborhood.
    \item \label{assump: LS isolated fp} Let $\phi_q(\bfY, \bfP,\tilde{\boldsymbol{\theta}}) \coloneqq \bigl(\boldsymbol{\Gamma}^q(\boldsymbol{\theta}_\alpha, \bfY, \bfP, \tilde{\boldsymbol{\theta}}),\; \boldsymbol{\Lambda}(\boldsymbol{\theta}_\alpha, \boldsymbol{\Gamma}^q(\boldsymbol{\theta}_\alpha, \bfY, \bfP, \tilde{\boldsymbol{\theta}}), \bfP) \bigr)$, where $\boldsymbol{\theta}_\alpha$ is the $\boldsymbol{\theta}$-component of $\boldsymbol{\alpha}(\bfY, \bfP,\tilde{\boldsymbol{\theta}})$, and define $\phi_{\mathrm{NPL}}(\bfY, \bfP,\tilde{\boldsymbol{\theta}})$ analogously with $Y_{\mathrm{fp}}(\boldsymbol{\theta}_\alpha, \bfP, \tilde{\boldsymbol{\theta}}, \bfY)$ in place of $\boldsymbol{\Gamma}^q(\boldsymbol{\theta}_\alpha, \bfY, \bfP, \tilde{\boldsymbol{\theta}})$. Both $\bigl(\phi_q(\bfY, \bfP, \tilde{\boldsymbol{\theta}}), \boldsymbol{\theta}_\alpha\bigr) - (\bfY, \bfP,\tilde{\boldsymbol{\theta}})$ and $\bigl(\phi_{\mathrm{NPL}}(\bfY, \bfP, \tilde{\boldsymbol{\theta}}),\boldsymbol{\theta}_\alpha \bigr) - (\bfY, \bfP,\tilde{\boldsymbol{\theta}})$ have a nonsingular Jacobian matrix at $(\bfY_0, \bfP_0, \boldsymbol{\theta}_0)$.
    \item \label{assump: implicit function theorem} For each type $m = 1,\ldots,M$, the matrices $\begin{pmatrix} I - \Gamma^q_{Y} & -\Gamma^q_{P} \\ -\Lambda_{Y} & I - \Lambda_{P} \end{pmatrix}$ and $\begin{pmatrix} I - (Y_{\mathrm{fp}})_{\tilde{Y}} & -(Y_{\mathrm{fp}})_{P} \\ -\Lambda_{Y} & I - \Lambda_{P} \end{pmatrix}$ are invertible.
    \item \label{assump: fp jacobian} For each type $m = 1,\ldots,M$, the matrix $I - G_Y$ is nonsingular, where $G_Y$ denotes the Jacobian of $G$ with respect to its second argument $Y$. Consequently $Y_{\mathrm{fp}}(\theta,P,\tilde{\theta},\tilde{Y})$ is continuously differentiable in a neighborhood of $(\theta_0^m, P_0^m, \theta_0^m, Y_0^m)$.
\end{assumptionitems}
\end{assumption}
\Cref{assump: LS regularity} is similar to the assumptions in Proposition 2 of \citet{aguirregabiria2007sequential}. Write the per-observation log-likelihood contribution, with $\mL_i^{q}$ defined in \eqref{eq: Ell_i}, as
\begin{equation*}
    \ell_i^q(\boldsymbol{\alpha};\, \bfY, \bfP, \tilde{\boldsymbol{\theta}}) \coloneqq\log \left(\sum_{m=1}^M \pi_m\, \mL_i^{q}(\theta^m;Y^m, P^m, \tilde{\theta}^m) \right),
\end{equation*}
and the score $s_i^q(\boldsymbol{\alpha};\, \bfY, \bfP, \tilde{\boldsymbol{\theta}}) \coloneqq\nabla_{\boldsymbol{\alpha}} \ell_i^q(\boldsymbol{\alpha};\, \bfY, \bfP, \tilde{\boldsymbol{\theta}})$. Define the information matrices $\Omega_{\boldsymbol{\alpha}\boldsymbol{\alpha}}^{q} \coloneqq -\nabla_{\boldsymbol{\alpha}\boldsymbol{\alpha}}^2 \cQ^q$,  $\Omega^{q}_{\boldsymbol{\alpha} \bfY} \coloneqq-\nabla_{\boldsymbol{\alpha} \bfY}^2 \cQ^q$, $\Omega^{q}_{\boldsymbol{\alpha} \bfP} \coloneqq-\nabla_{\boldsymbol{\alpha} \bfP}^2 \cQ^q$, and $\Omega^{q}_{\boldsymbol{\alpha}\tilde{\boldsymbol{\theta}}} \coloneqq-\nabla_{\boldsymbol{\alpha}\tilde{\boldsymbol{\theta}}}^2 \cQ^q$, where  the derivatives are evaluated at $(\boldsymbol{\alpha}_0; \bfY_0, \bfP_0,\boldsymbol{\theta}_0)$.

Under Assumptions~\ref{assump: LS smoothness} and \ref{assump: implicit function theorem}, the implicit function theorem applies. Let $\bfY^q(\boldsymbol{\theta})$ and $\bfP^q(\boldsymbol{\theta})$ denote the solutions to $\bfY = \boldsymbol{\Gamma}^q(\boldsymbol{\theta}, \bfY, \bfP, \boldsymbol{\theta})$ and $\bfP = \boldsymbol{\Lambda}(\boldsymbol{\theta}, \bfY, \bfP)$ in a neighborhood of $\boldsymbol{\theta}_0$, and let $\bfY^q_{\boldsymbol{\theta}}$ and $\bfP^q_{\boldsymbol{\theta}}$ denote the derivatives of $\bfY^q(\boldsymbol{\theta})$ and $\bfP^q(\boldsymbol{\theta})$ with respect to $\boldsymbol{\theta}$ at $\boldsymbol{\theta}_0$. Define the correction term $ S^{q} \coloneqq (\Omega^{q}_{\boldsymbol{\alpha}\tilde{\boldsymbol{\theta}}}+ \Omega^{q}_{\boldsymbol{\alpha} \bfY}\, \bfY^q_{\boldsymbol{\theta}} + \Omega^{q}_{\boldsymbol{\alpha} \bfP}\, \bfP^q_{\boldsymbol{\theta}})(\mathbf{I}_{Md_\theta},\, 0)$ where $(\mathbf{I}_{Md_\theta},\, 0)$ is the $Md_\theta \times d_\alpha$ selector extracting the $\boldsymbol{\theta}$-block from $\boldsymbol{\alpha}$. The correction term $S^{q}$ contains $\Omega^{q}_{\boldsymbol{\alpha}\tilde{\boldsymbol{\theta}}}$ because the fixed-point mapping $G$ may depend on the previous-iteration parameter $\tilde{\boldsymbol{\theta}}$ as in the EPL estimator (\Cref{ex: epl}).  When $G$ depends only on the current $\boldsymbol{\theta}$, the correction simplifies to $S^q = (\Omega^q_{\boldsymbol{\alpha} \bfY}\, \bfY^q_{\boldsymbol{\theta}} + \Omega^q_{\boldsymbol{\alpha} \bfP}\, \bfP^q_{\boldsymbol{\theta}})(\mathbf{I}_{Md_\theta},\, 0)$. We suppress the superscript $q$ on these derivative matrices when no confusion arises.

\begin{theorem}[Consistency and Asymptotic Normality] \label{theorem: LS mixture asymptotic}
    Under Assumptions~\ref{assump: standard}, \ref{assump: stationary initial}, \ref{assump: LS primitives}, \ref{assumption: same fixed point}, \ref{assump: nplq identification}, and \ref{assump: LS regularity}, the EM-NPL($q$) estimator is consistent and asymptotically normal
    \begin{equation*}
        \sqrt{N} \begin{pmatrix} \hat{\boldsymbol{\theta}}_{\mathrm{NPL}(q)} - \boldsymbol{\theta}_0 \\ \hat{\boldsymbol{\pi}}_{\mathrm{NPL}(q)} - \boldsymbol{\pi}_0 \end{pmatrix} \xrightarrow{d} \mathcal{N}\left(0,\, \Sigma^q\right),
    \end{equation*}
    where $\Sigma^q = \bigl(\Omega^{q}_{\boldsymbol{\alpha}\boldsymbol{\alpha}} + S^{q}\bigr)^{-1}\, \Omega^{q}_{\boldsymbol{\alpha}\boldsymbol{\alpha}}\, \bigl\{\bigl(\Omega^{q}_{\boldsymbol{\alpha}\boldsymbol{\alpha}} + S^{q}\bigr)^{-1}\bigr\}^{\top}$.
\end{theorem}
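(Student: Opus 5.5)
The plan follows the two-step structure of \citet[Proposition 2]{aguirregabiria2007sequential}: first establish consistency of the fixed-point estimator, then linearize the three defining conditions of $\mathcal{E}_N^q$ around $(\boldsymbol{\alpha}_0,\bfY_0,\bfP_0)$.

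\textbf{Consistency.} By \Cref{assump: LS primitives}, \Cref{assump: LS compactness}, and the finite state space in \Cref{assump: standard}, the log pseudo-likelihood $\ell_i^q$ is continuous and dominated on a compact neighborhood of $(\boldsymbol{\alpha}_0,\bfY_0,\bfP_0,\boldsymbol{\theta}_0)$. Together with \Cref{assump: LS iid}, a uniform law of large numbers then gives $\sup|\cQ^q_N-\cQ^q|\to_p 0$. By Berge's maximum theorem and \Cref{assump: LS local equiv}, the argmax correspondence $\boldsymbol{\alpha}_N(\bfY,\bfP,\tilde{\boldsymbol{\theta}})$ converges uniformly to $\boldsymbol{\alpha}(\bfY,\bfP,\tilde{\boldsymbol{\theta}})$, modulo label permutation. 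Consequently the sample fixed-point mapping $(\phi_{q,N},\boldsymbol{\theta}_{\alpha,N})$ converges uniformly to its population counterpart.

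Local uniqueness comes from two sources:
\begin{itemize}
\item the isolatedness in \Cref{assump: nplq identification2};
\item the nonsingular Jacobian in \Cref{assump: LS isolated fp}.
\end{itemize}
The argument of \citet{aguirregabiria2007sequential} then shows that every limit point of sample fixed points lies in $\mathcal{E}^q$. Near $(\bfY_0,\bfP_0,\boldsymbol{\theta}_0)$ a sample fixed point exists with probability approaching one, either by a degree-type argument or by the implicit function theorem on the perturbed map. Because the estimator maximizes $\cQ^q_N$ over $\mathcal{E}_N^q$, \Cref{assump: nplq identification1} selects the true fixed point up to labels, which gives consistency. This is the step I expect to be most delicate: it needs uniform convergence of a set of fixed points, not of a single maximizer. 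I would replicate the Aguirregabiria--Mira argument verbatim, with $(\bfY,\bfP,\tilde{\boldsymbol{\theta}})$ playing the role of their $P$.

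\textbf{Asymptotic normality.} At the estimator, the first-order condition is $N^{-1}\sum_i s_i^q(\hat{\boldsymbol{\alpha}};\hat\bfY,\hat\bfP,\hat{\boldsymbol{\theta}})=0$, and interiority follows from \Cref{assump: LS compactness}. I would expand this condition jointly in $\boldsymbol{\alpha}$, $\bfY$, $\bfP$, and $\tilde{\boldsymbol{\theta}}$, all evaluated at the truth, using the twice-continuous differentiability in \Cref{assump: LS smoothness}. This gives
\[
0=\tfrac{1}{\sqrt N}\textstyle\sum_i s_i^q-\Omega^q_{\boldsymbol{\alpha}\boldsymbol{\alpha}}\sqrt N(\hat{\boldsymbol{\alpha}}-\boldsymbol{\alpha}_0)-\Omega^q_{\boldsymbol{\alpha}\bfY}\sqrt N(\hat\bfY-\bfY_0)-\Omega^q_{\boldsymbol{\alpha}\bfP}\sqrt N(\hat\bfP-\bfP_0)-\Omega^q_{\boldsymbol{\alpha}\tilde{\boldsymbol{\theta}}}\sqrt N(\hat{\boldsymbol{\theta}}-\boldsymbol{\theta}_0)+o_p(1).
\]

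Next I would eliminate the nuisance terms. The estimated $(\hat\bfY,\hat\bfP)$ satisfy $\hat\bfY=\boldsymbol{\Gamma}^q(\hat{\boldsymbol{\theta}},\hat\bfY,\hat\bfP,\hat{\boldsymbol{\theta}})$ and $\hat\bfP=\boldsymbol{\Lambda}(\hat{\boldsymbol{\theta}},\hat\bfY,\hat\bfP)$. By \Cref{assump: implicit function theorem} and the implicit function theorem, they therefore equal $\bfY^q(\hat{\boldsymbol{\theta}})$ and $\bfP^q(\hat{\boldsymbol{\theta}})$. It follows that $\sqrt N(\hat\bfY-\bfY_0)=\bfY^q_{\boldsymbol{\theta}}\sqrt N(\hat{\boldsymbol{\theta}}-\boldsymbol{\theta}_0)+o_p(1)$, and similarly for $\bfP$. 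Writing $\hat{\boldsymbol{\theta}}-\boldsymbol{\theta}_0=(\mathbf{I}_{Md_\theta},0)(\hat{\boldsymbol{\alpha}}-\boldsymbol{\alpha}_0)$, the nuisance terms collapse to $S^q\sqrt N(\hat{\boldsymbol{\alpha}}-\boldsymbol{\alpha}_0)$. Hence
\[
\sqrt N(\hat{\boldsymbol{\alpha}}-\boldsymbol{\alpha}_0)=(\Omega^q_{\boldsymbol{\alpha}\boldsymbol{\alpha}}+S^q)^{-1}N^{-1/2}\textstyle\sum_i s_i^q+o_p(1),
\]
where the inverse exists by \Cref{assump: LS info matrix}.

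\textbf{The score and the information identity.} At the truth, $\boldsymbol{\Gamma}^q(\boldsymbol{\theta}_0,\bfY_0,\bfP_0,\boldsymbol{\theta}_0)=\bfY_0$, and by \Cref{assumption: same fixed point} this equals the exact fixed point. The pseudo-likelihood evaluated at $(\boldsymbol{\alpha}_0;\bfY_0,\bfP_0,\boldsymbol{\theta}_0)$ is therefore the true mixture likelihood. This holds because $\bfP_0$ consists of the equilibrium CCPs and, by \Cref{assump: stationary initial}, the initial-state density is $p_*^m(\cdot;P_0^m)$. Consequently $s_i^q$ is a mean-zero true score, and the information equality $\bE[s_is_i^\top]=\Omega^q_{\boldsymbol{\alpha}\boldsymbol{\alpha}}$ holds because the Hessian is taken with the nuisance arguments held fixed at their true values. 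The Lindeberg--L\'evy central limit theorem under i.i.d.\ sampling then gives $N^{-1/2}\sum_i s_i^q\to_d\mathcal N(0,\Omega^q_{\boldsymbol{\alpha}\boldsymbol{\alpha}})$, and Slutsky's theorem yields the stated $\Sigma^q$.

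A remainder bound is also needed: by consistency and continuity of the second derivatives, the mean-value Hessians converge uniformly in probability to their values at the truth. The information identity is the other point that needs care, since it must account for how $\boldsymbol{\pi}$ enters the likelihood. That reduces to the standard mixture-likelihood identity, which holds as long as the $\boldsymbol{\pi}$-derivatives are taken in the free coordinates.
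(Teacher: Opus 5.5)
Your proposal is correct and follows essentially the same route as the paper. Consistency comes from extending the Aguirregabiria--Mira argument to a fixed-point operator on $(\bfY,\bfP,\boldsymbol{\theta})$; you then expand the first-order condition jointly in $(\boldsymbol{\alpha},\bfY,\bfP,\tilde{\boldsymbol{\theta}})$, eliminate the nuisance deviations through $(\bfY^q_{\boldsymbol{\theta}},\bfP^q_{\boldsymbol{\theta}})$ to obtain $S^q$, and finish with the Bartlett identity and a CLT for the score. The only cosmetic difference is that you write $(\hat{\bfY},\hat{\bfP})=(\bfY^q(\hat{\boldsymbol{\theta}}),\bfP^q(\hat{\boldsymbol{\theta}}))$ directly via the implicit function theorem, whereas the paper linearizes the fixed-point conditions and matches them to the implicit-derivative equation; the two are equivalent.
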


\begin{corollary}[Asymptotic Normality of EM-NPL] \label{corollary: npl asymptotic normality}
    Under Assumptions~\ref{assump: standard}, \ref{assump: stationary initial}, \ref{assump: LS primitives}, \ref{assumption: same fixed point}, \ref{assump: nplq identification}, and \ref{assump: LS regularity}, the EM-NPL estimator defined in \Cref{def: em npl estimator inf} is consistent and asymptotically normal
    \begin{equation*}
        \sqrt{N} \begin{pmatrix} \hat{\boldsymbol{\theta}}_{\mathrm{NPL}} - \boldsymbol{\theta}_0 \\ \hat{\boldsymbol{\pi}}_{\mathrm{NPL}} - \boldsymbol{\pi}_0 \end{pmatrix} \xrightarrow{d} \mathcal{N}\left(0,\, \Sigma_{\mathrm{NPL}}\right),
    \end{equation*}
    where $\Sigma_{\mathrm{NPL}}$ is the expression for $\Sigma^q$ in \Cref{theorem: LS mixture asymptotic} with $Y_{\mathrm{fp}}(\theta, P, \tilde{\theta}, Y)$ in place of $\Gamma^q(\theta, Y, P, \tilde{\theta})$.
\end{corollary}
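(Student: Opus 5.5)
The corollary should follow from \Cref{theorem: LS mixture asymptotic} applied to a particular ``inner algorithm.'' Define $\Gamma^\infty(\theta, Y, P, \tilde{\theta}) \coloneqq Y_{\mathrm{fp}}(\theta, P, \tilde{\theta}, Y)$, which is the exact solver with the center set equal to the initial guess. With this choice, $\mathcal{L}_i^q$, $\cQ_N^q$, $\cQ^q$, $\mathcal{E}_N^q$ and $\mathcal{E}^q$ become exactly the objects in \Cref{def: em npl estimator inf}. The EM-NPL estimator is then the EM-NPL($q$) estimator with $\Gamma^q$ replaced by $\Gamma^\infty$. So the plan is to check that every hypothesis of \Cref{theorem: LS mixture asymptotic} still holds with $\Gamma^\infty$ in place of $\Gamma^q$, and then read off $\Sigma_{\mathrm{NPL}}$.

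\textbf{Step 1: the fixed-point assumption.} \Cref{assumption: same fixed point} for $\Gamma^\infty$ holds trivially. By definition, $Y = \Gamma^\infty(\theta,Y,P,\tilde{\theta})$ if and only if $Y = Y_{\mathrm{fp}}(\theta,P,\tilde{\theta},Y)$.

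\textbf{Step 2: the NPL-versions of the regularity conditions.} \Cref{assump: LS regularity} was written to include these explicitly:
\begin{itemize}
\item Condition (iii) includes the Hessian of $\cQ^{\mathrm{NPL}}$ and nonsingularity of $\Omega^{\mathrm{NPL}}_{\boldsymbol{\alpha}\boldsymbol{\alpha}}+S^{\mathrm{NPL}}$.
\item Condition (v) includes $\phi_{\mathrm{NPL}}$.
\item Condition (vi) includes the matrix built from $(Y_{\mathrm{fp}})_{\tilde Y}$ and $(Y_{\mathrm{fp}})_P$.
\end{itemize}
Note that $\Gamma^\infty_Y$ is the total derivative of $Y_{\mathrm{fp}}(\theta,P,\tilde\theta,Y)$ in $Y$, which is $(Y_{\mathrm{fp}})_{\tilde Y}$. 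Likewise $\Gamma^\infty_P = (Y_{\mathrm{fp}})_P$.

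\textbf{Step 3: the remaining conditions.} Identification (\Cref{assump: nplq identification}) and local uniqueness of the maximizer (\Cref{assump: LS local equiv}) must be read with $\Gamma^\infty$. I interpret the corollary's hypotheses as imposing the same conditions on the NPL objects, as the notation ``$\cQ^{\mathrm{NPL}}$'' suggests. \Cref{remark: id relationship} supports this under separability.

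\textbf{Main obstacle: smoothness.} \Cref{assump: LS smoothness} requires $\Gamma^\infty$ to be twice continuously differentiable, but it is stated only for $\Gamma^q$. I would obtain it from the implicit function theorem applied to $Y - G(\theta,Y,P,\tilde\theta,\tilde Y)=0$. This uses $G\in C^2$ and the nonsingularity of $I-G_Y$ from \Cref{assump: fp jacobian}, which gives $Y_{\mathrm{fp}}\in C^2$ near $(\theta_0^m,P_0^m,\theta_0^m,Y_0^m)$. The asymptotic argument of the theorem is local, so this suffices. Positivity of the CCPs (\Cref{assump: LS positive ccp}) for $\Gamma^\infty$ follows from the logit form of $\Lambda$ evaluated at a finite $Y$.

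\textbf{Conclusion.} With all hypotheses verified, \Cref{theorem: LS mixture asymptotic} gives consistency and the sandwich limit. In that limit, $\bfY^q_{\boldsymbol\theta}$ and $\bfP^q_{\boldsymbol\theta}$ become derivatives of the NPL fixed-point solutions, which yields $\Sigma_{\mathrm{NPL}}$.
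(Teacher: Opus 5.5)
Your proposal is correct and takes the same route as the paper, whose proof is the one-line application of \Cref{theorem: LS mixture asymptotic} with $\Gamma^q(\theta,Y,P,\tilde{\theta})=Y_{\mathrm{fp}}(\theta,P,\tilde{\theta},Y)$; your hypothesis-by-hypothesis check just makes that substitution explicit. One small caveat: you say local smoothness suffices because the argument is local, but the consistency part of that theorem uses uniform convergence of $\cQ_N$ over the whole compact parameter space, so $C^2$ smoothness of $Y_{\mathrm{fp}}$ obtained only near $(\theta_0^m,P_0^m,\theta_0^m,Y_0^m)$ is not enough by itself. It is cleaner either to read \Cref{assump: LS smoothness} with $Y_{\mathrm{fp}}$ substituted, as you already do for identification, or to note that $I-G_Y$ is nonsingular everywhere for the contraction mappings and for the linear EPL system, which makes your implicit-function argument global.
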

\begin{proof}
    Apply \Cref{theorem: LS mixture asymptotic} with $\Gamma^q(\theta, Y, P, \tilde{\theta}) = Y_{\mathrm{fp}}(\theta, P, \tilde{\theta}, Y)$.
\end{proof}

Under \Cref{assumption: theta separability}, \Cref{theorem: truncation invariance} implies that the EM-NPL($q$) estimator is numerically identical to the EM-NPL estimator for any $q \geq 1$, so $\Sigma^q = \Sigma_{\mathrm{NPL}}$ and standard errors can be computed using the NPL formulas. When \Cref{assumption: theta separability} does not hold, $\Sigma^q$ depends on the cross-derivative matrices $\Omega^q_{\boldsymbol{\alpha}\tilde{\boldsymbol{\theta}}}$, $\Omega^q_{\boldsymbol{\alpha}\bfY}$, and $\Omega^q_{\boldsymbol{\alpha}\bfP}$, which can be cumbersome to compute in practice. In that case, a nonparametric bootstrap may be applied.

To compare the asymptotic variance of EM-NPL($q$) with that of EM-NPL, we impose the following condition that bounds the approximation error of the inner algorithm in terms of a function $f(q)$ and the quality of the initial guess $\tilde{Y}$.
\begin{assumption}[Approximation Error Bound] \label{assumption: approximation error upper bound}
    For a given algorithm $\Gamma$, there exists a function $f(q)$ with $f(q) \to 0$ as $q \to \infty$ such that, uniformly over types,
    \[
        \| Y_{\mathrm{fp}}(\theta, P, \tilde{\theta}, \tilde{Y}) - \Gamma^{q}(\theta, \tilde{Y}, P, \tilde{\theta}) \|
        \leq
        \| Y_{\mathrm{fp}}(\theta, P, \tilde{\theta},\tilde{Y}) - \tilde{Y} \|\, f(q),
    \]
    for any $(\theta, P, \tilde{\theta},\tilde{Y})$ in a neighborhood of $(\theta_0^m, P_0^m,\theta_0^m,Y_0^m)$.
\end{assumption}

The rate $f(q)$ is algorithm-specific. For example, if SA is used for the inner algorithm, then $f(q) = \rho_G^q$, where $\rho_G$ is the Lipschitz constant of the operator $G$. For superlinear algorithms such as GMRES, $f(q)$ can decay much faster than for SA.

The following proposition shows that the asymptotic variance of the EM-NPL($q$) estimator converges to that of the EM-NPL estimator at the rate $f(q)$.

\begin{proposition}[Asymptotic Variance Approximation] \label{proposition: variance approximation}
    Under Assumptions~\ref{assump: standard}, \ref{assump: stationary initial}, \ref{assump: LS primitives}, \ref{assumption: same fixed point}, \ref{assump: nplq identification}, \ref{assump: LS regularity}, and \ref{assumption: approximation error upper bound}, we have
    \begin{equation*}
        \| \Sigma^{q} - \Sigma_{\mathrm{NPL}} \|_{F} = O(f(q)).
    \end{equation*}
\end{proposition}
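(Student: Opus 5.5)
Since $\Sigma^q$ and $\Sigma_{\mathrm{NPL}}$ are the same smooth sandwich function of the matrices $(\Omega_{\boldsymbol{\alpha}\boldsymbol{\alpha}}, \Omega_{\boldsymbol{\alpha}\bfY}, \Omega_{\boldsymbol{\alpha}\bfP}, \Omega_{\boldsymbol{\alpha}\tilde{\boldsymbol{\theta}}}, \bfY_{\boldsymbol{\theta}}, \bfP_{\boldsymbol{\theta}})$, the plan is to show that each ingredient differs between the $\Gamma^q$ and the $Y_{\mathrm{fp}}$ versions by $O(f(q))$, and then propagate this through the formula. Matrix inversion is locally Lipschitz near a nonsingular matrix, and \Cref{assump: LS info matrix} gives nonsingularity of $\Omega^{\mathrm{NPL}}_{\boldsymbol{\alpha}\boldsymbol{\alpha}}+S^{\mathrm{NPL}}$. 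Products of bounded matrices are Lipschitz. Hence the final step is routine:
\[
\|\Sigma^q-\Sigma_{\mathrm{NPL}}\|_F \le C\bigl(\|\Omega^q_{\boldsymbol{\alpha}\boldsymbol{\alpha}}-\Omega^{\mathrm{NPL}}_{\boldsymbol{\alpha}\boldsymbol{\alpha}}\|_F+\|S^q-S^{\mathrm{NPL}}\|_F\bigr).
\]
Here $C$ is uniform in large $q$: once the differences are small, $\Omega^q_{\boldsymbol{\alpha}\boldsymbol{\alpha}}+S^q$ stays in a neighborhood where its inverse is bounded.

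The core step is a derivative-level approximation bound. Define $D_q(\theta,P,\tilde\theta,\tilde Y) \coloneqq \Gamma^q(\theta,\tilde Y,P,\tilde\theta) - Y_{\mathrm{fp}}(\theta,P,\tilde\theta,\tilde Y)$.

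\begin{itemize}
\item By \Cref{assumption: same fixed point}, $Y_0^m$ is a fixed point of both maps. Hence, at $(\theta_0^m,P_0^m,\theta_0^m,Y_0^m)$, the factor $\|Y_{\mathrm{fp}}-\tilde Y\|$ vanishes, so $D_q=0$ there, and the two maps agree at the truth.
\item For first derivatives, fix a direction $h$ in $(\theta,P,\tilde\theta,\tilde Y)$. Apply \Cref{assumption: approximation error upper bound} at the point $z_0+th$. Since $Y_{\mathrm{fp}}(z_0+th)-(\tilde Y_0+th_{\tilde Y}) = O(t)$ by differentiability of $Y_{\mathrm{fp}}$ (\Cref{assump: fp jacobian}), we get $\|D_q(z_0+th)\|\le C|t|\,\|h\|\,f(q)$. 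Dividing by $t$ and letting $t\to0$ gives $\|\nabla D_q(z_0)\|\le C f(q)$.
\end{itemize}

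Thus the Jacobians $\Gamma^q_Y,\Gamma^q_P,\Gamma^q_\theta,\Gamma^q_{\tilde\theta}$ differ from $(Y_{\mathrm{fp}})_{\tilde Y},(Y_{\mathrm{fp}})_P,\dots$ by $O(f(q))$. Note that the total-derivative convention for $\Gamma^q_Y$ matches $(Y_{\mathrm{fp}})_{\tilde Y}$. By \Cref{assump: implicit function theorem} and the implicit function theorem, $\bfY^q_{\boldsymbol{\theta}}$ and $\bfP^q_{\boldsymbol{\theta}}$ are obtained by inverting the block matrix in that assumption against the $\theta$-derivatives. Both the block matrix and the right-hand side are $O(f(q))$-close to their NPL counterparts, so $\bfY^q_{\boldsymbol{\theta}}-\bfY^{\mathrm{NPL}}_{\boldsymbol{\theta}}=O(f(q))$, and likewise for $\bfP$.

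The main obstacle is the information matrices. These involve second derivatives of $\cQ^q$ in $\boldsymbol{\alpha}$, and cross derivatives with $\bfY,\bfP,\tilde{\boldsymbol{\theta}}$, evaluated at the truth. By the chain rule, these involve the first derivatives of $\Gamma^q$ (already controlled) and also the second derivatives of $\Gamma^q$, for example $\nabla^2_{\theta\theta}\Gamma^q$ and $\nabla^2_{\theta Y}\Gamma^q$. A pointwise bound at a single point does not control second derivatives in general. Two routes are available.

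\begin{itemize}
\item \textbf{Route 1 (preferred).} Use the information-matrix structure. At the truth, the score of the true likelihood has mean zero, so $\Omega_{\boldsymbol{\alpha}\boldsymbol{\alpha}}=\bE[s_is_i^\top]$, and similarly the cross terms equal $\bE[s_i\,\partial\ell_i^\top]$. These depend only on first derivatives of $\Lambda_j(\theta,\Gamma^q,P)$. The terms involving second derivatives of $\Gamma^q$ are multiplied by $\bE[\nabla_{\Lambda}\ell_i]$, which vanishes at the true CCPs because $\Lambda(\theta_0,\Gamma^q(\cdot),P_0)=P_0$ there. Then everything depends on first derivatives only, and the previous step applies.
\item \textbf{Route 2 (fallback).} Repeat the directional argument with second-order differences, using that $D_q$ vanishes on the whole fixed-point manifold $\{\tilde Y=Y_{\mathrm{fp}}(\theta,P,\tilde\theta,\tilde Y)\}$. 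The bound is $O(\|\text{distance to manifold}\|f(q))$ near it. Combined with the $C^2$ smoothness in \Cref{assump: LS smoothness} and a uniform $C^2$ bound on $\Gamma^q$, this yields $O(f(q))$ for the relevant second derivatives along manifold-tangent directions, which are the only ones entering after the identity in Route 1.
\end{itemize}

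Collecting the bounds gives $\|\Omega^q-\Omega^{\mathrm{NPL}}\|=O(f(q))$ and $\|S^q-S^{\mathrm{NPL}}\|=O(f(q))$, and the sandwich bound above concludes the proof.
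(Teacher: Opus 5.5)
Your proposal is correct and follows essentially the same route as the paper. A directional-derivative argument from the approximation-error assumption gives $O(f(q))$ closeness of the Jacobians of $\Gamma^q$ and $Y_{\mathrm{fp}}$ at the truth. The implicit-function derivatives then follow by inverse perturbation of the block system, and the sandwich $A(q)B(q)A(q)^{\top}$ is handled by Lipschitz continuity of products and inverses.

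Your Route~1 is exactly the paper's argument: the Bartlett identity rewrites $\Omega^q_{\boldsymbol{\alpha}\boldsymbol{\alpha}}$ and the cross-derivative matrices as score covariances, which involve only first derivatives of $\Gamma^q$. Route~2 is therefore unnecessary, and it would in any case need a uniform $C^2$ bound on $\Gamma^q$ that is not among the stated assumptions.
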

Thus, for superlinear solvers such as GMRES, the asymptotic variance of the EM-NPL($q$) estimator is close to that of the fully converged EM-NPL estimator even for small $q$.

\subsection{Local Convergence} \label{sec: convergence}

\noindent We now characterize the local convergence properties of the EM-NPL($q$) algorithm. Define $\boldsymbol{\zeta} \coloneqq (\boldsymbol{\theta}, \bfY, \bfP, \boldsymbol{\pi})$ and $\boldsymbol{\zeta}_0 \coloneqq (\boldsymbol{\theta}_0, \bfY_0, \bfP_0, \boldsymbol{\pi}_0)$. Define the population posterior-weighted M-step criterion for $\boldsymbol{\vartheta}$ as
\begin{equation*}
    \ell_E^q(\boldsymbol{\vartheta};\,\boldsymbol{\zeta}) \coloneqq\sum_{m=1}^{M} \bE\left[ w_{im}^q(\boldsymbol{\zeta}) \left(\log p_*^m(x_1;P^m) + \sum_{t=1}^{T} \sum_{j=1}^{J} \log \Lambda_j\bigl(\vartheta^m,\, \Gamma^q(\vartheta^m, Y^m, P^m, \theta^m),\, P^m\bigr)(a_{jt}|x_{t})\right) \right],
\end{equation*}
where
\[
w_{im}^q(\boldsymbol{\zeta}) \coloneqq\frac{\pi_m\, \mL_i^{q}({\theta}^{m}; Y^m, P^m, \theta^m)}{\sum_{m'=1}^M \pi_{m'}\, \mL_i^{q}({\theta}^{m'}; Y^{m'}, P^{m'}, \theta^{m'})},
\]
 is the posterior type-membership weight, built from the pseudo-likelihood $\mL_i^{q}$ defined in \eqref{eq: Ell_i} evaluated at the previous-iteration parameter $\boldsymbol{\zeta}=(\boldsymbol{\theta},\bfY, \bfP, \boldsymbol{\pi})$. Define the population M-step map for $\boldsymbol{\pi}$ as $\Pi^q(\boldsymbol{\zeta}) = \bE[w_{i1}^q(\boldsymbol{\zeta}), \ldots, w_{iM}^q(\boldsymbol{\zeta})]^{\top}$. The state $\boldsymbol{\theta}$ enters $\ell_E^q$ both through the posterior weights $w_{im}^q$ and through the fourth argument of $\Gamma^q$ in the criterion.

\begin{assumption} \label{assumption: mixture convergence regularity}
    \begin{assumptionitems}
        \item The initial values satisfy $(\boldsymbol{\theta}^{(0)}, \bfY^{(0)}, \bfP^{(0)}, \boldsymbol{\pi}^{(0)}) - \boldsymbol{\zeta}_0 = o_p(1)$.
        \item\label{assump: Gamma smoothness} $\Gamma^q$ and $\Lambda$ are three times continuously differentiable in a neighborhood of the true parameter value.
        \item The M-step Hessians $\nabla_{\boldsymbol{\vartheta}\boldsymbol{\vartheta}}^2 \ell_E^{q}(\boldsymbol{\theta}_0;\, \boldsymbol{\zeta}_0)$ and $\nabla_{\boldsymbol{\vartheta}\boldsymbol{\vartheta}}^2 \ell_E^{\mathrm{NPL}}(\boldsymbol{\theta}_0;\, \boldsymbol{\zeta}_0)$ are nonsingular, where $\ell_E^{\mathrm{NPL}}$ denotes $\ell_E^q$ with $Y_{\mathrm{fp}}$ in place of $\Gamma^q$.
    \end{assumptionitems}
\end{assumption}

Let $\Lambda^{q}(\theta, Y, P, \tilde{\theta}) \coloneqq\Lambda(\theta, \Gamma^q(\theta, Y, P, \tilde{\theta}), P)$ denote the composition of $\Gamma^q$ and $\Lambda$, and write $\boldsymbol{\Lambda}^q(\boldsymbol{\theta}, \bfY, \bfP, \tilde{\boldsymbol{\theta}})$ for $\Lambda^q$ stacked across types, with $\boldsymbol{\Gamma}^q$ as in \Cref{sec: em npl estimator}. Let $(\hat{\boldsymbol{\pi}}, \hat{\boldsymbol{\theta}}, \hat{\bfY}, \hat{\bfP})$ denote the EM-NPL($q$) estimator. Define the M-step sensitivities of $\boldsymbol{\vartheta}$ for $\boldsymbol{\eta} \in \{ \boldsymbol{\theta}, \bfY, \bfP, \boldsymbol{\pi}\}$ and the M-step sensitivities of $\boldsymbol{\pi}$ as
\[
    H_{\boldsymbol{\eta}}^q
    \coloneqq
    - \{\nabla_{\boldsymbol{\vartheta}\boldsymbol{\vartheta}}^2
    \ell_E^q(\boldsymbol{\theta}_0;\boldsymbol{\zeta}_0)\}^{-1}
    \nabla_{\boldsymbol{\vartheta} \boldsymbol{\eta} }^2
    \ell_E^q(\boldsymbol{\theta}_0;\boldsymbol{\zeta}_0),
    \quad \Pi^q_{\boldsymbol{\eta}} \coloneqq \nabla_{\boldsymbol{\eta}}\Pi^q(\boldsymbol{\zeta}_0).
\]
For $\boldsymbol{\eta} \in \{\boldsymbol{\theta},  \bfY, \bfP, \tilde{\boldsymbol{\theta}}\}$, let $\boldsymbol{\Gamma}_{\boldsymbol{\eta}}^q$ and $\boldsymbol{\Lambda}_{\boldsymbol{\eta}}^q$ denote the Jacobians of $\boldsymbol{\Gamma}^q$ and $\boldsymbol{\Lambda}^q$ evaluated at the true parameter value.  $\boldsymbol{\Gamma}_{\boldsymbol{\eta}}^q$ and $\boldsymbol{\Lambda}_{\boldsymbol{\eta}}^q$ are block-diagonal across types because each type's component depends only on its own type-specific arguments, whereas the M-step sensitivities $H_{\boldsymbol{\eta}}^q$ are full stacked matrices because the posterior weights couple the latent types.

 For $X\in\{\boldsymbol{\pi},\boldsymbol{\theta},\bfY,\bfP\}$ let $\Delta X^{(k)} \coloneqq X^{(k)} - \hat{X}$ denote the deviation of the $k$-th iterate of the EM-NPL($q$) sequence; write $\delta^{(k)} \coloneqq(\Delta\boldsymbol{\pi}^{(k)}, \Delta\boldsymbol{\theta}^{(k)}, \Delta\bfY^{(k)}, \Delta\bfP^{(k)})$.
 
\begin{proposition}[Local Convergence of EM-NPL($q$)] \label{proposition: mixture convergence matrix}
    Under Assumptions~\ref{assump: LS primitives}, \ref{assumption: same fixed point}, \ref{assump: nplq identification}, \ref{assump: LS regularity}, and \ref{assumption: mixture convergence regularity}, the EM-NPL($q$) algorithm satisfies
    \begin{equation*}
        \delta^{(k)} = R^q\, \delta^{(k-1)} + O_p(N^{-1/2}\|\delta^{(k-1)}\| + \|\delta^{(k-1)}\|^2),
    \end{equation*}
    where, in block order $(\boldsymbol{\pi},\boldsymbol{\theta},\bfY,\bfP)$,
    \begin{equation*}
        R^q = \begin{pmatrix}
            \Pi^q_{\boldsymbol{\pi}} & \Pi^q_{\boldsymbol{\theta}} & \Pi^q_{\bfY} & \Pi^q_{\bfP} \\[4pt]
            H_{\boldsymbol{\pi}}^q & H_{\boldsymbol{\theta}}^q & H_{\bfY}^q & H_{\bfP}^q \\[4pt]
            \boldsymbol{\Gamma}_{\boldsymbol{\theta}}^q H_{\boldsymbol{\pi}}^q & \boldsymbol{\Gamma}_{\boldsymbol{\theta}}^q H_{\boldsymbol{\theta}}^q + \boldsymbol{\Gamma}_{\tilde{\boldsymbol{\theta}}}^q  & \boldsymbol{\Gamma}_{\boldsymbol{\theta}}^q H_{\bfY}^q + \boldsymbol{\Gamma}_{\boldsymbol{Y}}^q & \boldsymbol{\Gamma}_{\boldsymbol{\theta}}^q H_{\bfP}^q + \boldsymbol{\Gamma}_{\boldsymbol{P}}^q \\[4pt]
            \boldsymbol{\Lambda}^{q}_{\boldsymbol{\theta}} H_{\boldsymbol{\pi}}^q & \boldsymbol{\Lambda}^{q}_{\boldsymbol{\theta}} H_{\boldsymbol{\theta}}^q  + \boldsymbol{\Lambda}^{q}_{\boldsymbol{\tilde{\theta}}} & \boldsymbol{\Lambda}^{q}_{\boldsymbol{\theta}} H_{\bfY}^q + \boldsymbol{\Lambda}^{q}_{\boldsymbol{Y}} & \boldsymbol{\Lambda}^{q}_{\boldsymbol{\theta}} H_{\bfP}^q + \boldsymbol{\Lambda}^{q}_{\boldsymbol{P}}
        \end{pmatrix}.
    \end{equation*}
\end{proposition}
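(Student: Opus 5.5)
The plan is to write one outer iteration of the EM-NPL($q$) algorithm as a sample map $\boldsymbol{\zeta}^{(k)} = \Psi_N(\boldsymbol{\zeta}^{(k-1)})$, compare it with its population analogue $\Psi$, and linearize around the estimator $\hat{\boldsymbol{\zeta}}$. By \Cref{lemma: algorithm limit is fixed point} and the definition of $\mathcal{E}_N^q$, the estimator satisfies $\hat{\boldsymbol{\zeta}} = \Psi_N(\hat{\boldsymbol{\zeta}})$. \Cref{theorem: LS mixture asymptotic} gives $\hat{\boldsymbol{\zeta}} - \boldsymbol{\zeta}_0 = O_p(N^{-1/2})$, and this also holds for the $\bfY,\bfP$ components through the implicit-function representation $\bfY^q(\boldsymbol{\theta}),\bfP^q(\boldsymbol{\theta})$. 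Writing the map in components:
\begin{itemize}
\item the $\boldsymbol{\pi}$ update is $N^{-1}\sum_i w_{im}^q(\boldsymbol{\zeta})$;
\item the $\boldsymbol{\theta}$ update is $\vartheta_N(\boldsymbol{\zeta}) = \argmax_{\boldsymbol{\vartheta}} \hat\ell_E^q(\boldsymbol{\vartheta};\boldsymbol{\zeta})$, the sample analogue of $\ell_E^q$;
\item the $\bfY$ update is $\boldsymbol{\Gamma}^q(\vartheta_N(\boldsymbol{\zeta}), \bfY, \bfP, \boldsymbol{\theta})$;
\item the $\bfP$ update is $\boldsymbol{\Lambda}(\vartheta_N(\boldsymbol{\zeta}), \boldsymbol{\Gamma}^q(\cdot), \bfP) = \boldsymbol{\Lambda}^q(\vartheta_N(\boldsymbol{\zeta}),\bfY,\bfP,\boldsymbol{\theta})$.
\end{itemize}
The M-step objective in \eqref{eq: M-step_obj} separates across types given the weights, and $\sum_m$ of these is exactly $\hat\ell_E^q$. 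So the $\boldsymbol{\theta}$-update is a single stacked argmax.

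\textbf{Step 1: the $\boldsymbol{\theta}$ block.} I would apply the implicit function theorem to the first-order condition $\nabla_{\boldsymbol{\vartheta}}\hat\ell_E^q(\vartheta_N(\boldsymbol{\zeta});\boldsymbol{\zeta})=0$. This needs consistency of the M-step maximizer in a neighborhood of $\boldsymbol{\zeta}_0$, which follows from uniform convergence of $\hat\ell_E^q$ (compact $\Theta$, smoothness, positive CCPs from \Cref{assump: LS primitives}) and the nonsingular Hessian in \Cref{assumption: mixture convergence regularity}. A second-order Taylor expansion of the first-order condition in $(\boldsymbol{\vartheta},\boldsymbol{\zeta})$ around $(\hat{\boldsymbol{\theta}},\hat{\boldsymbol{\zeta}})$ then gives
\[
\Delta\boldsymbol{\theta}^{(k)} = -\{\nabla^2_{\boldsymbol{\vartheta}\boldsymbol{\vartheta}}\hat\ell_E^q\}^{-1}\nabla^2_{\boldsymbol{\vartheta}\boldsymbol{\zeta}}\hat\ell_E^q\,\delta^{(k-1)} + O_p(\|\delta^{(k-1)}\|^2),
\]
with the Hessians evaluated at $(\hat{\boldsymbol{\theta}};\hat{\boldsymbol{\zeta}})$. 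Here $\vartheta_N(\hat{\boldsymbol{\zeta}}) = \hat{\boldsymbol{\theta}}$ is used, and the third derivatives are bounded by \Cref{assump: Gamma smoothness}.

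\textbf{Step 2: replacing sample Hessians by population ones.} This yields $H^q_{\boldsymbol{\eta}}$. Two facts combine: the sample second derivatives converge uniformly to the population ones at rate $O_p(N^{-1/2})$ by a CLT for the smooth i.i.d.\ summands, and $\hat{\boldsymbol{\zeta}}-\boldsymbol{\zeta}_0=O_p(N^{-1/2})$ under Lipschitz second derivatives. Together these make the discrepancy $O_p(N^{-1/2})\|\delta^{(k-1)}\|$. The $\boldsymbol{\pi}$ row is handled the same way: $N^{-1}\sum_i w^q_{im}$ is smooth in $\boldsymbol{\zeta}$, and its sample Jacobian at $\hat{\boldsymbol{\zeta}}$ equals $\Pi^q_{\boldsymbol{\eta}} + O_p(N^{-1/2})$.

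\textbf{Step 3: the $\bfY$ and $\bfP$ rows.} These follow by the chain rule, plus a Taylor expansion of $\boldsymbol{\Gamma}^q$ and $\boldsymbol{\Lambda}^q$ around $(\hat{\boldsymbol{\theta}},\hat\bfY,\hat\bfP,\hat{\boldsymbol{\theta}})$. The fixed-point identities $\hat\bfY = \boldsymbol{\Gamma}^q(\hat{\boldsymbol{\theta}},\hat\bfY,\hat\bfP,\hat{\boldsymbol{\theta}})$ and $\hat\bfP = \boldsymbol{\Lambda}^q(\cdots)$ (the latter using $\hat\bfY$ itself) cancel the constant terms. For example,
\[
\Delta\bfY^{(k)} = \boldsymbol{\Gamma}^q_{\boldsymbol{\theta}}\Delta\boldsymbol{\theta}^{(k)} + \boldsymbol{\Gamma}^q_{\tilde{\boldsymbol{\theta}}}\Delta\boldsymbol{\theta}^{(k-1)} + \boldsymbol{\Gamma}^q_{\bfY}\Delta\bfY^{(k-1)} + \boldsymbol{\Gamma}^q_{\bfP}\Delta\bfP^{(k-1)} + \text{remainder}.
\]
Substituting the $\boldsymbol{\theta}$-row expression for $\Delta\boldsymbol{\theta}^{(k)}$ produces exactly the third block row of $R^q$. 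The Jacobians at $\hat{\boldsymbol{\zeta}}$ differ from those at $\boldsymbol{\zeta}_0$ by $O_p(N^{-1/2})$, which is absorbed into the stated remainder. The $\bfP$ row is identical with $\boldsymbol{\Lambda}^q$ in place of $\boldsymbol{\Gamma}^q$. Note that $\boldsymbol{\Lambda}^q_{\bfY}$ there is the derivative through the initial guess and center of $\Gamma^q$.

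\textbf{Main obstacle.} The hard part is Step 1's justification. The remainder must be uniform over a shrinking neighborhood, and the sample M-step maximizer must be the local, well-defined, differentiable root near $\hat{\boldsymbol{\theta}}$, despite label switching and the $\boldsymbol{\theta}$-dependence entering both through the weights and through the fourth argument of $\Gamma^q$. I would handle this by fixing a labeling at $\boldsymbol{\theta}_0$, using \Cref{assump: LS local equiv} to select the unique local maximizer, and applying a uniform law of large numbers to third derivatives on a compact neighborhood to control the Taylor remainders.
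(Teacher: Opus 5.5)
Your proposal is correct and follows essentially the same route as the paper's proof. Both arguments linearize the $\boldsymbol{\pi}$-update and the M-step first-order condition around $\hat{\boldsymbol{\zeta}}$, replace the sample Jacobians and Hessians at $\hat{\boldsymbol{\zeta}}$ by their population counterparts at $\boldsymbol{\zeta}_0$ at $O_p(N^{-1/2})$ cost (via $\sqrt{N}$-consistency, smoothness, and the CLT), and obtain the $\bfY$ and $\bfP$ rows by expanding $\boldsymbol{\Gamma}^q$ and $\boldsymbol{\Lambda}^q$ and substituting $\Delta\boldsymbol{\theta}^{(k)}$.

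One small correction. What you need at $\hat{\boldsymbol{\zeta}}$ is $\nabla_{\boldsymbol{\vartheta}}\hat\ell_E^q(\hat{\boldsymbol{\theta}};\hat{\boldsymbol{\zeta}})=0$ and $\hat{\boldsymbol{\pi}}$ equal to the average posterior weight at $\hat{\boldsymbol{\zeta}}$. These follow from the argmax condition in $\mathcal{E}_N^q$ via the usual EM identity, not from \Cref{lemma: algorithm limit is fixed point}, which goes in the opposite direction. Your extra care about selecting the local M-step root and handling label switching goes beyond what the paper writes out.
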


The rate matrix $R^q$ has a sequential block form that reflects the sequential nature of the EM-NPL($q$) algorithm. The first row corresponds to the M-step's update of the mixing weights. The second row corresponds to the M-step's update of the structural parameter. The sensitivity of $\boldsymbol{\theta}^{(k)}$ to the previous iterates is governed by the implicit function $H$ blocks. The third and fourth rows compose the M-step output with the inner algorithm $\Gamma^q$ and the CCP mapping $\Lambda^q$.

The inner truncation level $q$ enters the rate matrix through $H^q_{\boldsymbol{\eta}}$ and the Jacobians of $(\Pi^q, \boldsymbol{\Gamma}^q, \boldsymbol{\Lambda}^q)$. The difference between these matrices and the corresponding counterpart for the EM-NPL algorithm is governed by the approximation error of the inner algorithm. Consequently, inner truncation affects the convergence speed of the outer loop. Under \Cref{assumption: theta separability}, $\boldsymbol{\Gamma}_{\boldsymbol{\theta}}^q = 0$ because the fixed-point mapping does not depend on the current structural parameter. When, in addition, $G$ does not depend on $\tilde{\boldsymbol{\theta}}$ (as in the policy valuation mapping of \Cref{ex: policy valuation}), $\boldsymbol{\Gamma}_{\tilde{\boldsymbol{\theta}}}^q = 0$ as well.

Let $R^{\mathrm{NPL}}$ denote the counterpart of $R^q$ in \Cref{proposition: mixture convergence matrix} with $Y_{\mathrm{fp}}(\theta, P, \tilde{\theta}, Y)$ in place of $\Gamma^q(\theta, Y, P, \tilde{\theta})$, i.e., the local convergence rate matrix of the EM-NPL algorithm. The following theorem shows that the EM-NPL($q$) algorithm converges locally whenever $\rho(R^{\mathrm{NPL}}) < 1$.
\begin{theorem}[Local Convergence for Sufficiently Large $q$] \label{theorem: npl(q) local convergence}
Suppose Assumptions \ref{assump: stationary initial} and \ref{assumption: approximation error upper bound} and the assumptions of \Cref{proposition: mixture convergence matrix} hold. If $\rho(R^{\mathrm{NPL}}) < 1$, then, for all sufficiently large $q$, the EM-NPL($q$) algorithm converges locally with probability approaching one.
\end{theorem}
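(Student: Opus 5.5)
The argument combines \Cref{proposition: mixture convergence matrix} with a perturbation bound on the rate matrix. The goal is to show $\|R^q - R^{\mathrm{NPL}}\| = O(f(q))$. Then continuity of the spectral radius gives $\rho(R^q) < 1$ for all sufficiently large $q$. Finally, the recursion $\delta^{(k)} = R^q\delta^{(k-1)} + O_p(N^{-1/2}\|\delta^{(k-1)}\| + \|\delta^{(k-1)}\|^2)$ yields contraction in a suitable norm.

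\textbf{Step 1: convergence of the Jacobians.} Under \Cref{assumption: approximation error upper bound}, define $D^q(\theta,P,\tilde\theta,\tilde Y) \coloneqq \Gamma^q(\theta,\tilde Y,P,\tilde\theta) - Y_{\mathrm{fp}}(\theta,P,\tilde\theta,\tilde Y)$. This difference satisfies $\|D^q\| \le \|Y_{\mathrm{fp}} - \tilde Y\|\, f(q)$ on a neighborhood $\mathcal N$ of $(\theta_0^m,P_0^m,\theta_0^m,Y_0^m)$, and the right-hand side vanishes at the true value. A pointwise bound on a function does not by itself bound its derivative. I would handle this by working on a fixed neighborhood where both $\Gamma^q$ and $Y_{\mathrm{fp}}$ are $C^3$ (\Cref{assump: Gamma smoothness}, \Cref{assump: fp jacobian}). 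I would then use an interpolation inequality of Landau--Kolmogorov type, $\|\nabla D\| \lesssim \|D\|^{1/2}\|\nabla^2 D\|^{1/2}$ on a ball. This needs uniform-in-$q$ bounds on second derivatives of $\Gamma^q$, which I would add as a regularity requirement if necessary; it holds for SA, GMRES and Newton locally.

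A cleaner route, which I would try first, exploits the structure at the fixed point. At $\tilde Y = Y_{\mathrm{fp}}$ the bound forces $D^q$ and $\|Y_{\mathrm{fp}}-\tilde Y\|f(q)$ to vanish together. Differentiating along directions, using the fact that $\|Y_{\mathrm{fp}}(\cdot)-\tilde Y\|$ is Lipschitz with constant $C$ near the fixed point, gives $\|\nabla D^q\| \le C f(q)$ at the true value directly (a one-sided difference quotient argument). Either route yields $\Gamma^q_{\boldsymbol\eta} - (Y_{\mathrm{fp}})_{\boldsymbol\eta} = O(f(q))$ for $\boldsymbol\eta \in \{\boldsymbol\theta,\bfY,\bfP,\tilde{\boldsymbol\theta}\}$, where $\Gamma^q_Y$ is compared with $(Y_{\mathrm{fp}})_{\tilde Y}$.

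By the chain rule, the Jacobians of $\boldsymbol\Lambda^q$ then converge at rate $f(q)$. Using the same argument with second derivatives (again via $C^3$), so do the cross-derivatives of $\ell_E^q$ and the posterior weights $w^q_{im}$. Since the M-step Hessian of the NPL version is nonsingular (\Cref{assumption: mixture convergence regularity}), its inverse is Lipschitz nearby, so $H^q_{\boldsymbol\eta} - H^{\mathrm{NPL}}_{\boldsymbol\eta} = O(f(q))$ and $\Pi^q_{\boldsymbol\eta} - \Pi^{\mathrm{NPL}}_{\boldsymbol\eta} = O(f(q))$. A further issue is that the fixed point $\boldsymbol\zeta_0$ should be the same for both, or converge. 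At the true value, \Cref{assumption: same fixed point} makes $\Gamma^q(\theta_0,Y_0,P_0,\theta_0) = Y_0 = Y_{\mathrm{fp}}$, so both matrices are evaluated at the same point. Assembling the blocks gives $\|R^q - R^{\mathrm{NPL}}\| = O(f(q)) \to 0$.

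\textbf{Step 2: spectral radius and contraction.} Eigenvalues depend continuously on matrix entries, so $\rho(R^q) \to \rho(R^{\mathrm{NPL}}) < 1$. Pick $q_0$ with $\rho(R^q) \le r < 1$ for $q \ge q_0$. By the standard result, there is an induced norm $\|\cdot\|_*$ with $\|R^q\|_* \le r + \epsilon < 1$. To make one norm work uniformly, I would build it for $R^{\mathrm{NPL}}$ and use $\|R^q - R^{\mathrm{NPL}}\|_* \to 0$.

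Now apply \Cref{proposition: mixture convergence matrix}. The estimator is consistent by \Cref{theorem: LS mixture asymptotic}, and the initial values are consistent (\Cref{assumption: mixture convergence regularity}), so $\delta^{(0)} = o_p(1)$. Hence with probability approaching one, $\|\delta^{(k)}\|_* \le (r+2\epsilon)\|\delta^{(k-1)}\|_*$ whenever $\|\delta^{(k-1)}\|_*$ is small and $N$ is large. Induction keeps the iterates in the neighborhood and gives geometric convergence to $\hat{\boldsymbol\zeta}$. The main obstacle is Step 1, specifically turning the zeroth-order error bound of \Cref{assumption: approximation error upper bound} into an $O(f(q))$ bound on first and second derivatives uniformly in a neighborhood. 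I expect that to need the fixed-point degeneracy trick plus uniform smoothness of $\Gamma^q$ in $q$.
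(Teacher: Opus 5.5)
Your architecture matches the paper's. You bound $R^q-R^{\mathrm{NPL}}$ (the paper's \Cref{lemma: M difference bound}), use continuity of the spectral radius, then pick a norm in which the linearized map contracts and absorb the $O_p(N^{-1/2}\|\delta\|+\|\delta\|^2)$ remainder. Building a single norm for $R^{\mathrm{NPL}}$ is fine. Your ``cleaner route'' for first derivatives is exactly \Cref{lemma: derivative approximation bound}. Because $\Gamma^q$ and $Y_{\mathrm{fp}}$ agree at the truth and the right-hand side of \Cref{assumption: approximation error upper bound} vanishes there, difference quotients give $\|\Gamma^q_{\boldsymbol{\eta}}-(Y_{\mathrm{fp}})_{\boldsymbol{\eta}}\|_F\le Cf(q)$. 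This settles the $\boldsymbol{\Gamma}^q$, $\boldsymbol{\Lambda}^q$ and $\Pi^q$ blocks and the derivatives of the posterior weights, all of which involve only first derivatives of $\Gamma^q$.

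The gap is in the $H^q_{\boldsymbol{\eta}}$ blocks. Differentiated term by term, $\nabla^2_{\boldsymbol{\vartheta}\boldsymbol{\vartheta}}\ell^q_E$ and $\nabla^2_{\boldsymbol{\vartheta}\boldsymbol{\eta}}\ell^q_E$ contain second derivatives of $\Gamma^q$ at the truth. ``The same argument with second derivatives'' does not control them. The difference-quotient trick works only because $D^q$ vanishes at the truth; $\nabla D^q$ does not, and a zeroth-order bound says nothing about $\nabla^2 D^q$.

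Concretely, take scalar $\theta,Y$ and a Bellman-type $Y_{\mathrm{fp}}(\theta)$, and set $\Gamma^q(\theta,Y)=Y_{\mathrm{fp}}(\theta)+f(q)\{Y-Y_{\mathrm{fp}}(\theta)\}\sin\{(\theta-\theta_0)/h_q\}$. This map is smooth for each $q$, satisfies \Cref{assumption: approximation error upper bound} and (for $f(q)<1$) \Cref{assumption: same fixed point}, and has exactly the right first derivatives at $(\theta_0,Y_0)$. Yet $\partial^2_{\theta Y}\Gamma^q(\theta_0,Y_0)=f(q)/h_q$, which diverges for $h_q=f(q)^2$, whereas $\partial^2_{\theta Y}Y_{\mathrm{fp}}=0$.

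Your interpolation route needs uniform-in-$q$ bounds on third derivatives of $\Gamma^q$. The theorem does not assume these: \Cref{assump: Gamma smoothness} holds only for each fixed $q$. Even with them, the route yields at best $O(f(q)^{1/2})$, not the $O(f(q))$ you claim.

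The missing idea is the information-matrix (Bartlett) equality. By \Cref{assumption: same fixed point}, the $q$-step type-$m$ log pseudo-likelihood $r^q_i(\xi^m)$, with $\xi^m=(\vartheta^m,Y^m,P^m,\theta^m)$, is the correctly specified type-$m$ log-likelihood at $\boldsymbol{\zeta}_0$. Likewise, the weights $w^q_{im}(\boldsymbol{\zeta}_0)$ are true posteriors. Hence, with $s^q$ denoting scores of $r^q_i$,
\[
\bE\bigl[\nabla^2_{\xi^m\vartheta^m}\{w^q_{im}\,r^q_i\}\bigr]=\bE\bigl[(\nabla_{\xi^m}w^q_{im})(s^q_{\vartheta^m i})^{\top}\bigr]-\bE\bigl[w^q_{im}\,s^q_{\xi^m i}(s^q_{\vartheta^m i})^{\top}\bigr],
\]
which involves only first derivatives of $\Gamma^q$. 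Equivalently, every term containing $\nabla^2\Gamma^q$ is multiplied by $\nabla_Y\log\Lambda_j(a|x)$, which has conditional mean zero given $x$. With this identity, your first-derivative bound and \Cref{lemma: outer product difference bound} give $\|H^q_{\boldsymbol{\eta}}-H^{\mathrm{NPL}}_{\boldsymbol{\eta}}\|_F=O(f(q))$ under the stated assumptions (this is \Cref{lemma: H approximation}). The rest of your argument then goes through.
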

The approximation error of the inner algorithm shrinks as $q$ grows, so $\rho(R^q)$ approaches $\rho(R^{\mathrm{NPL}})$. To translate this perturbation bound into a sufficient outer-iteration count, we specialize to the $\theta$-separable case (\Cref{assumption: theta separability}), in which \Cref{theorem: truncation invariance} guarantees that the EM-NPL($q$) and EM-NPL algorithms share a common fixed point. Let $\delta_q^{(k)}$ and $\delta_{\mathrm{NPL}}^{(k)}$ denote the deviations from this common fixed point generated by EM-NPL($q$) and EM-NPL, respectively, starting from the same initial deviation $\delta^{(0)}$. By \Cref{theorem: npl(q) local convergence}, the linearized recursions (dropping the $O_p(\cdot)$ remainder term) satisfy
\begin{equation*}
        \|\delta_q^{(k)}\| \leq \| (R^q)^k \|_F \|\delta^{(0)}\|, \qquad \|\delta_{\mathrm{NPL}}^{(k)}\| \leq \| (R^{\mathrm{NPL}})^k \|_F \|\delta^{(0)}\| .
\end{equation*}

The spectral radius $\rho(R^q)$ gives an upper bound on the linear convergence rate of the linearized recursion. The following result uses this to bound the number of outer iterations needed to reach a given tolerance. Let $\nu$ denote the size of the largest Jordan block of $R^{\mathrm{NPL}}$.
\begin{theorem}[Iteration Count Bound] \label{theorem: number of iterations bound}
    Suppose Assumptions \ref{assump: stationary initial}, \ref{assumption: theta separability} and \ref{assumption: approximation error upper bound}, and the assumptions of \Cref{proposition: mixture convergence matrix} hold, and $\rho(R^{\mathrm{NPL}}) < 1$. Then there exists a sequence $g(q)=O(f(q)^{1/\nu})$ such that $|\rho(R^{q}) - \rho(R^{\mathrm{NPL}})|\le g(q)$ for all $q$. Fix $c>0$ such that $\rho(R^{\mathrm{NPL}})+c<1$. Then there exists a constant $C_c$, not depending on $q$, such that $\|(R^q)^k\|_F \le C_c(\rho(R^q)+c)^k$ for all $k$ and all sufficiently large $q$. If $\varepsilon<C_c\|\delta^{(0)}\|$, then the linearized EM-NPL($q$) recursion satisfies $\|\delta_q^{(k)}\|\le \varepsilon$ for all
    \begin{equation*}
        k
        \ge
        K_q(c)
        \coloneqq
        \left\lceil
        \frac{\log\varepsilon-\log C_c-\log\|\delta^{(0)}\|}
        {\log\{\rho(R^q)+c\}}
        \right\rceil .
    \end{equation*}
    Choose $q$ sufficiently large so that $\rho(R^{\mathrm{NPL}})+g(q)+c<1$. Then,
    \begin{equation*}
        K_q(c)
        \le
        \left\lceil
        \frac{\log\varepsilon-\log C_c-\log\|\delta^{(0)}\|}
        {\log\{\rho(R^{\mathrm{NPL}})+g(q)+c\}}
        \right\rceil .
    \end{equation*}
\end{theorem}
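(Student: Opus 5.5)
The argument has three parts. First, bound $\|R^q - R^{\mathrm{NPL}}\|_F$ by $O(f(q))$. Second, pass from this matrix perturbation to a perturbation of the spectral radius, which gives the $f(q)^{1/\nu}$ rate. Third, obtain a uniform power bound and invert it to count iterations.

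For the first part, every block of $R^q$ is built from derivatives at $\boldsymbol{\zeta}_0$ of three objects: $\Gamma^q$, $\Lambda^q=\Lambda(\theta,\Gamma^q,P)$, and the posterior weights $w^q_{im}$ (through $\Pi^q$ and $H^q_{\boldsymbol{\eta}}$). The matching block of $R^{\mathrm{NPL}}$ is obtained by replacing $\Gamma^q(\theta,Y,P,\tilde\theta)$ with $Y_{\mathrm{fp}}(\theta,P,\tilde\theta,Y)$. Define the error map $e_q(\theta,Y,P,\tilde\theta)\coloneqq \Gamma^q(\theta,Y,P,\tilde\theta)-Y_{\mathrm{fp}}(\theta,P,\tilde\theta,Y)$. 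By \Cref{assumption: approximation error upper bound},
\[
\|e_q\|\le \|Y_{\mathrm{fp}}-Y\|\,f(q).
\]
The factor $\|Y_{\mathrm{fp}}-Y\|$ vanishes at the truth, since there $Y_0=Y_{\mathrm{fp}}$ by \Cref{assumption: same fixed point}. Because $Y_{\mathrm{fp}}$ is $C^1$ (\Cref{assump: fp jacobian}), this factor is Lipschitz in a neighborhood.

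The problem is that a pointwise bound on $e_q$ controls the first derivative of $e_q$ at $\boldsymbol{\zeta}_0$ only if some regularity is uniform in $q$. I would first try a difference-quotient argument: for a direction $h$,
\[
\|e_q(\zeta_0+th)-e_q(\zeta_0)\|/t\le C\|h\|f(q),
\]
because $e_q(\zeta_0)=0$ and $\|Y_{\mathrm{fp}}-Y\|\le C t\|h\|$ along the segment. Letting $t\to0$ gives $\|\nabla e_q(\zeta_0)\|\le Cf(q)$.

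The $H^q$ blocks, however, need second derivatives of $\ell_E^q$, and hence second derivatives of $\Gamma^q$. For those I would appeal to the three-times differentiability in \Cref{assump: Gamma smoothness}. I would also use interpolation: a function of size $O(f(q))$ near the truth whose third derivatives are bounded uniformly in $q$ has first and second derivatives of order $O(f(q))$, after possibly redefining $f$ up to a power. I would absorb any such loss into the $O(\cdot)$, or else assume the stronger derivative form of the bound. This gives $\|\Gamma^q_{\boldsymbol{\eta}}-(Y_{\mathrm{fp}})_{\boldsymbol{\eta}}\|=O(f(q))$. The chain rule then gives the same order for $\Lambda^q_{\boldsymbol{\eta}}$, $\Pi^q_{\boldsymbol{\eta}}$, and $\nabla^2\ell_E^q$. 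The common fixed point supplied by \Cref{theorem: truncation invariance} under \Cref{assumption: theta separability} ensures both matrices are evaluated at the same $\boldsymbol{\zeta}_0$. Nonsingularity of the NPL M-step Hessian (\Cref{assumption: mixture convergence regularity}), together with $\|A^{-1}-B^{-1}\|\le\|A^{-1}\|\|A-B\|\|B^{-1}\|$, transfers the bound to $H^q$. Hence $\|R^q-R^{\mathrm{NPL}}\|_F\le C_1 f(q)$. This derivative-control step is the main obstacle.

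For the second part, apply the Ostrowski–Elsner eigenvalue perturbation bound in its Jordan-form version (e.g.\ Stewart–Sun). Write $R^{\mathrm{NPL}}=XJX^{-1}$. Every eigenvalue $\mu$ of $R^q$ then lies within
\[
C_2\max(\eta,\eta^{1/\nu}),\qquad \eta=\kappa(X)\|R^q-R^{\mathrm{NPL}}\|,
\]
of some eigenvalue of $R^{\mathrm{NPL}}$, and conversely by continuity of the eigenvalues as a multiset. Taking maxima of moduli gives $|\rho(R^q)-\rho(R^{\mathrm{NPL}})|\le g(q)\coloneqq C_3 f(q)^{1/\nu}$ once $f(q)\le1$. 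For the finitely many remaining $q$, enlarge the constant so the bound holds for all $q$.

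For the third part, the power bound, fix $c$ and set $r=\rho(R^{\mathrm{NPL}})+c/2<1$. By Gelfand's formula choose $k_0$ with $\|(R^{\mathrm{NPL}})^{k_0}\|^{1/k_0}<r$. The map $A\mapsto\|A^{k_0}\|$ is continuous, so $\|(R^q)^{k_0}\|\le r^{k_0}$ for large $q$. Splitting $k=ak_0+b$ with $0\le b<k_0$ gives
\[
\|(R^q)^k\|_F\le\Bigl(\max_{b<k_0}\sup_q\|R^q\|^b\Bigr)r^{ak_0}\le C_c\,(\rho(R^q)+c)^k,
\]
using $r\le\rho(R^q)+c$ once $g(q)\le c/2$. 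The constant $C_c$ does not depend on $q$ because $\sup_q\|R^q\|<\infty$.

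Finally, the linearized recursion gives $\|\delta_q^{(k)}\|\le C_c(\rho(R^q)+c)^k\|\delta^{(0)}\|$. Solving $C_c(\rho(R^q)+c)^k\|\delta^{(0)}\|\le\varepsilon$, with a negative logarithm in the denominator (the hypothesis $\varepsilon<C_c\|\delta^{(0)}\|$ makes the numerator negative as well), yields $k\ge K_q(c)$. Monotonicity, via $\rho(R^q)\le\rho(R^{\mathrm{NPL}})+g(q)<1-c$ and the fact that $x\mapsto 1/\log x$ is decreasing on $(0,1)$ applied to a negative numerator, gives the displayed upper bound on $K_q(c)$.
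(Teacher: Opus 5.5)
\textbf{Gap in the first part.} The problem is at the step you flag yourself. Your difference-quotient argument for first derivatives of $\Gamma^q$ at the truth is correct; it is exactly \Cref{lemma: derivative approximation bound}. The interpolation remedy for the $H^q_{\boldsymbol{\eta}}$ blocks, however, fails on two counts.
\begin{enumerate}
\item \Cref{assump: Gamma smoothness} gives three continuous derivatives of $\Gamma^q$ for each fixed $q$, with no bound uniform in $q$. So there is no interpolation constant to use.
\item Even with such a bound, trading derivatives for powers starting from $\|e_q(\zeta)\|\le C\|\zeta-\zeta_0\|f(q)$ yields at best $\|\nabla^2 e_q(\zeta_0)\|=O(f(q)^{1/2})$, and nothing better follows from those facts alone. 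That loss cannot be absorbed into the $O(\cdot)$. It gives only $\|R^q-R^{\mathrm{NPL}}\|_F=O(f(q)^{1/2})$, hence $g(q)=O(f(q)^{1/(2\nu)})$, which is weaker than the stated rate.
\end{enumerate}
Your fallback of assuming a derivative form of \Cref{assumption: approximation error upper bound} adds a hypothesis the theorem does not have.

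\textbf{The missing idea.} Second derivatives of $\Gamma^q$ never need to be controlled.

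The paper's route is the information identity, used in \Cref{lemma: H approximation}, which feeds \Cref{lemma: M difference bound}. Because $\Gamma^q(\xi_0^m)=Y_0^m$, the type-$m$ pseudo-likelihood is the correctly specified type-$m$ density at $\xi_0^m$, and $w^q_{im}(\boldsymbol{\zeta}_0)$ is the true posterior. Hence $\bE[w^q_{im}\nabla_{\xi^m}s^q_{\vartheta^m i}]=-\bE[w^q_{im}s^q_{\xi^m i}(s^q_{\vartheta^m i})^{\top}]$. Adding the term from differentiating the posterior weights, every block of $\nabla^2_{\boldsymbol{\vartheta}\boldsymbol{\eta}}\ell_E^q$ becomes an expectation of products of first derivatives at the truth. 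Your difference-quotient bound and \Cref{lemma: outer product difference bound} control these at rate $f(q)$.

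In this theorem there is an even shorter route. Under \Cref{assumption: theta separability}, $\Gamma^q$ does not depend on the current $\vartheta$. So $\nabla^2_{\boldsymbol{\vartheta}\boldsymbol{\vartheta}}\ell_E^q(\boldsymbol{\theta}_0;\boldsymbol{\zeta}_0)$ sees $\Gamma^q$ only through its value $Y_0^m$ and coincides with its NPL counterpart. The cross blocks $\nabla^2_{\boldsymbol{\vartheta}\boldsymbol{\eta}}\ell_E^q$, including the weight-gradient terms, involve $\Gamma^q$ only through its first derivatives.

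This is where separability actually helps your argument. The use you give it, aligning evaluation points, is automatic: $R^q$ and $R^{\mathrm{NPL}}$ are both population Jacobians at $\boldsymbol{\zeta}_0$. In the paper, separability serves to give the two sample recursions a common fixed point.

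\textbf{The rest of the proposal.} With this repaired, the remaining steps are sound.
\begin{itemize}
\item Your spectral step is the same Jordan-form Bauer--Fike bound the paper uses.
\item Your Gelfand-formula power bound is a valid and more elementary alternative to the paper's Cauchy-integral bound $\|(R^q)^k\|_F\le r^{k+1}\max_{|z|=r}\|(zI-R^q)^{-1}\|_F$, where the paper bounds the resolvent uniformly in large $q$ by compactness. Note that your $C_c$ must absorb the factor $r^{-(k_0-1)}$ that comes from $r^{ak_0}=r^{k-b}$.
\item The final inversion and monotonicity step matches the paper.
\end{itemize}
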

\Cref{theorem: number of iterations bound} implies that the sufficient iteration count approaches the exact-solver benchmark as the perturbation $g(q)=O(f(q)^{1/\nu})$ shrinks. For superlinear inner solvers (GMRES, Newton's method), $g(q)$ can decay quickly in $q$, whereas for successive approximation it decays at rate $\rho_G^{q/\nu}$, where $\rho_G < 1$ is the Lipschitz constant of $G$. When $\rho(R^{\mathrm{NPL}})$ is close to one, the bound is most sensitive, and small perturbations in the outer rate can require noticeably more outer iterations.

\section{Practical Guidance} \label{sec: practical guidance}

\noindent This section provides practical guidance on choosing the fixed-point equation $G$, inner algorithm $\Gamma$, and truncation level $q$, which jointly determine the statistical and computational performance of the EM-NPL($q$) estimator. We organize the guidance around three stages: benchmarking inner solvers, selecting the algorithm--equation pair, and choosing $q$.

\subsubsection*{Stage 1: Benchmark Inner Solvers and Eliminate Dominated Algorithms}

Similar to the EM-NPL algorithm, the EM-NPL($q$) algorithm begins with an initial CCP estimate $P^{(0)}$. Natural candidates are: (i) random initialization; (ii) initialization from a model without unobserved heterogeneity, with the estimates perturbed \citep{arcidiacono2011conditional}; and (iii) initialization from a mixture model with unobserved heterogeneity (see \Cref{sec: initialization}). Given $P^{(0)}$, the first outer iteration solves the fixed-point equation to obtain the initial consistent estimator $Y^{(0)}$. This step is common to all EM-NPL($q$) variants and provides a natural opportunity to benchmark candidate inner solvers at negligible additional cost.

The practitioner should solve this initial system using each candidate algorithm, e.g., successive approximation (SA), GMRES, etc., and record the wall-clock time. Algorithms that are slower on the practitioner's specific problem can be immediately discarded as dominated. This benchmark is informative about the numerical structure of the problem. For example, for the policy valuation (PV) equation, properties such as the condition number of $(I - \beta F^{P})$ and its sparsity pattern, which govern the relative performance of different algorithms, are often preserved across subsequent EM-NPL iterations.

\subsubsection*{Stage 2: Select the Fixed-Point Equation and Algorithm}

Among the non-dominated algorithms from Stage~1, the practitioner next selects the combination of fixed-point equation $G$ and inner algorithm $\Gamma$ best suited to the model. This choice depends on three model features: the functional form of the utility function, the finite-dependence structure, and the discount factor $\beta$. \Cref{tab:algorithm_selection} summarizes the recommended algorithm--equation pairs based on these model features.

\paragraph{Functional form.} When the utility function is linear in parameters, $U(x,a) = \phi(x,a)^{\top}\theta$, the policy valuation problem (\Cref{ex: policy valuation}) and the EPL problem (\Cref{ex: epl}) both reduce to $\theta$-independent linear systems, solved once per outer iteration rather than at every parameter evaluation. By \Cref{theorem: truncation invariance}, EM-NPL($q$) is then numerically identical to the EM-NPL estimator for any $q \geq 1$, so it is efficient both for single-agent DDCs via policy valuation \citep{aguirregabiria2002swapping} and for dynamic games via EPL, and the choice of algorithm and $q$ can be guided solely by computational considerations.

When the utility function is nonlinear in parameters (\Cref{ex: bellman equation}), the inner problem is nonlinear and must be solved at every parameter evaluation. In this case, we recommend using Newton's method, as it typically converges much faster than SA. The choice between Bellman and Euler depends on the model structure and ease of implementation.

\paragraph{Finite dependence.} If the model exhibits finite dependence (\Cref{ex: euler}), the Euler equation representation is available. Because its contraction modulus is strictly less than $\beta$, even SA converges relatively quickly.

\paragraph{Discount factor.} The discount factor governs the difficulty of the inner fixed-point problem. When $\beta$ is small (e.g., $\beta = 0.8$), the choice of algorithm matters little. When $\beta$ is close to unity (e.g., $\beta = 0.9999$), SA becomes extremely slow.

\begin{table}[h!]
\centering
\caption{Algorithm selection based on model features}
\label{tab:algorithm_selection}
\footnotesize
\begin{tabular}{@{}l@{\hskip 8pt}l@{\hskip 8pt}p{5.8cm}@{}}
\toprule
Model features & Recommended $(\Gamma, G)$ & Rationale \\
\midrule
Linear utility & GMRES, PV/EPL & Superlinear convergence; no efficiency loss (\Cref{theorem: truncation invariance}) \\[6pt]
Finite dependence & SA, Euler & Lipschitz constant $< \beta$ \\[6pt]
Nonlinear, non-finite dependence, moderate $\beta$ & Newton, Bellman & Quadratic inner convergence \\
\bottomrule
\end{tabular}
\end{table}

\subsubsection*{Stage 3: Choose $q$}

The final choice is the number of inner iterations $q$. With superlinear solvers (GMRES, Newton), simulations show $q = 4$ to $6$ usually suffices for the outer loop to converge; we recommend starting low and increasing $q$ only if the outer loop fails to converge. SA may need larger values ($q = 8$ to $10$, or more) when $\beta$ is large, since the inner contraction modulus approaches one as $\beta \to 1$. EPL is a separate case. The inner map $\nabla_v \Phi(\tilde{\theta}, \tilde{v})$ in \Cref{ex: epl} is not a contraction; our simulations favor $q = 8$ to $12$ for computational speed. As in Stage 2, this affects only the speed of outer-loop convergence under linear utility, not statistical efficiency (\Cref{theorem: truncation invariance}).

\paragraph{Convergence check.} If the EM-NPL($q$) algorithm fails to converge at the initial $q$, increase $q$ and re-run. The convergence results in \Cref{sec: theory} guarantee that convergence is eventually achieved for sufficiently large $q$ whenever the standard EM-NPL algorithm converges.

\paragraph{Computational cost considerations.} The total computational cost is $\text{CT}(q) \approx h(q) \times K_q(c)$, where $h(q)$ is the cost of $q$ inner iterations and $K_q(c)$ is the sufficient outer-iteration bound in \Cref{theorem: number of iterations bound}. It is generally U-shaped in $q$. Too small a $q$ requires many outer iterations (large $K_q(c)$), while too large a $q$ wastes computation on unnecessary inner precision (large $h(q)$).

\section{Monte Carlo Simulations} \label{sec: simulation}

\noindent This section presents two Monte Carlo simulations to evaluate the performance of the proposed EM-NPL($q$) estimator. The first simulation considers a single-agent entry-and-exit model with unobserved heterogeneity based on \citet{aguirregabiria2023solving}. The second applies the estimator to a dynamic game based on \citet{aguirregabiria2007sequential}.

\subsection{Single-Agent Entry and Exit with Unobserved Heterogeneity} \label{sec: mixture mc}

\subsubsection{Design of Simulation}

\noindent We consider a single-firm entry and exit problem with unobserved heterogeneity that extends the model of \citet{aguirregabiria2023solving} to a three-type mixture model ($M = 3$).  At each period $t$, a firm decides whether to exit ($a_{t}=0$) or enter ($a_{t}=1$) the market.  An active firm ($a_t=1$) earns a flow payoff determined by variable profits ($VP_t$), fixed operating costs ($FC_t$), and entry cost ($EC_t$); an inactive firm receives a normalized payoff of zero
\begin{equation*}
    U(a_{t}, x_{t}, \eps_{t}) = \begin{cases}
        VP_t - FC_t - (1 - a_{t-1})\, EC_t + \eps_{t}(1) & \text{if } a_{t} = 1 \\
        \eps_{t}(0) & \text{if } a_{t} = 0
    \end{cases}
\end{equation*}
where $VP_t = (\theta_0^{VP} + \theta_1^{VP} z_{1t} + \theta_2^{VP} z_{2t}) \exp(w_t)$, $FC_t = \theta_0^{FC} + \theta_1^{FC} z_{3t}$, and $EC_t = \theta_0^{EC} + \theta_1^{EC} z_{4t}$.  The state vector is $x_t = (w_t, z_{1t}, z_{2t}, z_{3t}, z_{4t}, a_{t-1})$, where $z_{jt}$ follows an AR(1) process $z_{jt} = 0.6 z_{jt-1} + \eta_{jt}$ for $j=1,\dots,4$ with independent standard normal innovations.  We consider two scenarios for the productivity process $w_{t}$: (i)~finite dependence (FD): $w_t = 0.2 + 0.6 w_{t-1} + \eta_t$; (ii)~non-finite dependence (NFD): $w_t = 0.2 + 0.3 a_{t-1} + 0.6 w_{t-1} + \eta_t$.  The state space is discretized using Tauchen's method \citep{tauchen1986finite} with 6 grid points per variable, giving $|\mX| = 2 \times 6^5 = 15{,}552$.\footnote{For the NFD case, write $w_t = \gamma_0 + \gamma_1 a_{t-1} + \gamma_2 w_{t-1} + \eta_t$, where $(\gamma_0,\gamma_1,\gamma_2)=(0.2,0.3,0.6)$, $\eta_t \sim \mathcal{N}(0,\sigma_w^2)$, $\sigma_w=1$, and $n_{\sigma}=3$. The grid for $w$ spans $[\min(\mu_{d=0}, \mu_{d=1}) - n_{\sigma}\sigma_{LR},\, \max(\mu_{d=0}, \mu_{d=1}) + n_{\sigma}\sigma_{LR}]$, where $\mu_d=(\gamma_0+\gamma_1d)/(1-\gamma_2)$ and $\sigma_{LR}=\sigma_w/\sqrt{1-\gamma_2^2}$.} Because the discretized state components evolve independently conditional on the lagged action, we apply the transition operator using the Kronecker factorization described in \Cref{sec: kronecker}, without constructing the full transition matrix.

Firms are drawn from one of three unobserved types with type-specific parameter vectors $\theta_{0}^1 = (1.5, 1.5, -0.3, -0.3, -0.2, -0.3, -1)$, $\theta_{0}^2 = (0.2, 0.2, -0.2, -3.5, -2.0, -0.5, -3)$, and $\theta_{0}^3 = (0.8, 0.8, -1.0, -1.5, -0.8, -3.0, -1)$, with mixing weights $\pi_0 = (0.5, 0.3, 0.2)$. Type~1 has high variable profits and low costs, Type~2 has low profits and high entry costs, and Type~3 is intermediate. We simulate $N = 5{,}000$ firms, each observed for $T = 20$ periods, with discount factor $\beta \in \{0.8, 0.95, 0.9999\}$. We perform 500 Monte Carlo replications.

We compare six fixed-point--algorithm combinations: (i)~PV\_GMRES solves the policy valuation linear system $(I - \beta F^P)V=U_P$ using GMRES; (ii)~PV\_SA uses successive approximation on the same system; (iii)~BM\_SA applies successive approximation to the Bellman equation; (iv)~BM\_AD uses Anderson acceleration on the Bellman equation; (v)~BM\_NT applies Newton's method to the Bellman equation; and (vi)~EE\_SA uses successive approximation on the Euler equation (available in the FD case only).  For each method, we consider $q \in \{4, 6, 8, 10\}$ and report full-convergence baselines for PV\_GMRES, BM\_NT, and EE\_SA. Both policy valuation methods initialize $V^{(0)}$ by fully solving the policy valuation system with GMRES. They differ in the solver used for the subsequent $q$-step policy valuation updates.\footnote{We find that using SA for the initialization is at least ten times slower than using GMRES. The reported computational times include this common GMRES initialization.} Convergence is declared when $\max\{\|P^{(k)} - P^{(k-1)}\|_\infty,\, \|\theta^{(k)} - \theta^{(k-1)}\|_\infty,\, \|V^{(k)} - V^{(k-1)}\|_\infty / (1 + \|V^{(k-1)}\|_\infty)\} < 10^{-3}$. The inner solver tolerance is $10^{-8}$.

\subsubsection{Results}

\begin{table}[htbp]
\centering
\begin{threeparttable}
\caption{Mixture Model Results ($M=3$, $N=5000$, $\beta=0.95$)}
\small
\setlength{\tabcolsep}{4pt}
\label{tab:mixture_beta_0_95}
\begin{tabular}{ll|cccc|cccc}
\toprule
 & & \multicolumn{4}{c|}{Finite Dependence} & \multicolumn{4}{c}{Non-Finite Dependence} \\
\cmidrule(lr){3-6} \cmidrule(lr){7-10}
Method & $q$ & CT (s) & MSE & Avg Iter & Conv (\%) & CT (s) & MSE & Avg Iter & Conv (\%) \\
\midrule
PV\_GMRES & 4 & \phantom{0}3.14 (0.42) & 0.024 & \phantom{0}8.6 & 100.0 & \phantom{0}3.69 (0.30) & 0.050 & \phantom{0}5.8 & 100.0 \\
PV\_GMRES & 6 & \phantom{0}3.23 (0.57) & 0.024 & \phantom{0}8.6 & 100.0 & \phantom{0}3.63 (0.30) & 0.050 & \phantom{0}5.6 & 100.0 \\
PV\_GMRES & 8 & \phantom{0}3.38 (0.49) & 0.024 & \phantom{0}8.5 & 100.0 & \phantom{0}3.68 (0.33) & 0.050 & \phantom{0}5.7 & 100.0 \\
PV\_GMRES & 10 & \phantom{0}3.61 (0.61) & 0.024 & \phantom{0}8.5 & 100.0 & \phantom{0}3.81 (0.36) & 0.050 & \phantom{0}5.7 & 100.0 \\
\midrule
PV\_SA & 4 & \phantom{0}3.27 (0.46) & 0.024 & \phantom{0}9.1 & 100.0 & \phantom{0}3.75 (0.30) & 0.050 & \phantom{0}5.8 & 100.0 \\
PV\_SA & 6 & \phantom{0}3.50 (0.58) & 0.024 & \phantom{0}9.1 & 100.0 & \phantom{0}3.78 (0.34) & 0.050 & \phantom{0}5.8 & 100.0 \\
PV\_SA & 8 & \phantom{0}3.69 (0.62) & 0.024 & \phantom{0}8.9 & 100.0 & \phantom{0}3.87 (0.35) & 0.050 & \phantom{0}5.7 & 100.0 \\
PV\_SA & 10 & \phantom{0}3.90 (0.73) & 0.024 & \phantom{0}8.8 & 100.0 & \phantom{0}3.98 (0.37) & 0.050 & \phantom{0}5.7 & 100.0 \\
\midrule
BM\_SA & 4 & \phantom{0}8.82 (0.99) & 0.024 & 18.1 & 100.0 & 15.38 (1.37) & 0.050 & 18.0 & 100.0 \\
BM\_SA & 6 & 10.35 (1.21) & 0.024 & 14.1 & 100.0 & 15.81 (1.53) & 0.050 & 14.1 & 100.0 \\
BM\_SA & 8 & 11.72 (1.43) & 0.024 & 11.2 & 100.0 & 16.01 (1.52) & 0.050 & 11.0 & 100.0 \\
BM\_SA & 10 & 13.60 (1.67) & 0.024 & 10.3 & 100.0 & 18.31 (1.76) & 0.050 & 10.0 & 100.0 \\
\midrule
BM\_AD & 4 & 14.87 (3.30) & 0.024 & 10.9 & 100.0 & 24.79 (3.89) & 0.050 & 10.5 & 100.0 \\
BM\_AD & 6 & 13.50 (2.88) & 0.024 & \phantom{0}8.3 & 100.0 & 17.99 (2.30) & 0.050 & \phantom{0}6.1 & 100.0 \\
BM\_AD & 8 & 15.74 (2.15) & 0.024 & \phantom{0}7.3 & 100.0 & 20.35 (2.04) & 0.050 & \phantom{0}5.5 & 100.0 \\
BM\_AD & 10 & 19.35 (2.30) & 0.024 & \phantom{0}6.3 & 100.0 & 25.53 (2.49) & 0.050 & \phantom{0}5.5 & 100.0 \\
\midrule
BM\_NT & 4 & 21.90 (2.99) & 0.024 & \phantom{0}7.3 & 100.0 & 36.68 (4.21) & 0.050 & \phantom{0}5.5 & 100.0 \\
BM\_NT & 6 & 21.98 (3.02) & 0.024 & \phantom{0}7.3 & 100.0 & 36.89 (4.11) & 0.050 & \phantom{0}5.5 & 100.0 \\
BM\_NT & 8 & 21.95 (2.97) & 0.024 & \phantom{0}7.3 & 100.0 & 36.96 (4.23) & 0.050 & \phantom{0}5.5 & 100.0 \\
BM\_NT & 10 & 21.92 (2.89) & 0.024 & \phantom{0}7.3 & 100.0 & 36.92 (4.20) & 0.050 & \phantom{0}5.5 & 100.0 \\
\midrule
EE\_SA & 4 & \phantom{0}4.37 (0.55) & 0.024 & \phantom{0}6.8 & 100.0 & --- & --- & --- & --- \\
EE\_SA & 6 & \phantom{0}5.92 (0.80) & 0.024 & \phantom{0}7.2 & 100.0 & --- & --- & --- & --- \\
EE\_SA & 8 & \phantom{0}7.43 (1.02) & 0.024 & \phantom{0}7.3 & 100.0 & --- & --- & --- & --- \\
EE\_SA & 10 & \phantom{0}8.84 (1.21) & 0.024 & \phantom{0}7.2 & 100.0 & --- & --- & --- & --- \\
\midrule
PV\_GMRES & $\infty$ & \phantom{0}4.11 (0.70) & 0.024 & \phantom{0}8.5 & 100.0 & \phantom{0}4.32 (0.44) & 0.050 & \phantom{0}5.7 & 100.0 \\
BM\_NT & $\infty$ & 31.07 (4.70) & 0.024 & \phantom{0}7.3 & 100.0 & 51.29 (5.89) & 0.050 & \phantom{0}5.5 & 100.0 \\
EE\_SA & $\infty$ & 15.65 (2.23) & 0.024 & \phantom{0}7.3 & 100.0 & --- & --- & --- & --- \\
\bottomrule
\end{tabular}
\vspace{0.5em}
{\footnotesize \textit{Notes:} Results are averaged across 500 simulations ($M=3$ types, $N=5000$, $\beta=0.95$). $q$ denotes the number of inner iterations. CT reports the average computational time in seconds with standard deviation in parentheses. MSE is $\sum_{m=1}^{M}(\|\hat{\theta}_m - \theta_{m,0}\|^2 + \|\hat{\pi}_m - \pi_{m,0}\|^2)$ averaged across simulations. EM convergence is declared when $\max\{\|P^{(k)} - P^{(k-1)}\|_\infty,\, \|\theta^{(k)} - \theta^{(k-1)}\|_\infty,\, \|Y^{(k)} - Y^{(k-1)}\|_\infty / (1 + \|Y^{(k-1)}\|_\infty)\} < 10^{-3}$. The rows with $q=\infty$ show the results when the fixed-point equation is solved to full convergence.}
\end{threeparttable}
\end{table}

\noindent \Cref{tab:mixture_beta_0_95} reports the results for $\beta=0.95$. \Cref{tab:mixture_beta_0_80,tab:mixture_beta_0_9999} in \Cref{sec: additional simulation results} report the results for $\beta=0.80$ and $0.9999$. Three main findings emerge. First, truncating GMRES generally reduces computation time without affecting the estimates. Relative to full-convergence PV\_GMRES, $q=4$ reduces runtime by approximately 22\% and 18\% at $\beta=0.80$, 24\% and 15\% at $\beta=0.95$, and 26\% and 7\% at $\beta=0.9999$, for the finite- and non-finite-dependence cases, respectively. Because PV\_GMRES and PV\_SA share the same GMRES initialization, their total computational times are close. Second, the MSE of PV\_GMRES and PV\_SA is essentially unchanged across truncation levels in every case, consistent with \Cref{theorem: truncation invariance}. All methods converge in almost all replications at $\beta=0.80$ and $0.95$. At $\beta=0.9999$, however, BM\_SA reaches the 100-iteration limit in every replication. BM\_AD also fails at $q=4$, converges only partially at $q=6$, and achieves full convergence at $q\geq8$. Third, the policy valuation methods are substantially faster than the Bellman-based alternatives. At $\beta=0.95$ and $q=4$, PV\_GMRES takes 3.14 seconds in the finite-dependence case and 3.69 seconds in the non-finite-dependence case. The corresponding times are 3.27 and 3.75 seconds for PV\_SA, 8.82 and 15.38 seconds for BM\_SA, and 21.90 and 36.68 seconds for BM\_NT. The advantage remains large at $\beta=0.9999$: full-convergence PV\_GMRES takes 4.58 seconds in the finite-dependence case and 10.98 seconds in the non-finite-dependence case, compared with 38.01 and 76.33 seconds for full-convergence BM\_NT.

\subsection{Dynamic Games} \label{sec: dynamic games}

\subsubsection{Design of Simulation}

\noindent There are $J = 7$ firms competing across independent markets. In each period $t$, firms simultaneously choose whether to be active ($a_{jt} = 1$) or inactive ($a_{jt} = 0$). The flow payoff for firm $j$ is 
\begin{equation*}
    U_{j}(a_{jt}, x_{t}, \eps_{jt}) = \begin{cases}
        \begin{aligned}
            &\theta_{\text{RS}} s_{t} + \theta_{\text{Z}_1} z_{1t} - \theta_{\text{Z}_2} z_{2t} - \theta_{\text{RC}} \log\bigl(1 + \textstyle\sum_{j' \neq j} a_{j't}\bigr) \\
            &\quad {}- \theta_{\text{EC}}(1 - a_{j,t-1}) - \theta_{\text{FC},j} + \eps_{jt}(1)
        \end{aligned} & \text{if } a_{jt} = 1 \\[8pt]
        \eps_{jt}(0) & \text{if } a_{jt} = 0
    \end{cases}
\end{equation*}
where $s_{t} \in \{1,2,3,4,5\}$ is the market size.\footnote{$s_{t}$ follows a Markov chain with transition matrix $\begin{pmatrix}
    0.8 & 0.2 & 0   & 0   & 0   \\
    0.2 & 0.6 & 0.2 & 0   & 0   \\
    0   & 0.2 & 0.6 & 0.2 & 0   \\
    0   & 0   & 0.2 & 0.6 & 0.2 \\
    0   & 0   & 0   & 0.2 & 0.8
\end{pmatrix}
$} The demand shifter $z_{1t}$ and cost shifter $z_{2t}$ evolve according to $z_{\ell t}=0.6z_{\ell,t-1}+\nu_{\ell t}$ for $\ell\in\{1,2\}$, where the innovations $\nu_{\ell t}\sim N(0,1)$ are mutually independent, and $a_{j,t-1}$ is firm $j$'s action in the previous period. We discretize each of $z_{1t}$ and $z_{2t}$ using Tauchen's method \citep{tauchen1986finite} with five grid points. The state vector is $x_t=(s_t,z_{1t},z_{2t},a_{1,t-1},\ldots,a_{J,t-1})$, so the state space has cardinality $|\mX|=5\times5\times5\times2^7=16{,}000$. $\theta_{\text{RS}}$, $\theta_{\text{Z}_1}$, and $\theta_{\text{Z}_2}$ govern the effects of market size, the demand shifter, and the cost shifter, respectively. $\theta_{\text{RC}}$ captures the strategic interaction, $\theta_{\text{EC}}$ is the entry cost parameter, and $\theta_{\text{FC},j}$ are firm-specific fixed costs.

The true parameter values are $\theta_{\text{RS}}=1$, $\theta_{\text{Z}_1}=0.5$, $\theta_{\text{Z}_2}=0.3$, $\theta_{\text{EC}}=1$, $\beta=0.95$, and $\theta_{\text{FC}}=(1.9,1.8,1.7,1.6,1.5,1.4,1.3)$. We consider two values of $\theta_{\text{RC}}\in\{2.4,4\}$, referred to as the \textit{mildly unstable} and \textit{very unstable} cases following the terminology of \citet{aguirregabiria2021imposing}. Because $s_t$, $z_{1t}$, and $z_{2t}$ evolve independently, their joint transition matrix is $P_{\mathrm{exog}}=P_S\otimes P_{\text{Z}_1}\otimes P_{\text{Z}_2}$. As in the single-agent simulation, we apply the exogenous transition operator using the Kronecker factorization described in \Cref{sec: kronecker} and combine it with the conditional transition probabilities over lagged action profiles without constructing the full game transition matrix.

\subsubsection{Results}

\noindent We compare five estimator--algorithm combinations. All methods initialize CCPs via a pooled logit estimator with firm fixed effects. (i)~PV\_GMRES solves the policy valuation linear system $(I-\beta F^P)V=U_P$ for each firm $j$ using GMRES, where $F^P$ is the $16{,}000\times16{,}000$ transition operator induced by the firms' CCPs and the exogenous state process. (ii)~PV\_SA replaces GMRES with successive approximation on the same linear system. (iii)~BM\_SA replaces the policy valuation step with the Bellman equation and applies $q$ SA iterations to the Bellman operator while fixing other firms' CCPs. (iv)~EPL\_GMRES uses the EPL mapping (\Cref{ex: epl}) and solves the resulting linear system using matrix-free GMRES. EPL\_GMRES($\infty$) corresponds to the estimator of \citet{fukasawa2025epl}. The EPL system has a substantially larger dimension, $J|\mA||\mX|=7\times2\times16{,}000=224{,}000$, compared to the policy valuation system's $16{,}000$. (v)~PV\_EPL\_GMRES first runs PV\_GMRES until the NPL outer loop converges and then uses the resulting parameter, CCP, and value-function estimates to initialize EPL\_GMRES($q$). This benchmark examines whether the computationally cheaper policy valuation iterations provide a good warm start for the computationally demanding EPL iterations. For each method, we consider $q\in\{4,8,12,16\}$ inner iterations. Full-convergence baselines are reported for PV\_GMRES, EPL\_GMRES, and PV\_EPL\_GMRES.

Following the spectral algorithm of \citet{aguirregabiria2021imposing}, we update the CCPs using a Barzilai--Borwein step to accelerate outer-loop convergence. Convergence is declared when $\max\{\|P^{(k)} - P^{(k-1)}\|_\infty,\, \|\theta^{(k)} - \theta^{(k-1)}\|_\infty,\, \|Y^{(k)} - Y^{(k-1)}\|_\infty / (1 + \|Y^{(k-1)}\|_\infty)\} < 10^{-3}$.  The inner solver tolerance is $10^{-8}$.  All results are averaged over 500 Monte Carlo simulations with $N = 1{,}600$ independent markets.

\Cref{tab:game_combined} reports the results. Three patterns emerge. First, PV\_GMRES is the fastest method by a wide margin. In the mildly unstable design, PV\_GMRES($q=4$) takes 0.41 minutes, compared with 0.51 minutes for PV\_SA, 1.43 minutes for BM\_SA, and 10.84 minutes for EPL\_GMRES. Under stronger strategic interaction, the corresponding times are 0.80, 0.93, 3.15, and 17.97 minutes. PV\_GMRES converges in every replication for all values of $q$. BM\_SA is less reliable in the very unstable design: its convergence rate is 95.6\% at $q=4$ and 100\% at $q\geq8$. EPL\_GMRES and PV\_EPL\_GMRES each converge in 99.8\% of replications at $q=4$ and in every replication at $q\geq8$ in this design.

Second, the EPL-based estimators achieve lower MSE than the NPL-based estimators. Under full inner convergence, EPL\_GMRES reduces MSE from 0.558 to 0.535 in the mildly unstable design, a reduction of approximately 4\%, and from 0.463 to 0.343 in the very unstable design, a reduction of approximately 26\%. In the latter design, EPL\_GMRES also reduces aggregate absolute bias from 0.139 to 0.016. PV\_EPL\_GMRES produces essentially the same MSE as EPL\_GMRES.

Third, inner truncation substantially reduces the computational cost of EPL when $q$ is sufficiently large. Relative to full-convergence EPL\_GMRES, $q=8$ in the mildly unstable design and $q=12$ in the very unstable design reduce runtime by about 58\% and achieve essentially the same MSE. In contrast, EPL\_GMRES with $q=4$ requires many more outer iterations and is therefore slower than with $q=8$ or $q=12$. The PV\_EPL\_GMRES warm start yields comparable MSE and can provide a further modest reduction in runtime. Thus, PV\_GMRES with $q=4$ remains the preferred default when computational speed is the priority, whereas EPL\_GMRES with $q=8$--$12$ is attractive when lower MSE justifies substantially greater computation.

\begin{table}[htbp]
\centering
\begin{threeparttable}
\caption{Dynamic Game Simulation Results ($\beta=0.95$, $N=1{,}600$)}
\footnotesize
\setlength{\tabcolsep}{3pt}
\label{tab:game_combined}
\begin{tabular}{ll|ccccc|ccccc}
\toprule
 &  & \multicolumn{5}{c|}{Mildly Unstable: $\theta_{\text{RC}}=2.4$} & \multicolumn{5}{c}{Very Unstable: $\theta_{\text{RC}}=4.0$} \\
\cmidrule(lr){3-7}
\cmidrule(lr){8-12}
Method & $q$ & CT (min) & MSE & Bias & Avg Iter & Conv & CT (min) & MSE & Bias & Avg Iter & Conv \\
\midrule
PV\_GMRES & 4 & 0.41 (0.10) & 0.558 & 0.096 & 8.6 & 100.0 & 0.80 (0.19) & 0.463 & 0.139 & 14.7 & 100.0 \\
PV\_GMRES & 8 & 0.43 (0.11) & 0.558 & 0.096 & 8.6 & 100.0 & 0.84 (0.20) & 0.463 & 0.139 & 14.6 & 100.0 \\
PV\_GMRES & 12 & 0.46 (0.11) & 0.558 & 0.096 & 8.6 & 100.0 & 0.88 (0.21) & 0.463 & 0.139 & 14.6 & 100.0 \\
PV\_GMRES & 16 & 0.49 (0.12) & 0.558 & 0.096 & 8.6 & 100.0 & 0.94 (0.23) & 0.463 & 0.139 & 14.6 & 100.0 \\
\midrule
PV\_SA & 4 & 0.51 (0.10) & 0.558 & 0.096 & 8.6 & 100.0 & 0.93 (0.18) & 0.463 & 0.140 & 15.7 & 100.0 \\
PV\_SA & 8 & 0.52 (0.10) & 0.558 & 0.096 & 8.6 & 100.0 & 0.92 (0.19) & 0.463 & 0.139 & 14.8 & 100.0 \\
PV\_SA & 12 & 0.53 (0.11) & 0.558 & 0.096 & 8.6 & 100.0 & 0.93 (0.20) & 0.463 & 0.139 & 14.6 & 100.0 \\
PV\_SA & 16 & 0.54 (0.11) & 0.558 & 0.096 & 8.6 & 100.0 & 0.94 (0.21) & 0.463 & 0.139 & 14.6 & 100.0 \\
\midrule
BM\_SA & 4 & 1.43 (0.52) & 0.580 & 0.023 & 14.4 & 100.0 & 3.15 (1.67) & 0.513 & 0.194 & 38.4 & 95.6 \\
BM\_SA & 8 & 2.13 (0.93) & 0.595 & 0.028 & 9.5 & 100.0 & 4.38 (1.66) & 0.510 & 0.189 & 21.9 & 100.0 \\
BM\_SA & 12 & 2.52 (1.01) & 0.594 & 0.025 & 8.1 & 100.0 & 4.93 (1.62) & 0.509 & 0.193 & 16.2 & 100.0 \\
BM\_SA & 16 & 2.60 (0.99) & 0.593 & 0.021 & 8.1 & 100.0 & 5.73 (1.93) & 0.509 & 0.197 & 15.5 & 100.0 \\
\midrule
EPL\_GMRES & 4 & 10.84 (2.38) & 0.535 & 0.112 & 16.2 & 100.0 & 17.97 (3.86) & 0.344 & 0.016 & 28.7 & 99.8 \\
EPL\_GMRES & 8 & 7.99 (1.60) & 0.535 & 0.111 & 6.4 & 100.0 & 14.78 (4.27) & 0.343 & 0.016 & 13.4 & 100.0 \\
EPL\_GMRES & 12 & 8.78 (2.08) & 0.535 & 0.111 & 5.6 & 100.0 & 12.92 (3.60) & 0.343 & 0.016 & 8.0 & 100.0 \\
EPL\_GMRES & 16 & 9.96 (2.60) & 0.535 & 0.111 & 5.4 & 100.0 & 13.14 (2.81) & 0.343 & 0.016 & 6.6 & 100.0 \\
\midrule
PV\_EPL\_GMRES & 4 & 9.95 (3.22) & 0.535 & 0.111 & 21.6 & 100.0 & 16.28 (3.24) & 0.343 & 0.016 & 36.7 & 99.8 \\
PV\_EPL\_GMRES & 8 & 7.92 (1.78) & 0.535 & 0.111 & 14.2 & 100.0 & 13.63 (3.36) & 0.343 & 0.016 & 24.8 & 100.0 \\
PV\_EPL\_GMRES & 12 & 8.93 (2.35) & 0.535 & 0.111 & 13.7 & 100.0 & 12.80 (2.70) & 0.343 & 0.016 & 21.3 & 100.0 \\
PV\_EPL\_GMRES & 16 & 10.12 (2.88) & 0.535 & 0.111 & 13.6 & 100.0 & 13.26 (2.39) & 0.343 & 0.016 & 20.4 & 100.0 \\
\midrule
PV\_GMRES & $\infty$ & 0.51 (0.13) & 0.558 & 0.096 & 8.6 & 100.0 & 0.97 (0.24) & 0.463 & 0.139 & 14.6 & 100.0 \\
EPL\_GMRES & $\infty$ & 18.46 (7.23) & 0.535 & 0.111 & 5.3 & 100.0 & 30.68 (8.26) & 0.343 & 0.016 & 6.1 & 100.0 \\
PV\_EPL\_GMRES & $\infty$ & 17.92 (7.77) & 0.535 & 0.111 & 13.6 & 100.0 & 28.66 (7.49) & 0.343 & 0.016 & 20.2 & 100.0 \\
\bottomrule
\end{tabular}
\vspace{0.5em}
{\footnotesize \textit{Notes:} $J=7$ firms, $|\mX|=16{,}000$, and the EPL system has dimension $224{,}000$. Results are averaged across 500 simulations. $q$ denotes the number of inner iterations. CT reports average computational time in minutes, with standard deviations in parentheses. MSE is $S^{-1}\sum_{s=1}^{S}\|\hat{\theta}_s-\theta^*\|^2$. Bias is $\|\bar b\|_1$, where $\bar b=S^{-1}\sum_{s=1}^{S}(\hat{\theta}_s-\theta^*)$. Avg Iter is the average number of outer iterations, and Conv is the convergence rate (\%). The inner solver tolerance is $10^{-8}$. Outer-loop convergence is declared when $\max\{\|P^{(k)}-P^{(k-1)}\|_\infty,\|\theta^{(k)}-\theta^{(k-1)}\|_\infty,\|Y^{(k)}-Y^{(k-1)}\|_\infty/(1+\|Y^{(k-1)}\|_\infty)\}<10^{-3}$. The rows with $q=\infty$ solve the inner linear system to full convergence.}
\end{threeparttable}
\end{table}

\paragraph{Summary.} Across both Monte Carlo designs, PV\_GMRES with a small truncation level ($q=4$) is among the fastest methods and achieves full convergence with MSE essentially unchanged across truncation levels. Consistent with \Cref{theorem: truncation invariance}, truncation invariance holds empirically for the policy valuation-based estimators. EPL-based estimators offer lower MSE at substantially higher computational cost. At well-chosen truncation levels, EPL\_GMRES reduces runtime by about 58\% relative to full-convergence EPL\_GMRES without materially changing MSE. These findings validate the practical recommendations in \Cref{sec: practical guidance}: start with PV\_GMRES at small $q$ for speed, and switch to EPL\_GMRES at moderate $q$ when efficiency is the priority.

\FloatBarrier

\section{Empirical Application}\label{sec:empirical}

\noindent We illustrate the EM-NPL($q$) estimator with an application to household demand for cola in London. The estimation results show that ignoring unobserved heterogeneity substantially biases estimates of price elasticities and counterfactual policy effects.

\subsection{Data}\label{subsec:data_processing}

\noindent We construct household-level purchase histories from the Worldpanel by Numerator Take Home data for London covering the period 2015--2019.  The panel records every grocery purchase made by a representative sample of households, including price, quantity, date, and product characteristics.  The Cola category combines bottled and canned colas.  We classify Coca-Cola and Pepsi products by diet type (Regular, Diet, or Zero/Max, labeled Zero for Coca-Cola and Max for Pepsi) and group all remaining brands into a single Other category. This yields seven choice alternatives: Coca-Cola Regular, Coca-Cola Diet, Coca-Cola Zero, Pepsi Regular, Pepsi Diet, Pepsi Max, and Other.  The six named groups jointly account for 87.1\% of category trips.

We aggregate multiple purchases by the same household on the same date to the trip level. When a household makes multiple cola purchases on a single day, we retain only the brand--type combination with the highest expenditure on that trip.  We restrict the sample to London households and to those with more than 10 and fewer than 500 category trips over the five-year window.

After these filters, the sample contains 75{,}559 trips. The overall repeat rate is 70.9\%, and the median inter-purchase spell is 9 days. Table~\ref{tab:cola_groups} reports the choice shares and repeat-purchase rates.

\begin{table}[htbp]
\centering
\caption{Cola: Brand $\times$ Diet Type Choice Groups}
\label{tab:cola_groups}
\small
\begin{tabular}{lcccc}
\toprule
Brand $\times$ Attribute & Share (\%) & Cumul.\ (\%) & Repeat (\%) \\
\midrule
Coca-Cola $\mid$ Regular & 24.1 & 24.1 & 75.6 \\
Coca-Cola $\mid$ Diet & 20.3 & 44.4 & 74.1 \\
Pepsi $\mid$ Max & 20.1 & 64.4 & 70.4 \\
Coca-Cola $\mid$ Zero & 10.2 & 74.6 & 61.7 \\
Pepsi $\mid$ Regular & 6.7 & 81.3 & 61.4 \\
Pepsi $\mid$ Diet & 5.7 & 87.1 & 55.8 \\
Other & 12.9 & 100.0 & 76.5 \\
\bottomrule
\end{tabular}
\begin{flushleft}
\footnotesize
\textit{Notes:} Trip-level choice shares and repeat-purchase rates for London households. Data from Worldpanel by Numerator, GB Take Home data, 2015--2019. Repeat~(\%) is the fraction of purchases where the household chose the same category as the previous trip.
\end{flushleft}
\end{table}

\subsection{Model}\label{subsec:model_estimation}
\noindent We assume that each household makes a purchase every 9 days, the median inter-purchase spell. At each purchase occasion~$t$, household~$i$ of type~$m$ chooses action $a_{it}\in\mA=\{0,\ldots,6\}$ to maximize the expected discounted utility, where the seven alternatives correspond to the Cola choice groups defined in Table~\ref{tab:cola_groups}: Coca-Cola Regular, Coca-Cola Diet, Coca-Cola Zero, Pepsi Regular, Pepsi Diet, Pepsi Max, and Other.  The per-period payoff for type~$m$ is
\begin{equation*}
  u(a,\, a_{-1}, p;\, \theta^m) = \gamma_a^m + \alpha^m p_a + \eta^m\, \mathbbm{1}\{a = a_{-1}\} + \eps(a),
\end{equation*}
where $p$ denotes the vector of current prices, $a_{-1}$ is the previous purchase, $p_a$ is the unit price of product~$a$, $\gamma_a^m$ captures type-specific product fixed effects, $\alpha^m$ is the type-specific price coefficient, and $\eta^m$ measures type-specific state dependence. The error $\eps(a)$ follows an i.i.d.\ Type I extreme value distribution. We collect the type-specific parameters into $\theta^m = (\gamma_1^m,\ldots,\gamma_6^m, \alpha^m, \eta^m) \in \mathbb{R}^8$. We set the discount factor to $\beta=0.99$.

Prices follow a first-order vector autoregression. We discretize the estimated VAR(1) with 5 grid points per product, yielding $5^7 = 78{,}125$ price states. Combined with seven previous-action states, the augmented state space has $7 \times 78{,}125 = 546{,}875$ elements.

We estimate a finite mixture model with $M$ latent types using the EM-NPL($q$) algorithm with $q=10$ GMRES iterations to solve the policy valuation problem. We first estimate a static logit model within the income group and initialize each type's parameter vector by adding an independent \(N(0,0.05^2I)\) perturbation to the static-logit estimate. For EM-NPL($q$), the convergence is declared when $\max\{\|\boldsymbol{\theta}^{(k)} - \boldsymbol{\theta}^{(k-1)}\|_\infty, \|\bfP^{(k)} - \bfP^{(k-1)}\|_\infty, \|\boldsymbol{\pi}^{(k)} - \boldsymbol{\pi}^{(k-1)}\|_\infty\} < 10^{-4}$. We estimate models for $M \in \{1, 2, \ldots, 10\}$.

As our utility function is linear in parameters and we solve the policy valuation problem, $\theta$-separability (\Cref{assumption: theta separability}) holds, so by \Cref{theorem: truncation invariance} our estimator is numerically identical to the EM-NPL estimator. To obtain standard errors for the counterfactual analysis, we draw 100 draws from the asymptotic distribution of the estimated parameters and report the standard deviation across draws.

\subsection{Estimation Results and Counterfactual Analysis}\label{subsec:estimation_results}

\noindent We estimate the model separately for three income groups defined by: low (under \pounds30{,}000), middle (\pounds30{,}000--\pounds49{,}999), and high (\pounds50{,}000 and above). Each income group contains approximately 300-500 households. Estimating separately by income group allows the distribution of latent types, and hence the pattern of brand preferences and price sensitivity, to differ across income segments.

\Cref{fig:computation_time} reports the average computational time of the EM-NPL($q$) algorithm for each $M$ and across the three income groups and 150 random initializations.\footnote{We also implemented the sieve-MLE initialization discussed in \Cref{sec: initialization}. However, it takes around 10 hours to compute the sieve-MLE estimates for $M=10$, which is prohibitively long for our application. Also, the sieve-MLE initialization does not yield higher likelihood values than the random initialization.} Computation time scales approximately linearly in $M$. For $M=10$, the average runtime is about 4.8 minutes, which is feasible for empirical applications.
\begin{figure}[htbp]
  \centering
  \includegraphics[width=0.6\textwidth]{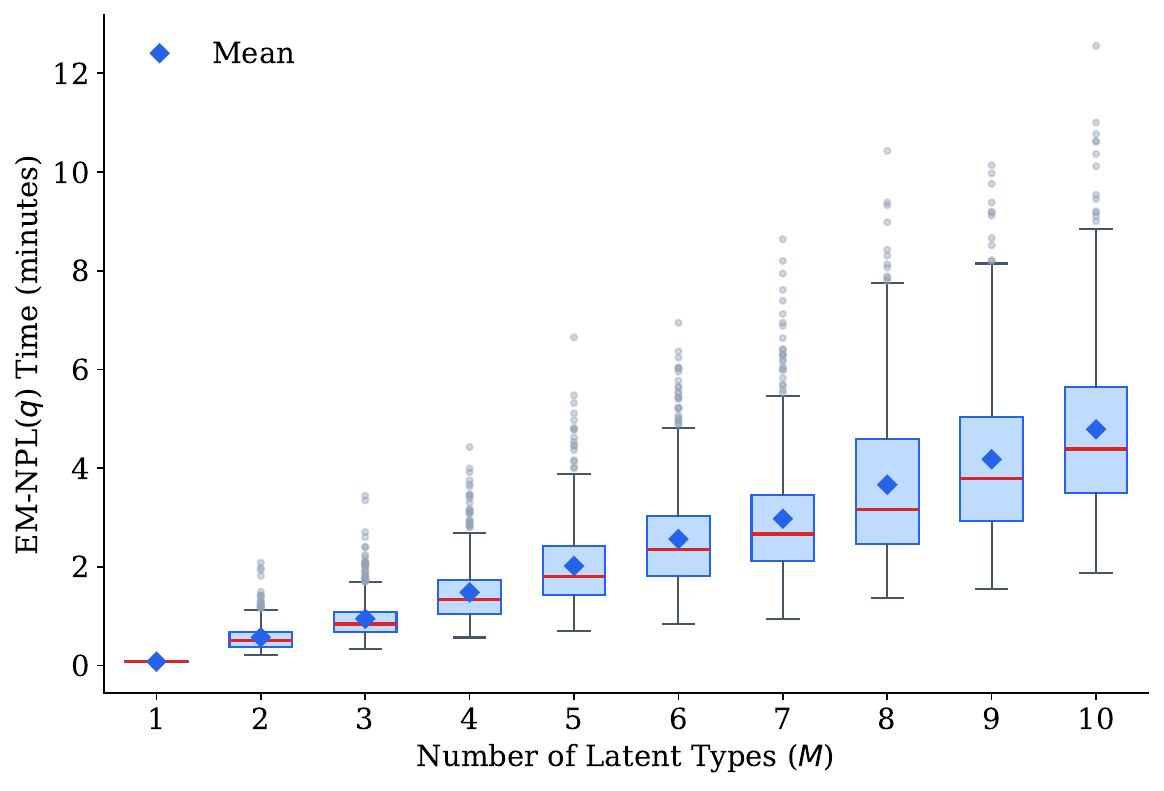}
  \caption{Computational time by number of latent types}
  \label{fig:computation_time}
\end{figure}

\subsubsection{Short-Run and Long-Run Price Elasticities}\label{sssec:sr_lr_elast}

\noindent We compute short-run and long-run price elasticities. The \emph{short-run} elasticity measures the contemporaneous response to an unanticipated, temporary 1\% price increase, holding continuation values fixed. The \emph{long-run} elasticity permanently shifts product~$j$'s price by 1\% and re-solves the dynamic programming problem. Figures~\ref{fig:lr_own_elast_low}--\ref{fig:sr_own_elast_high} plot both elasticities as a function of~$M$ for each income group, and Table~\ref{tab:cross_elas} reports the full long-run cross-price elasticity matrices at the benchmark specification.

In general, own-price elasticities become more negative across all three income groups as additional latent types are introduced. This pattern reflects stronger substitution once the model absorbs persistent taste heterogeneity. The elasticity paths generally stabilize around $M=5$, so we use $M=5$ as the benchmark specification.

For the low-income group (Figures~\ref{fig:lr_own_elast_low}--\ref{fig:sr_own_elast_low}), the long-run own-price elasticity for Coca-Cola Regular shifts from about $-0.91$ under homogeneity ($M=1$) to $-1.93$ (s.e.\ 0.42) at the benchmark $M=5$, while the short-run elasticity shifts from about $-0.10$ to $-0.44$. The largest movement is for Pepsi Regular, whose long-run own-price elasticity goes from about $-0.59$ to $-2.85$ (s.e.\ 0.65) and whose short-run elasticity goes from about $-0.13$ to $-0.80$. Coca-Cola Diet, in contrast, remains close to its homogeneous estimate, at $-0.75$ (s.e.\ 0.54). The cross-price estimates (Panel~A of Table~\ref{tab:cross_elas}) indicate particularly strong substitution from Coca-Cola Regular toward Pepsi Regular. A permanent 1\% increase in the price of Coca-Cola Regular increases Pepsi Regular's long-run share by 2.99\%, compared with 0.50\% for Coca-Cola Zero and 0.27\% for Coca-Cola Diet.

The middle-income group (Figures~\ref{fig:lr_own_elast_medium}--\ref{fig:sr_own_elast_medium} and Panel~B of Table~\ref{tab:cross_elas}) displays a similar pattern of own-price responses. The long-run own-price elasticities are $-1.70$ for Coca-Cola Regular, $-1.36$ for Coca-Cola Diet, $-1.12$ for Coca-Cola Zero, and $-2.69$ for Pepsi Regular. The cross-price matrix also indicates substantial cross-brand substitution: a permanent 1\% increase in the price of Coca-Cola Regular increases Pepsi Regular's long-run share by 2.24\%.

The high-income group (Figures~\ref{fig:lr_own_elast_high}--\ref{fig:sr_own_elast_high} and Panel~C of Table~\ref{tab:cross_elas}) also exhibits substantial price sensitivity. The long-run own-price elasticity is $-1.54$ (s.e.\ 0.66) for Coca-Cola Regular and $-1.34$ (s.e.\ 0.64) for Coca-Cola Zero, while Pepsi Max reaches $-0.95$ (s.e.\ 0.46). Pepsi Regular has the largest long-run own-price elasticity in magnitude, at $-2.80$ (s.e.\ 0.51), followed by Coca-Cola Regular. A permanent 1\% increase in the price of Coca-Cola Regular increases Pepsi Regular's long-run share by 2.79\%. The short-run elasticities are especially sensitive to~$M$ in all three groups. Under homogeneity, consumers appear locked into their habitual brand, and their immediate price response is suppressed; as heterogeneity is introduced, this spurious persistence is reduced and the short-run response becomes substantially larger.

\begin{figure}[htbp]
  \centering
  \begin{subfigure}[t]{0.48\textwidth}
    \centering
    \caption{Long-run --- Low income}
    \label{fig:lr_own_elast_low}
    \includegraphics[width=\textwidth]{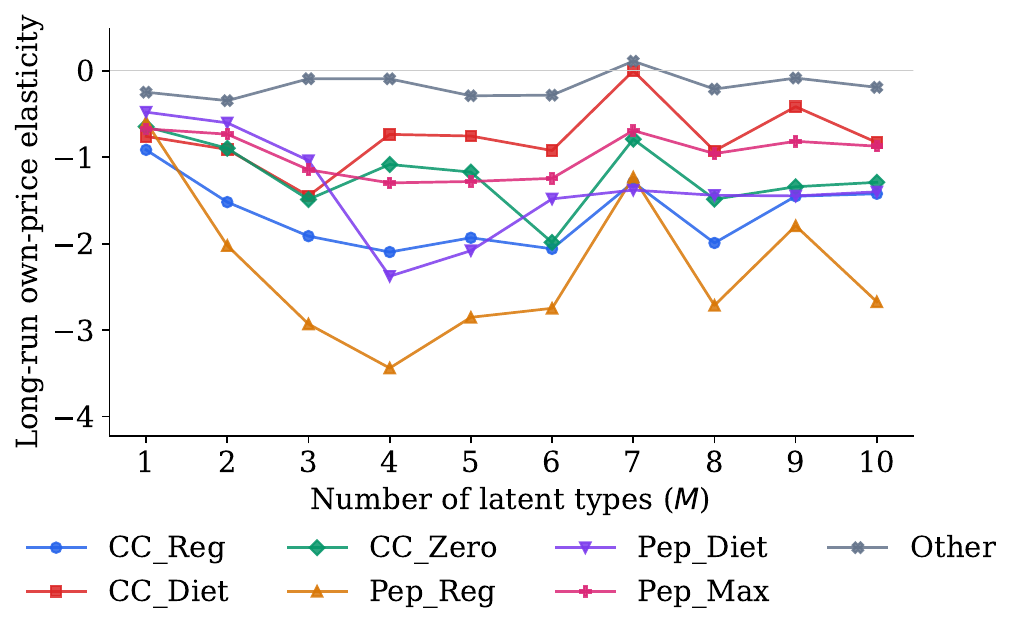}
  \end{subfigure}
  \hfill
  \begin{subfigure}[t]{0.48\textwidth}
    \centering
    \caption{Short-run --- Low income}
    \label{fig:sr_own_elast_low}
    \includegraphics[width=\textwidth]{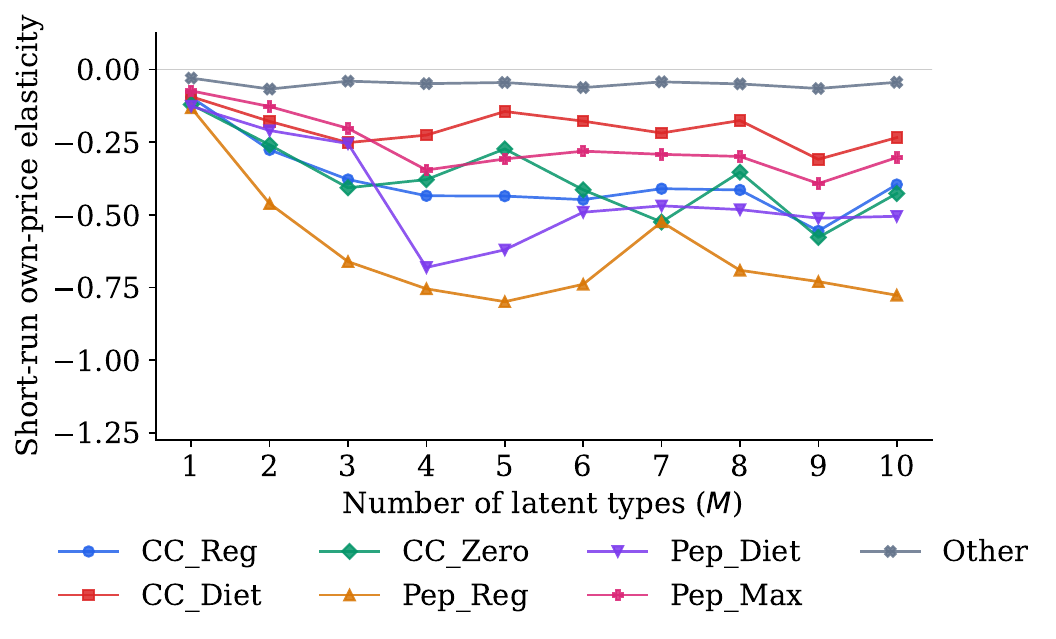}
  \end{subfigure}

  \vspace{0.5em}

  \begin{subfigure}[t]{0.48\textwidth}
    \centering
    \caption{Long-run --- Middle income}
    \label{fig:lr_own_elast_medium}
    \includegraphics[width=\textwidth]{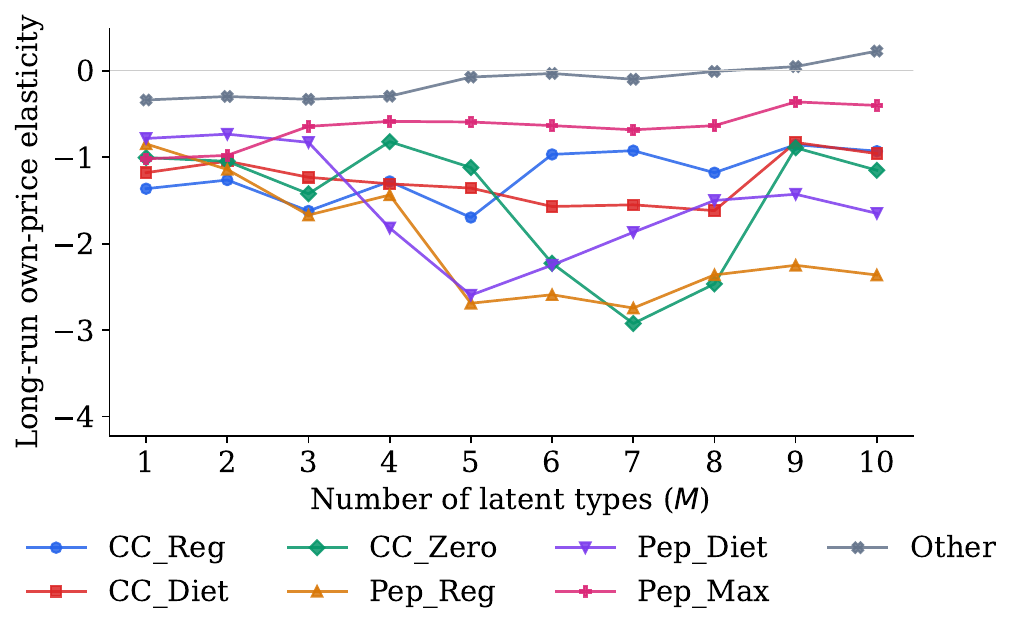}
  \end{subfigure}
  \hfill
  \begin{subfigure}[t]{0.48\textwidth}
    \centering
    \caption{Short-run --- Middle income}
    \label{fig:sr_own_elast_medium}
    \includegraphics[width=\textwidth]{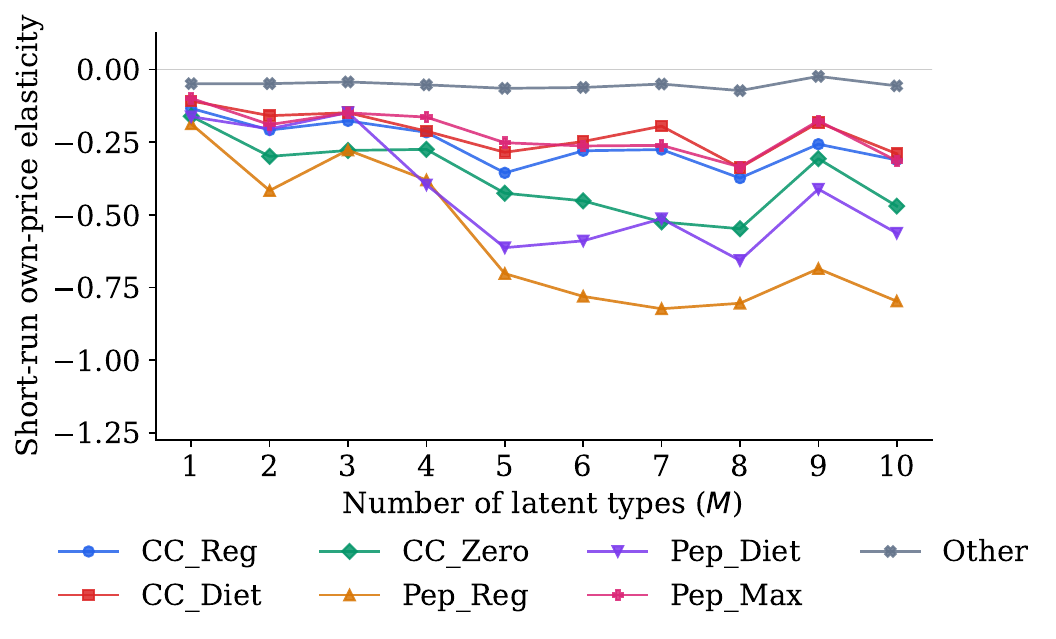}
  \end{subfigure}

  \vspace{0.5em}

  \begin{subfigure}[t]{0.48\textwidth}
    \centering
    \caption{Long-run --- High income}
    \label{fig:lr_own_elast_high}
    \includegraphics[width=\textwidth]{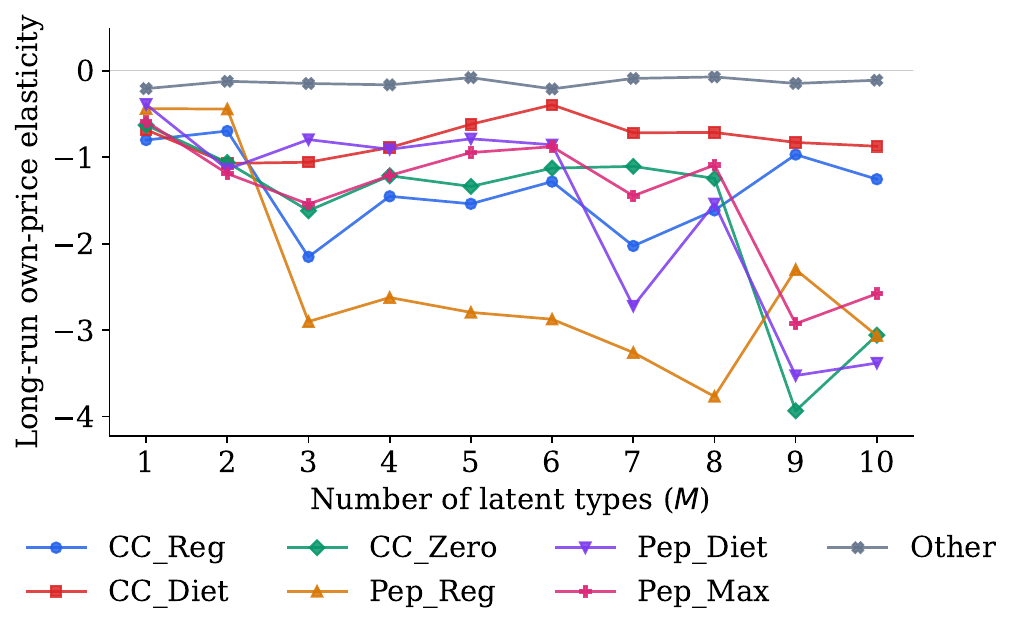}
  \end{subfigure}
  \hfill
  \begin{subfigure}[t]{0.48\textwidth}
    \centering
    \caption{Short-run --- High income}
    \label{fig:sr_own_elast_high}
    \includegraphics[width=\textwidth]{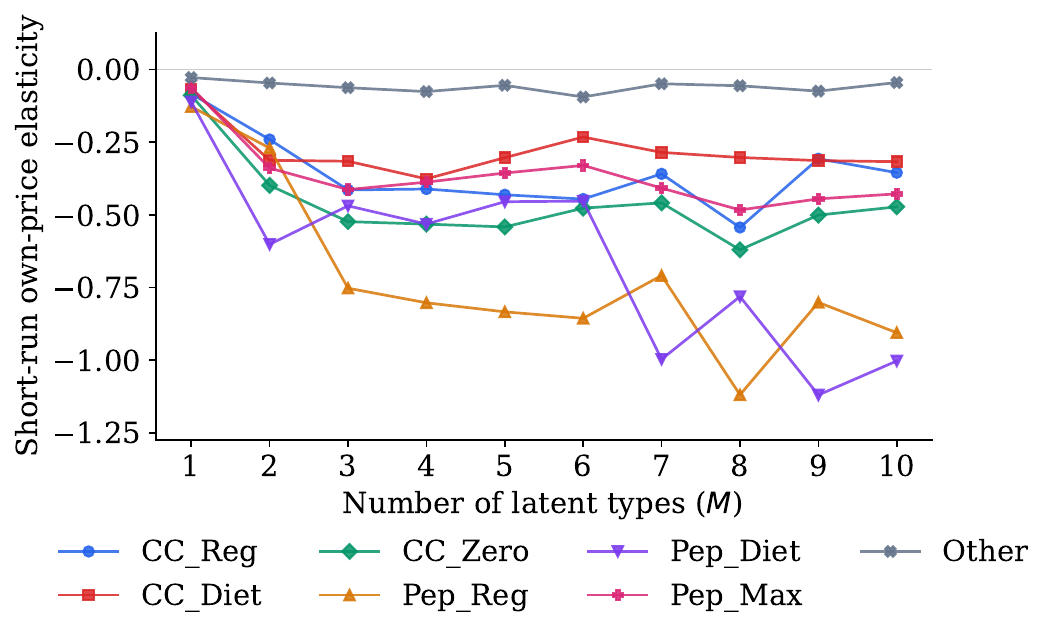}
  \end{subfigure}
  \caption{Own-price elasticities by number of latent types~$M$}
  \label{fig:own_elast}
\end{figure}

\begin{table}[htbp]
\centering
\caption{Long-run cross-price elasticities  ($M = 5$)}
\label{tab:cross_elas}
\footnotesize
\setlength{\tabcolsep}{3pt}
\begin{tabular}{lccccccc}
\toprule
 & CC\_Reg & CC\_Diet & CC\_Zero & Pep\_Reg & Pep\_Diet & Pep\_Max & Other \\
\midrule
\multicolumn{8}{l}{\textit{Panel A: Low income}} \\[3pt]
CC\_Reg & $-1.93\;(0.42)$ & $\phantom{-}0.16\;(0.11)$ & $\phantom{-}0.15\;(0.09)$ & $\phantom{-}0.74\;(0.14)$ & $\phantom{-}0.02\;(0.02)$ & $\phantom{-}0.27\;(0.10)$ & $\phantom{-}0.06\;(0.04)$ \\
CC\_Diet & $\phantom{-}0.27\;(0.18)$ & $-0.75\;(0.54)$ & $\phantom{-}0.09\;(0.17)$ & $\phantom{-}0.03\;(0.06)$ & $\phantom{-}0.02\;(0.02)$ & $\phantom{-}0.19\;(0.12)$ & $\phantom{-}0.06\;(0.07)$ \\
CC\_Zero & $\phantom{-}0.50\;(0.25)$ & $\phantom{-}0.17\;(0.29)$ & $-1.17\;(0.81)$ & $\phantom{-}0.06\;(0.10)$ & $\phantom{-}0.02\;(0.03)$ & $\phantom{-}0.25\;(0.19)$ & $\phantom{-}0.08\;(0.10)$ \\
Pep\_Reg & $\phantom{-}2.99\;(0.63)$ & $\phantom{-}0.07\;(0.09)$ & $\phantom{-}0.07\;(0.09)$ & $-2.85\;(0.65)$ & $\phantom{-}0.01\;(0.01)$ & $\phantom{-}0.14\;(0.07)$ & $\phantom{-}0.03\;(0.05)$ \\
Pep\_Diet & $\phantom{-}0.18\;(0.09)$ & $\phantom{-}0.07\;(0.06)$ & $\phantom{-}0.04\;(0.05)$ & $\phantom{-}0.01\;(0.01)$ & $-2.08\;(0.58)$ & $\phantom{-}1.78\;(0.51)$ & $\phantom{-}0.01\;(0.02)$ \\
Pep\_Max & $\phantom{-}0.49\;(0.17)$ & $\phantom{-}0.21\;(0.12)$ & $\phantom{-}0.14\;(0.12)$ & $\phantom{-}0.07\;(0.03)$ & $\phantom{-}0.44\;(0.14)$ & $-1.28\;(0.43)$ & $\phantom{-}0.05\;(0.05)$ \\
Other & $\phantom{-}0.37\;(0.20)$ & $\phantom{-}0.21\;(0.25)$ & $\phantom{-}0.15\;(0.20)$ & $\phantom{-}0.04\;(0.12)$ & $\phantom{-}0.01\;(0.02)$ & $\phantom{-}0.17\;(0.16)$ & $-0.29\;(0.23)$ \\
\midrule
\multicolumn{8}{l}{\textit{Panel B: Middle income}} \\[3pt]
CC\_Reg & $-1.70\;(0.67)$ & $\phantom{-}0.36\;(0.41)$ & $\phantom{-}0.19\;(0.14)$ & $\phantom{-}0.57\;(0.14)$ & $\phantom{-}0.05\;(0.03)$ & $\phantom{-}0.16\;(0.08)$ & $\phantom{-}0.02\;(0.03)$ \\
CC\_Diet & $\phantom{-}0.45\;(0.49)$ & $-1.36\;(0.47)$ & $\phantom{-}0.09\;(0.11)$ & $\phantom{-}0.07\;(0.03)$ & $\phantom{-}0.55\;(0.21)$ & $\phantom{-}0.14\;(0.08)$ & $\phantom{-}0.01\;(0.03)$ \\
CC\_Zero & $\phantom{-}0.56\;(0.30)$ & $\phantom{-}0.22\;(0.24)$ & $-1.12\;(0.57)$ & $\phantom{-}0.17\;(0.13)$ & $\phantom{-}0.03\;(0.03)$ & $\phantom{-}0.20\;(0.17)$ & $\phantom{-}0.01\;(0.02)$ \\
Pep\_Reg & $\phantom{-}2.24\;(0.54)$ & $\phantom{-}0.22\;(0.08)$ & $\phantom{-}0.21\;(0.23)$ & $-2.69\;(0.62)$ & $\phantom{-}0.07\;(0.04)$ & $\phantom{-}0.21\;(0.12)$ & $\phantom{-}0.02\;(0.01)$ \\
Pep\_Diet & $\phantom{-}0.27\;(0.12)$ & $\phantom{-}2.32\;(0.97)$ & $\phantom{-}0.06\;(0.07)$ & $\phantom{-}0.09\;(0.05)$ & $-2.60\;(0.91)$ & $\phantom{-}0.17\;(0.07)$ & $\phantom{-}0.01\;(0.01)$ \\
Pep\_Max & $\phantom{-}0.29\;(0.13)$ & $\phantom{-}0.20\;(0.12)$ & $\phantom{-}0.12\;(0.11)$ & $\phantom{-}0.10\;(0.05)$ & $\phantom{-}0.06\;(0.03)$ & $-0.59\;(0.39)$ & $-0.01\;(0.11)$ \\
Other & $\phantom{-}0.16\;(0.19)$ & $\phantom{-}0.07\;(0.18)$ & $\phantom{-}0.02\;(0.04)$ & $\phantom{-}0.05\;(0.04)$ & $\phantom{-}0.01\;(0.02)$ & $-0.05\;(0.57)$ & $-0.07\;(0.25)$ \\
\midrule
\multicolumn{8}{l}{\textit{Panel C: High income}} \\[3pt]
CC\_Reg & $-1.54\;(0.66)$ & $\phantom{-}0.20\;(0.27)$ & $\phantom{-}0.23\;(0.15)$ & $\phantom{-}0.48\;(0.14)$ & $\phantom{-}0.03\;(0.10)$ & $\phantom{-}0.22\;(0.15)$ & $\phantom{-}0.01\;(0.02)$ \\
CC\_Diet & $\phantom{-}0.23\;(0.32)$ & $-0.62\;(0.51)$ & $\phantom{-}0.08\;(0.12)$ & $\phantom{-}0.04\;(0.02)$ & $\phantom{-}0.09\;(0.11)$ & $\phantom{-}0.16\;(0.14)$ & $\phantom{-}0.01\;(0.01)$ \\
CC\_Zero & $\phantom{-}0.51\;(0.29)$ & $\phantom{-}0.16\;(0.20)$ & $-1.34\;(0.64)$ & $\phantom{-}0.06\;(0.04)$ & $\phantom{-}0.01\;(0.04)$ & $\phantom{-}0.59\;(0.36)$ & $\phantom{-}0.01\;(0.01)$ \\
Pep\_Reg & $\phantom{-}2.79\;(0.54)$ & $\phantom{-}0.18\;(0.10)$ & $\phantom{-}0.14\;(0.08)$ & $-2.80\;(0.51)$ & $\phantom{-}0.01\;(0.02)$ & $\phantom{-}0.14\;(0.10)$ & $\phantom{-}0.00\;(0.01)$ \\
Pep\_Diet & $\phantom{-}0.21\;(0.30)$ & $\phantom{-}0.55\;(0.43)$ & $\phantom{-}0.04\;(0.07)$ & $\phantom{-}0.01\;(0.02)$ & $-0.79\;(0.59)$ & $\phantom{-}0.09\;(0.10)$ & $-0.00\;(0.03)$ \\
Pep\_Max & $\phantom{-}0.36\;(0.22)$ & $\phantom{-}0.23\;(0.18)$ & $\phantom{-}0.45\;(0.28)$ & $\phantom{-}0.04\;(0.03)$ & $\phantom{-}0.02\;(0.04)$ & $-0.95\;(0.46)$ & $\phantom{-}0.02\;(0.03)$ \\
Other & $\phantom{-}0.07\;(0.13)$ & $\phantom{-}0.07\;(0.08)$ & $\phantom{-}0.03\;(0.05)$ & $\phantom{-}0.00\;(0.02)$ & $-0.00\;(0.05)$ & $\phantom{-}0.07\;(0.19)$ & $-0.08\;(0.14)$ \\
\bottomrule
\end{tabular}
\begin{flushleft}
\footnotesize
\textbf{Note:} Entry $(k,j)$ reports the point estimate with standard error in parentheses for the percentage change in product~$k$'s ergodic market share per 1\% permanent increase in product~$j$'s price. Standard errors are computed as the standard deviation across 100 draws from the asymptotic distribution of the estimated parameters.
\end{flushleft}
\end{table}

Figures~\ref{fig:cc_reg_type_low}--\ref{fig:cc_reg_type_high} decompose the long-run own-price elasticity of Coca-Cola Regular across latent types for each income group, where types are ranked from the least to the most price-elastic and the mixture probabilities $\pi_m$ are reported below each bar. The figures reveal substantial heterogeneity in price sensitivity across types. In the low-income group, Types~1--2 have relatively small and imprecisely estimated elasticities. The most prevalent type is Type~3 ($\pi_3=0.245$), with an elasticity of approximately $-2.44$, while Type~5 ($\pi_5=0.174$) is the most elastic, at approximately $-2.94$. In the middle-income group, Type~1 has a positive but imprecisely estimated elasticity whose confidence interval includes zero. The most prevalent type is Type~2 ($\pi_2=0.252$), with an elasticity of approximately $-0.59$, while Type~5 is the most elastic, at approximately $-3.07$. In the high-income group, the two least elastic types have confidence intervals that include zero. Type~5 ($\pi_5=0.289$) is both the most prevalent and the most elastic, with an elasticity of approximately $-2.53$. Types with elasticities near zero likely represent brand-loyal consumers whose Coca-Cola Regular purchases are relatively insensitive to price changes.

\begin{figure}[h!]
  \centering
  \begin{subfigure}[t]{0.45\textwidth}
    \centering
    \caption{Low income}
    \label{fig:cc_reg_type_low}
    \includegraphics[width=\textwidth]{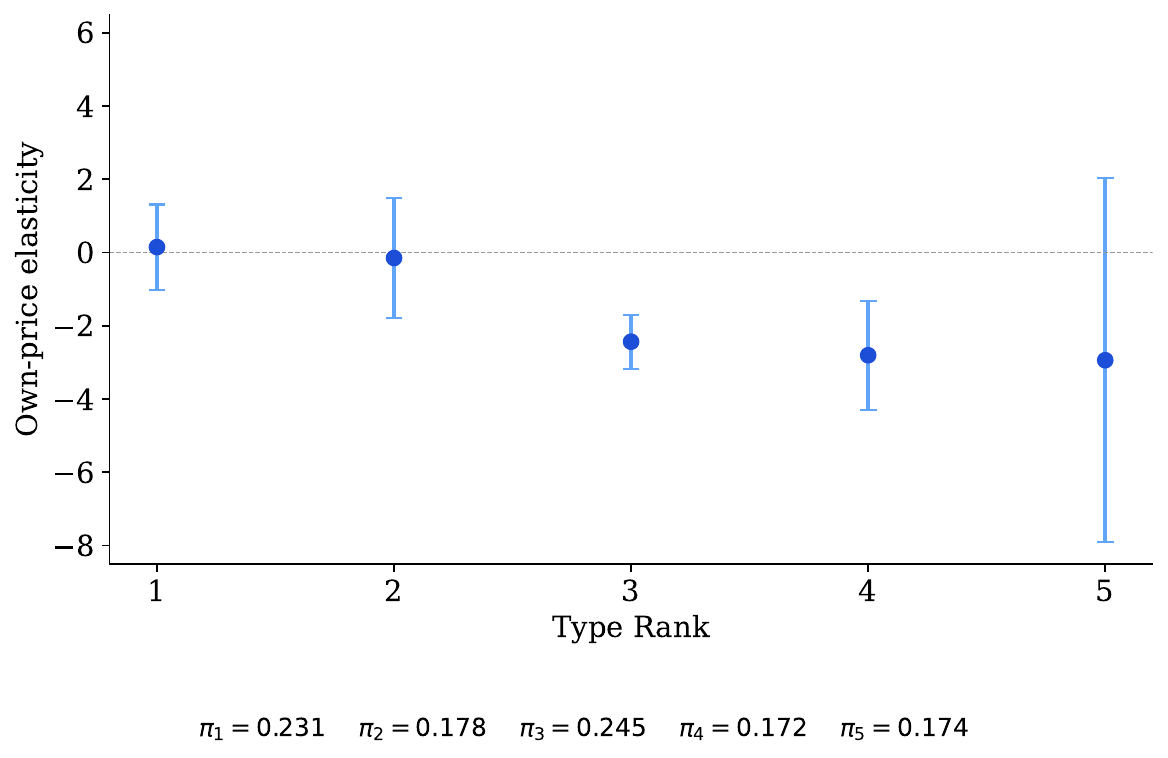}
  \end{subfigure}
  \hfill
  \begin{subfigure}[t]{0.45\textwidth}
    \centering
    \caption{Middle income}
    \label{fig:cc_reg_type_medium}
    \includegraphics[width=\textwidth]{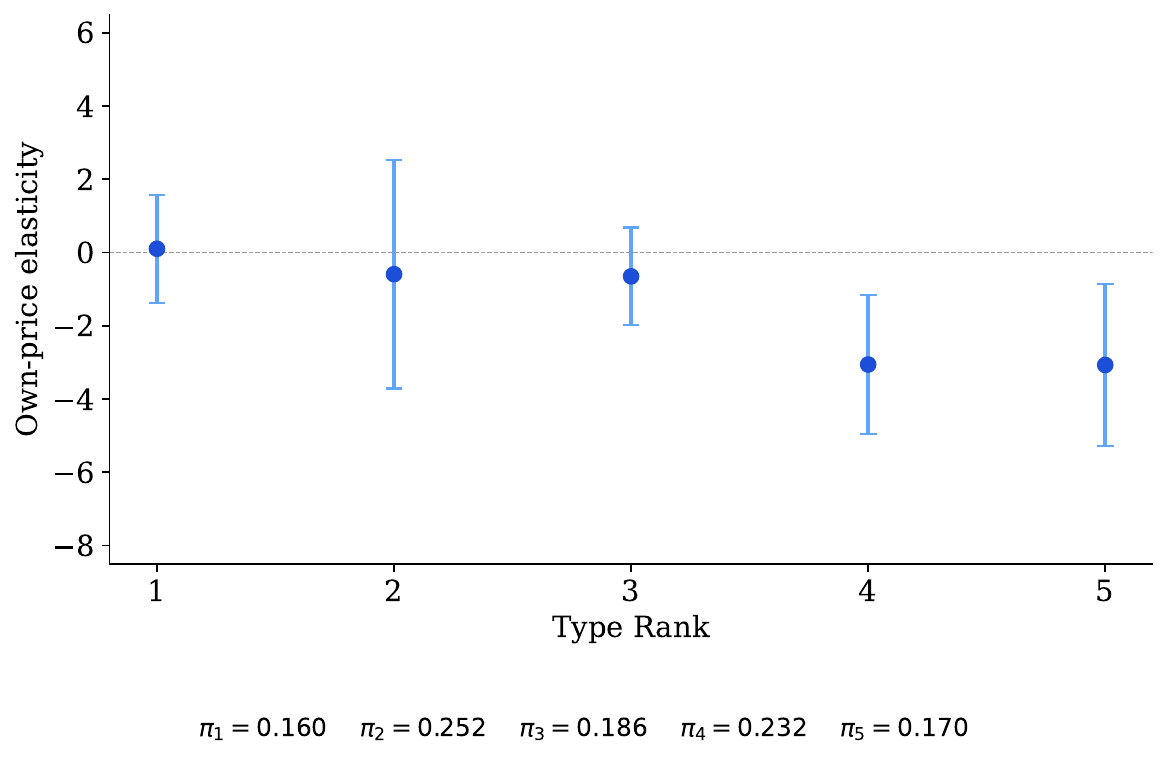}
  \end{subfigure}

  \begin{subfigure}[t]{0.45\textwidth}
    \centering
    \caption{High income}
    \label{fig:cc_reg_type_high}
    \includegraphics[width=\textwidth]{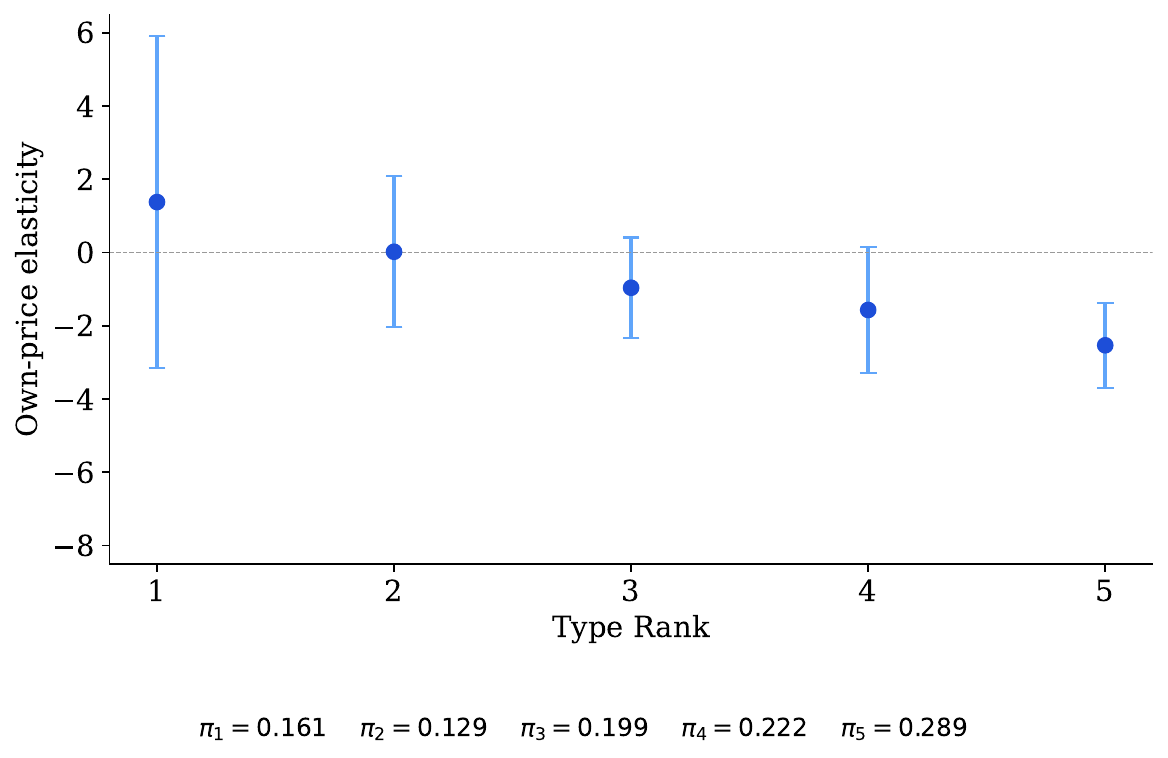}
  \end{subfigure}
  \caption{Long-run Coca-Cola Regular own-price elasticity by latent type, for each income group. Types are ranked by long-run Coca-Cola Regular own-price elasticity from least to most elastic, and the mixture probabilities $\pi_m$ are reported below each bar.}
  \label{fig:type_profile}
\end{figure}

\subsubsection{Soda Tax Policy Analysis}\label{sssec:welfare_analysis}

\noindent We consider a sugar-sweetened beverage tax of \pounds0.25 per litre applied to the two regular (full-sugar) products, Coca-Cola Regular and Pepsi Regular. This policy experiment is broadly comparable to the UK Soft Drinks Industry Levy introduced in 2018. We use one-time compensating variation (CV) to measure the consumers' welfare loss, defined as the amount of money needed to leave households indifferent between the baseline and taxed environments.

The left column of Figure~\ref{fig:cv_vs_M} plots the average CV as a function of~$M$ for each income group. CV rises sharply between $M=2$ and $M=3$ in the low- and high-income groups, from about \pounds3.3 to \pounds5.2 and from about \pounds1.0 to \pounds3.7, respectively. The increase is more gradual in the middle-income group. Beyond $M=3$, the point estimates remain well above the homogeneous benchmark but rise and fall as $M$ increases. The confidence intervals generally widen and overlap across many neighboring specifications. At the benchmark $M=5$, the average CV is about \pounds5.98 for the low-income group, \pounds3.47 for the middle-income group, and \pounds3.68 for the high-income group. These are approximately 10.4, 5.0, and 8.8 times the corresponding homogeneous ($M=1$) estimates. \citet{dubois2020well} report the average CV of the UK Soft Drinks Industry Levy to be about \pounds4.94, which is similar in size to our benchmark estimates.

The right column of Figure~\ref{fig:cv_vs_M} decomposes CV by latent type at the benchmark $M=5$, where types are ranked as in \Cref{fig:type_profile}. Welfare losses are highly concentrated: Type~3 in the low-income group ($\pi_3=0.245$), Type~4 in the middle-income group ($\pi_4=0.232$), and Type~5 in the high-income group ($\pi_5=0.289$) have CVs of approximately \pounds23.2, \pounds12.5, and \pounds11.8, respectively, while the remaining type-specific estimates are no larger than about \pounds1.8 in absolute value. These dominant types account for approximately 95\%, 84\%, and 93\% of aggregate CV, respectively, after weighting by their mixture probabilities. The dominant type is also the most prevalent in the low- and high-income groups, but not in the middle-income group. Type-specific CV depends jointly on product preferences, state dependence, substitution possibilities, and the price coefficient used to convert utility changes into monetary units. Overall, allowing for unobserved heterogeneity substantially increases the estimated welfare cost and reveals its concentration among a single type in each income group.

\begin{figure}[h!]
  \centering
  \begin{subfigure}[t]{0.48\textwidth}
    \centering
    \caption{Low income --- CV vs.\ $M$}
    \label{fig:cv_vs_M_low}
    \includegraphics[width=\textwidth]{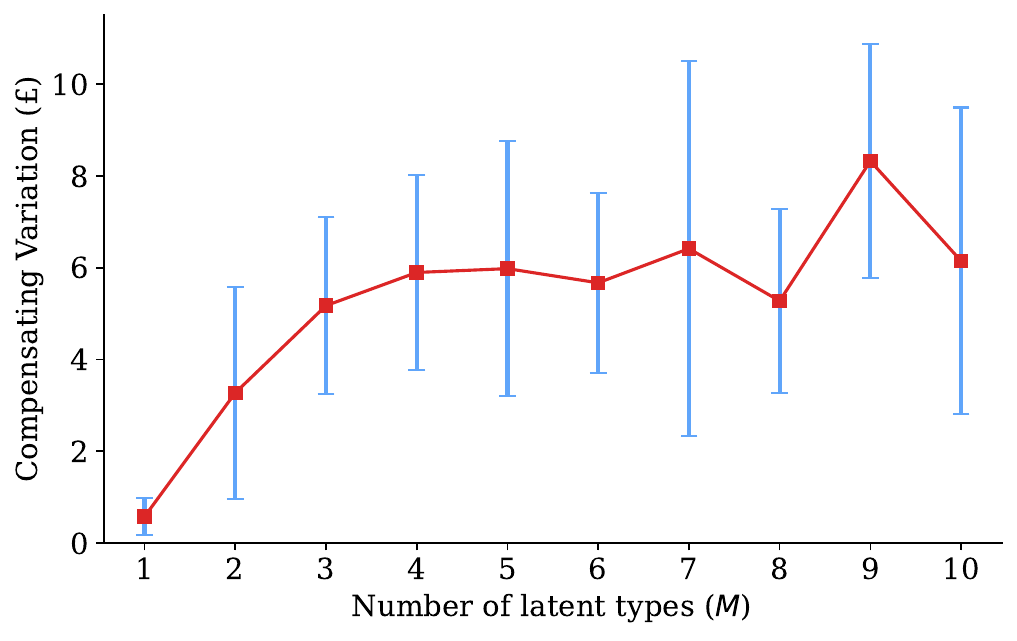}
  \end{subfigure}
  \hfill
  \begin{subfigure}[t]{0.48\textwidth}
    \centering
    \caption{Low income --- CV by type ($M=5$)}
    \label{fig:cv_by_type_low}
    \includegraphics[width=\textwidth]{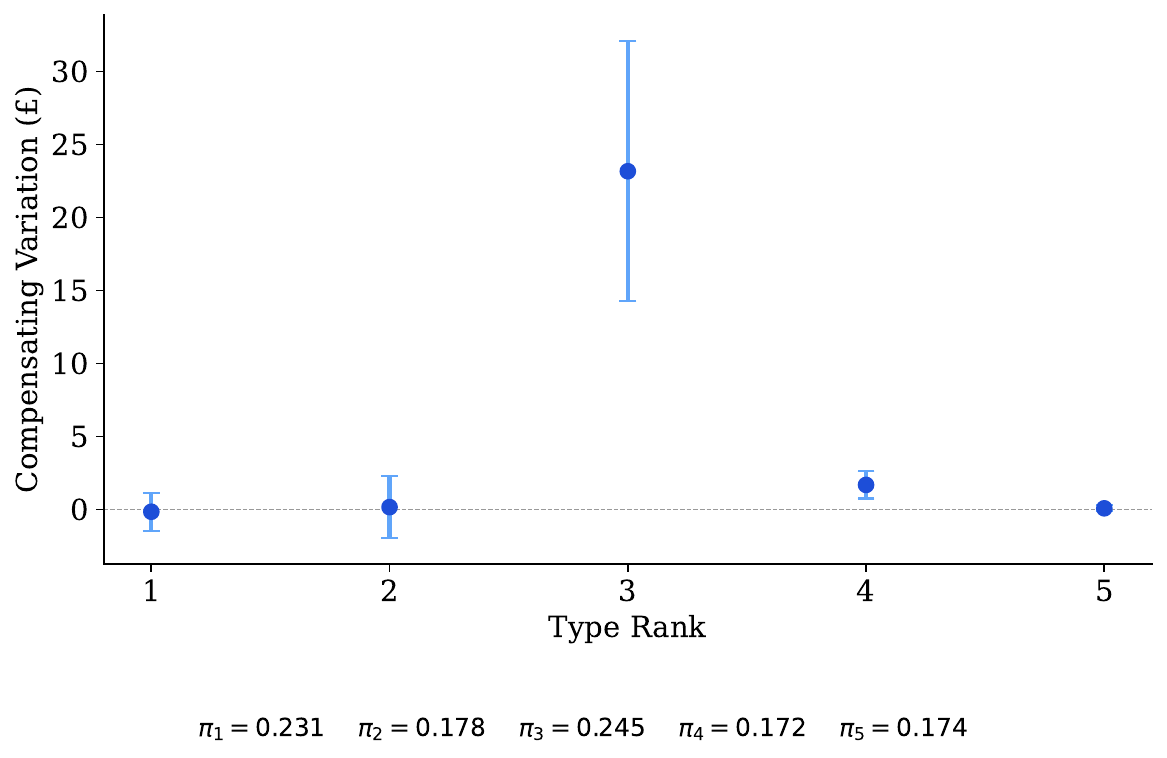}
  \end{subfigure}

  \vspace{0.5em}

  \begin{subfigure}[t]{0.48\textwidth}
    \centering
    \caption{Middle income --- CV vs.\ $M$}
    \label{fig:cv_vs_M_medium}
    \includegraphics[width=\textwidth]{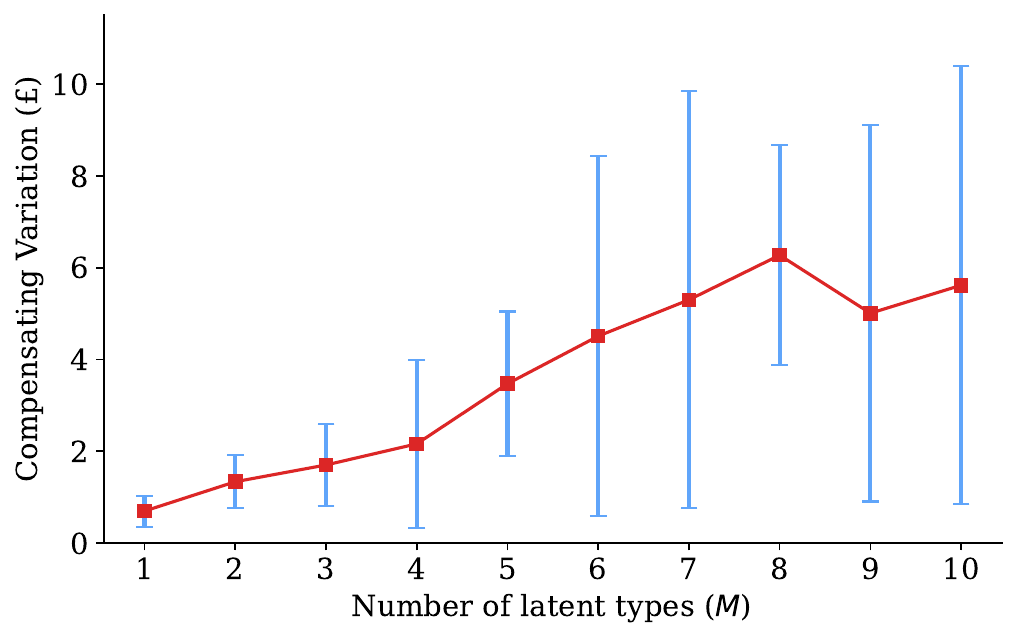}
  \end{subfigure}
  \hfill
  \begin{subfigure}[t]{0.48\textwidth}
    \centering
    \caption{Middle income --- CV by type ($M=5$)}
    \label{fig:cv_by_type_medium}
    \includegraphics[width=\textwidth]{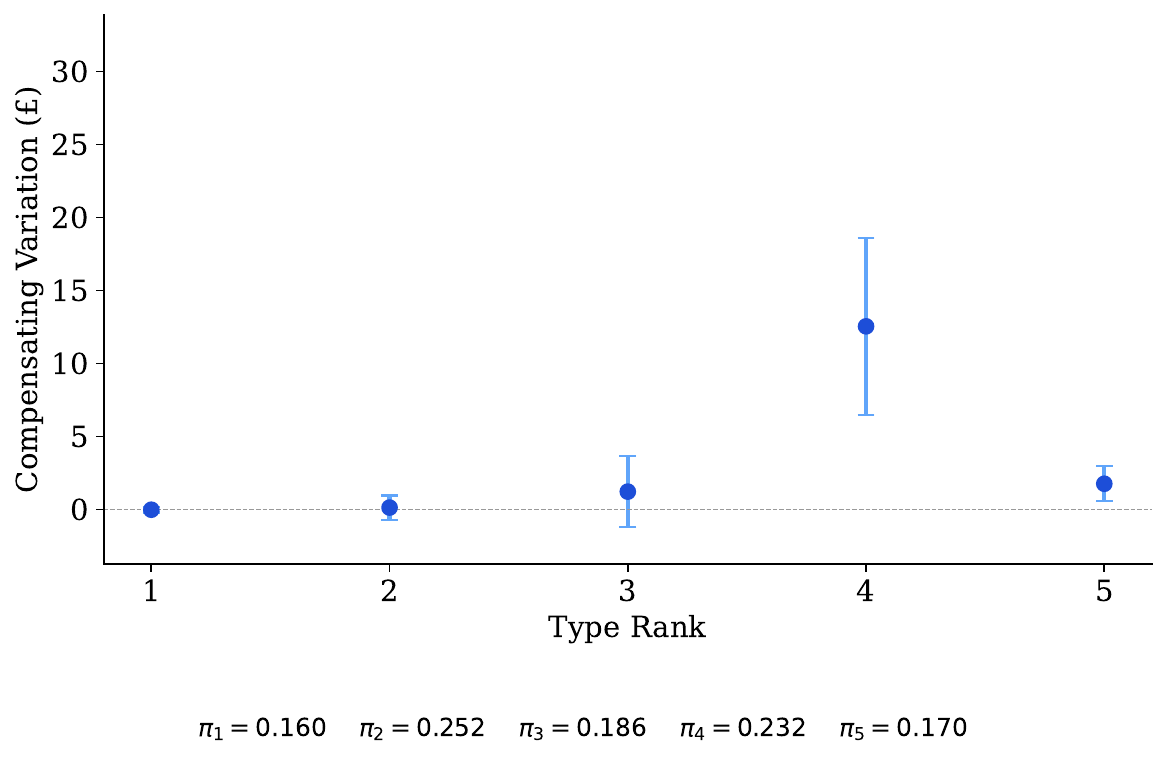}
  \end{subfigure}

  \vspace{0.5em}

  \begin{subfigure}[t]{0.48\textwidth}
    \centering
    \caption{High income --- CV vs.\ $M$}
    \label{fig:cv_vs_M_high}
    \includegraphics[width=\textwidth]{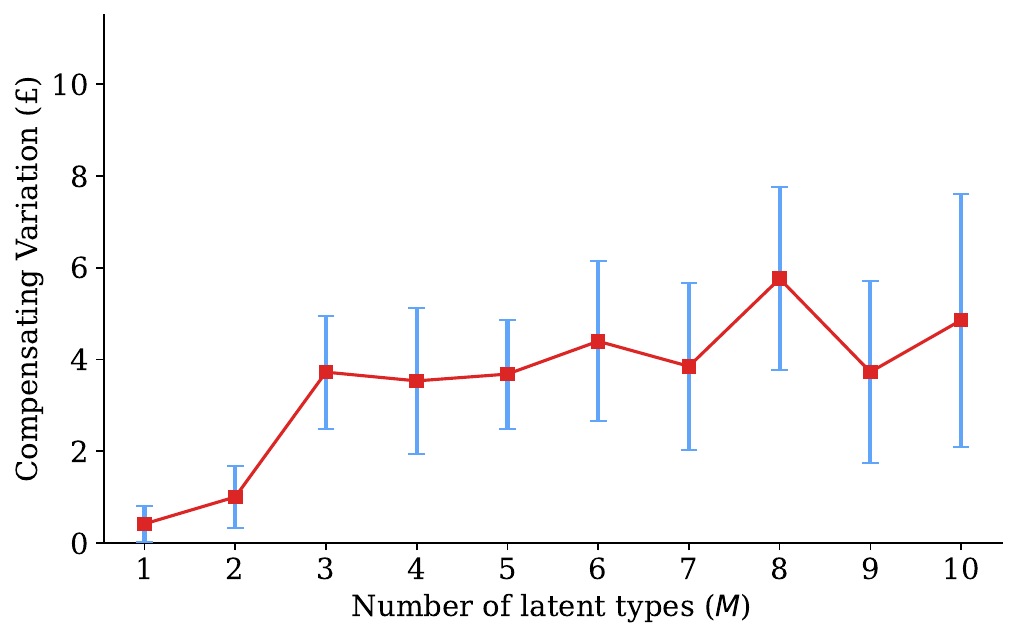}
  \end{subfigure}
  \hfill
  \begin{subfigure}[t]{0.48\textwidth}
    \centering
    \caption{High income --- CV by type ($M=5$)}
    \label{fig:cv_by_type_high}
    \includegraphics[width=\textwidth]{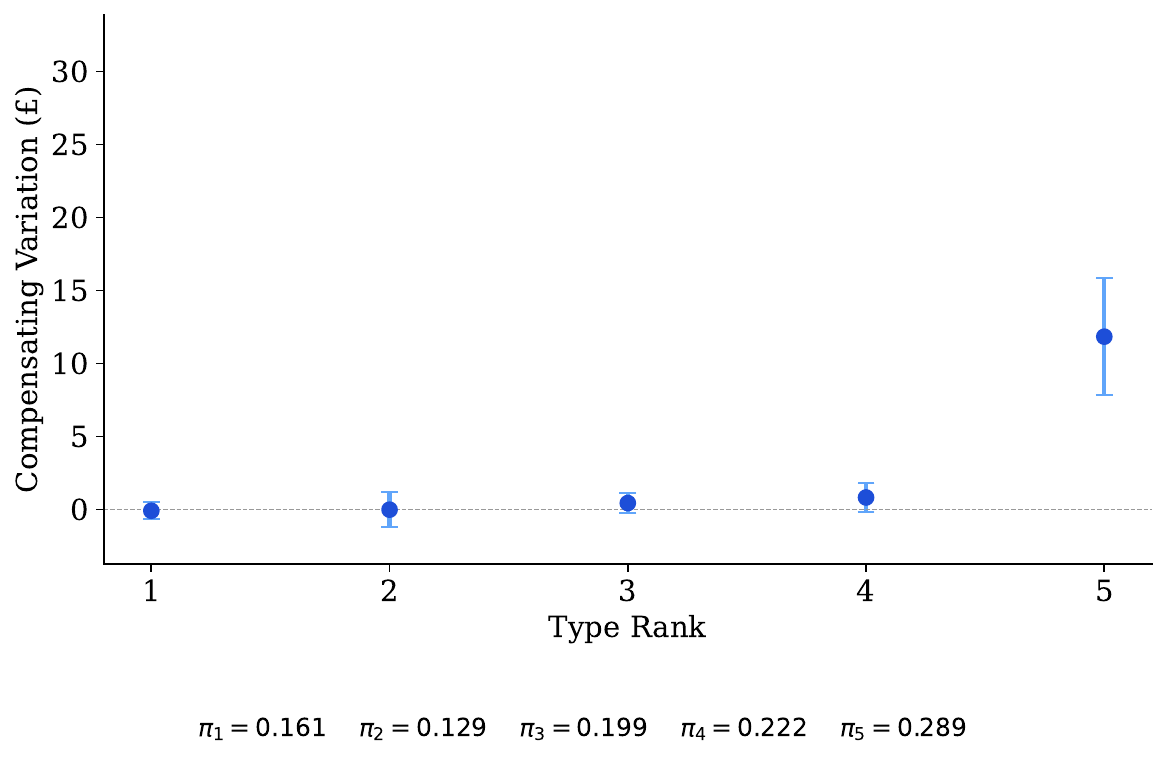}
  \end{subfigure}
  \caption{Compensating variation: by number of latent types~$M$ (left column) and by latent type at $M=5$ (right column), for each income group. In the right column, types are ranked by long-run Coca-Cola Regular own-price elasticity from least to most elastic (same ordering as \Cref{fig:type_profile}); mixture probabilities $\pi_m$ are reported below each bar.}
  \label{fig:cv_vs_M}
\end{figure}

Figures~\ref{fig:cv_children_low}--\ref{fig:cv_children_high} decompose the welfare cost by household size, grouping households into four categories by the number of children (0, 1, 2, and $\geq 3$). In the low-income group, CV rises from about \pounds4.8 for childless households to \pounds6.2 for one-child households, peaks at \pounds10.2 for two-child households, and is \pounds9.4 for households with three or more children. The middle-income group has its highest point estimate among one-child households, at about \pounds4.9, compared with \pounds3.1, \pounds3.5, and \pounds3.2 for households with zero, two, and three or more children, respectively. The high-income group is essentially flat across all categories, with CV between approximately \pounds3.6 and \pounds4.0. The 95\% confidence intervals overlap across categories in all three groups. The low-income point estimates show the clearest differences by household size, with a peak at two children.

\begin{figure}[h!]
  \centering
  \begin{subfigure}[t]{0.45\textwidth}
    \centering
    \caption{Low income}
    \label{fig:cv_children_low}
    \includegraphics[width=\textwidth]{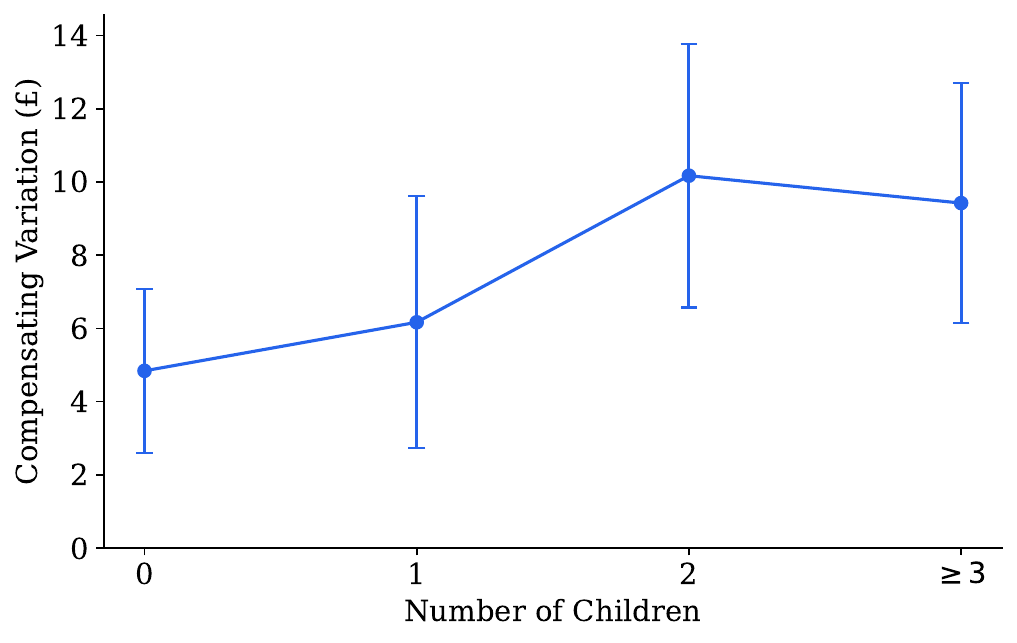}
  \end{subfigure}
  \hfill
  \begin{subfigure}[t]{0.45\textwidth}
    \centering
    \caption{Middle income}
    \label{fig:cv_children_medium}
    \includegraphics[width=\textwidth]{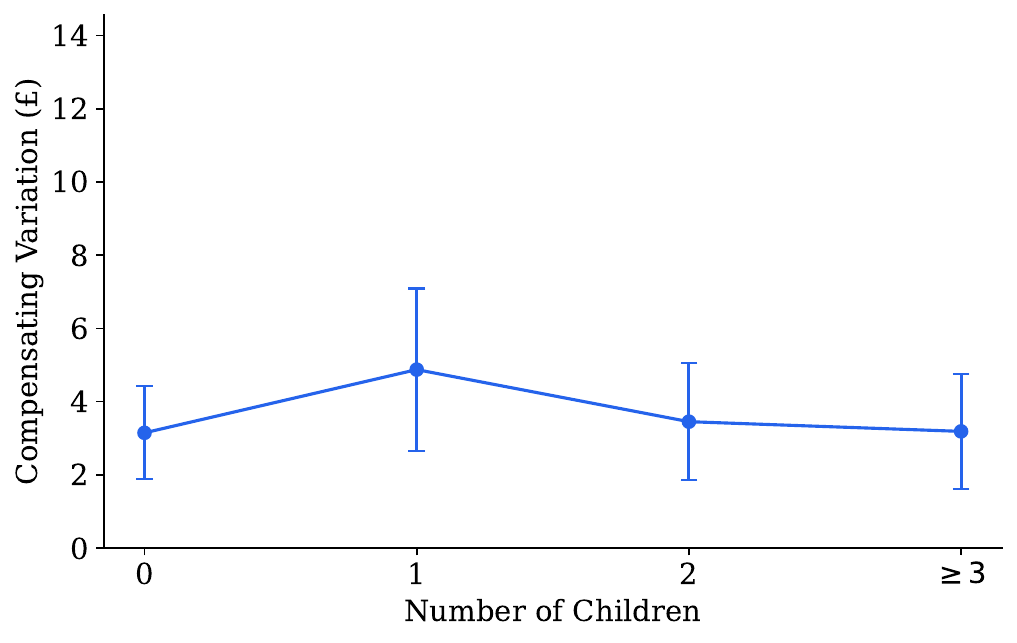}
  \end{subfigure}

  \begin{subfigure}[t]{0.45\textwidth}
    \centering
    \caption{High income}
    \label{fig:cv_children_high}
    \includegraphics[width=\textwidth]{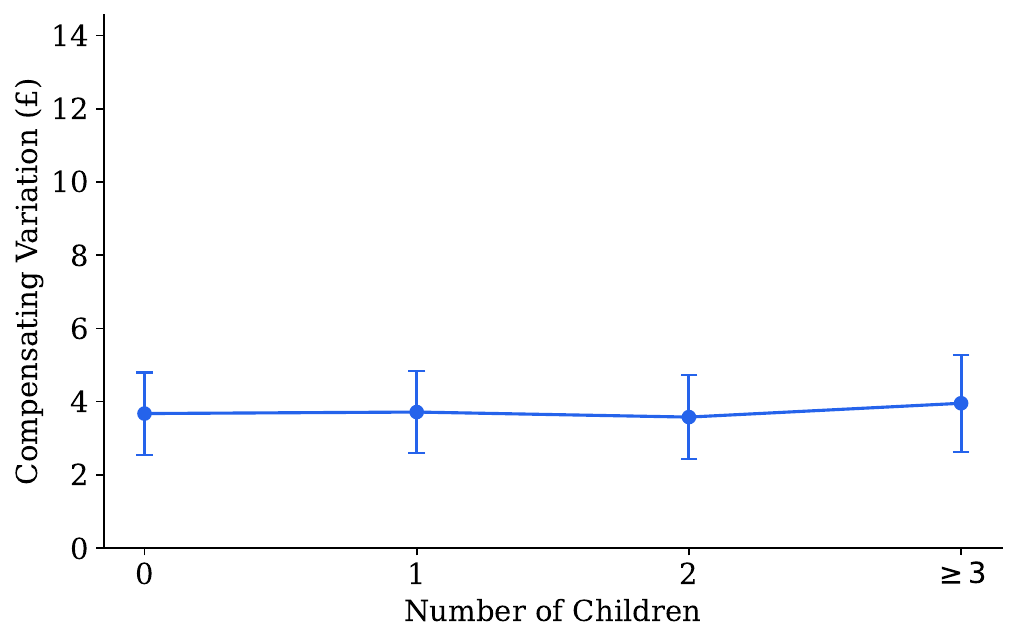}
  \end{subfigure}
  \caption{Compensating variation by number of children}
  \label{fig:cv_children}
\end{figure}

\FloatBarrier

\section{Conclusion} \label{sec: conclusion}

\noindent For dynamic discrete choice models with finite-mixture unobserved heterogeneity, this paper develops the EM-NPL($q$) estimator that employs a truncated inner fixed-point solver, establishes its consistency and asymptotic normality, and characterizes the local convergence of the algorithm used to compute it. For linear-in-parameters models, EM-NPL($q$) is numerically identical to EM-NPL for any $q \geq 1$. Simulations show that GMRES delivers large speed gains, while EPL can improve accuracy in dynamic games at a higher computational cost. The cola-demand application shows that ignoring unobserved heterogeneity substantially understates price elasticities and the welfare cost of a soda tax.

\setlength{\bibsep}{0pt}
\setcitestyle{authoryear,round}
\bibliographystyle{apalike}
\bibliography{reference}

@article{Higgins23joe,
	author = {Ayden Higgins and Koen Jochmans},
	doi = {https://doi.org/10.1016/j.jeconom.2023.04.006},
	issn = {0304-4076},
	journal = {Journal of Econometrics},
	number = {1},
	pages = {105462},
	title = {Identification of mixtures of dynamic discrete choices},
	url = {https://www.sciencedirect.com/science/article/pii/S0304407623001562},
	volume = {237},
	year = {2023}}

@article{fukasawa2025epl,
	author = {Takeshi Fukasawa},
	doi = {https://doi.org/10.1016/j.econlet.2025.112195},
	issn = {0165-1765},
	journal = {Economics Letters},
	pages = {112195},
	title = {{J}acobian-free Efficient Pseudo-Likelihood ({EPL}) algorithm},
	url = {https://www.sciencedirect.com/science/article/pii/S0165176525000321},
	volume = {247},
	year = {2025}}

@article{aguirregabiria2007sequential,
	author = {Aguirregabiria, Victor and Mira, Pedro},
	journal = {Econometrica},
	number = {1},
	pages = {1--53},
	publisher = {Wiley Online Library},
	title = {Sequential estimation of dynamic discrete games},
	volume = {75},
	year = {2007}}

@article{aguirregabiria2002swapping,
	author = {Aguirregabiria, Victor and Mira, Pedro},
	journal = {Econometrica},
	number = {4},
	pages = {1519--1543},
	publisher = {Wiley Online Library},
	title = {Swapping the nested fixed point algorithm: A class of estimators for discrete {M}arkov decision models},
	volume = {70},
	year = {2002}}

@article{kasahara2011sequential,
	author = {Kasahara, Hiroyuki and Shimotsu, Katsumi},
	journal = {Working Paper},
	title = {Sequential estimation of dynamic programming models},
	year = {2011}}

@article{kasahara2009nonparametric,
	author = {Kasahara, Hiroyuki and Shimotsu, Katsumi},
	journal = {Econometrica},
	number = {1},
	pages = {135--175},
	publisher = {Wiley Online Library},
	title = {Nonparametric identification of finite mixture models of dynamic discrete choices},
	volume = {77},
	year = {2009}}

@article{arcidiacono2011conditional,
	author = {Arcidiacono, Peter and Miller, Robert A},
	journal = {Econometrica},
	number = {6},
	pages = {1823--1867},
	publisher = {Wiley Online Library},
	title = {Conditional choice probability estimation of dynamic discrete choice models with unobserved heterogeneity},
	volume = {79},
	year = {2011}}

@article{dearing2025efficient,
	author = {Dearing, Adam and Blevins, Jason R},
	journal = {Review of Economic Studies},
	number = {2},
	pages = {981--1021},
	publisher = {Oxford University Press UK},
	title = {Efficient and convergent sequential pseudo-likelihood estimation of dynamic discrete games},
	volume = {92},
	year = {2025}}

@article{aguirregabiria2023solving,
	author = {Aguirregabiria, Victor and Magesan, Arvind},
	journal = {Working Paper},
	title = {Solving discrete choice dynamic programming models using {E}uler equations},
	year = {2023}}

@article{kasahara2008pseudo,
	author = {Kasahara, Hiroyuki and Shimotsu, Katsumi},
	journal = {Journal of Econometrics},
	number = {1},
	pages = {92--106},
	publisher = {Elsevier},
	title = {Pseudo-likelihood estimation and bootstrap inference for structural discrete {M}arkov decision models},
	volume = {146},
	year = {2008}}

@article{kasahara2012sequential,
	author = {Kasahara, Hiroyuki and Shimotsu, Katsumi},
	journal = {Econometrica},
	number = {5},
	pages = {2303--2319},
	publisher = {Wiley Online Library},
	title = {Sequential estimation of structural models with a fixed point constraint},
	volume = {80},
	year = {2012}}

@article{aguirregabiria2021imposing,
	author = {Aguirregabiria, Victor and Marcoux, Mathieu},
	journal = {Quantitative Economics},
	number = {4},
	pages = {1223--1271},
	publisher = {Wiley Online Library},
	title = {Imposing equilibrium restrictions in the estimation of dynamic discrete games},
	volume = {12},
	year = {2021}}

@article{bugni2021iterated,
	author = {Bugni, Federico A and Bunting, Jackson},
	journal = {The Review of Economic Studies},
	number = {3},
	pages = {1031--1073},
	publisher = {Oxford University Press},
	title = {On the iterated estimation of dynamic discrete choice games},
	volume = {88},
	year = {2021}}

@article{bunting2025faster,
	author = {Bunting, Jackson and Ura, Takuya},
	journal = {Journal of Econometrics},
	pages = {106004},
	publisher = {Elsevier},
	title = {Faster estimation of dynamic discrete choice models using index invertibility},
	volume = {250},
	year = {2025}}

@article{rust1987optimal,
	author = {Rust, John},
	journal = {Econometrica: Journal of the Econometric Society},
	pages = {999--1033},
	publisher = {JSTOR},
	title = {Optimal replacement of {GMC} bus engines: An empirical model of {H}arold {Z}urcher},
	year = {1987}}

@article{hotz1993conditional,
	author = {Hotz, V Joseph and Miller, Robert A},
	journal = {The Review of Economic Studies},
	number = {3},
	pages = {497--529},
	publisher = {Wiley-Blackwell},
	title = {Conditional choice probabilities and the estimation of dynamic models},
	volume = {60},
	year = {1993}}

@book{judd1998numerical,
	author = {Judd, Kenneth L},
	publisher = {MIT press},
	title = {Numerical Methods in Economics},
	year = {1998}}

@book{saad2003iterative,
	author = {Saad, Yousef},
	publisher = {SIAM},
	title = {Iterative methods for sparse linear systems},
	year = {2003}}

@article{su2012constrained,
	author = {Su, Che-Lin and Judd, Kenneth L},
	journal = {Econometrica},
	number = {5},
	pages = {2213--2230},
	publisher = {Wiley Online Library},
	title = {Constrained optimization approaches to estimation of structural models},
	volume = {80},
	year = {2012}}

@article{tauchen1986finite,
	author = {Tauchen, George},
	journal = {Economics letters},
	number = {2},
	pages = {177--181},
	publisher = {Elsevier},
	title = {Finite state {M}arkov-chain approximations to univariate and vector autoregressions},
	volume = {20},
	year = {1986}}

@article{chu1986generalization,
	author = {Chu, King-wah Eric},
	journal = {Numerische Mathematik},
	number = {6},
	pages = {685--691},
	publisher = {Springer-Verlag Berlin/Heidelberg},
	title = {Generalization of the {B}auer-{F}ike theorem},
	volume = {49},
	year = {1986}}

@book{Horn_Johnson_1991,
	author = {Horn, Roger A. and Johnson, Charles R.},
	publisher = {Cambridge University Press},
	title = {Topics in Matrix Analysis},
	year = {1991}}

@article{aguirregabiria2010dynamic,
	author = {Aguirregabiria, Victor and Mira, Pedro},
	doi = {10.1016/j.jeconom.2009.09.007},
	journal = {Journal of Econometrics},
	number = {1},
	pages = {38--67},
	title = {Dynamic Discrete Choice Structural Models: A Survey},
	volume = {156},
	year = {2010}}

@article{dube2012improving,
	author = {Dub{\'e}, Jean-Pierre and Fox, Jeremy T and Su, Che-Lin},
	journal = {Econometrica},
	number = {5},
	pages = {2231--2267},
	publisher = {Wiley Online Library},
	title = {Improving the numerical performance of static and dynamic aggregate discrete choice random coefficients demand estimation},
	volume = {80},
	year = {2012}}

@incollection{heckman1981heterogeneity,
	author = {Heckman, James J},
	booktitle = {Studies in Labor Markets},
	editor = {Rosen, Sherwin},
	pages = {91--139},
	publisher = {University of Chicago Press},
	title = {Heterogeneity and State Dependence},
	year = {1981}}

@article{kasahara2014non,
	author = {Kasahara, Hiroyuki and Shimotsu, Katsumi},
	journal = {Journal of the Royal Statistical Society: Series B},
	number = {1},
	pages = {97--111},
	publisher = {Wiley Online Library},
	title = {Non-Parametric Identification and Estimation of the Number of Components in Multivariate Mixtures},
	volume = {76},
	year = {2014}}

@article{hu2012nonparametric,
	author = {Hu, Yingyao and Shum, Matthew},
	journal = {Journal of Econometrics},
	number = {1},
	pages = {32--44},
	publisher = {Elsevier},
	title = {Nonparametric Identification of Dynamic Models with Unobserved State Variables},
	volume = {171},
	year = {2012}}

@article{adusumilli2025temporal,
	author = {Adusumilli, Karun and Eckardt, Dita},
	doi = {10.1093/restud/rdaf081},
	journal = {Review of Economic Studies},
	note = {Forthcoming},
	title = {Temporal-Difference Estimation of Dynamic Discrete Choice Models},
	year = {2025}}

@article{dempster1977maximum,
	author = {Dempster, Arthur P. and Laird, Nan M. and Rubin, Donald B.},
	journal = {Journal of the Royal Statistical Society: Series B (Methodological)},
	number = {1},
	pages = {1--38},
	title = {Maximum Likelihood from Incomplete Data via the {EM} Algorithm},
	volume = {39},
	year = {1977}}

@article{wu1983convergence,
	author = {Wu, C. F. Jeff},
	journal = {Annals of Statistics},
	number = {1},
	pages = {95--103},
	title = {On the Convergence Properties of the {EM} Algorithm},
	volume = {11},
	year = {1983}}

@article{kolda2009tensor,
	author = {Kolda, Tamara G. and Bader, Brett W.},
	journal = {SIAM Review},
	number = {3},
	pages = {455--500},
	title = {Tensor Decompositions and Applications},
	volume = {51},
	year = {2009}}

@article{wang2015impact,
	author = {Wang, Emily Yucai},
	journal = {The RAND Journal of Economics},
	number = {2},
	pages = {409--441},
	publisher = {Wiley Online Library},
	title = {The impact of soda taxes on consumer welfare: implications of storability and taste heterogeneity},
	volume = {46},
	year = {2015}}

@article{osborne2011consumer,
	author = {Osborne, Matthew},
	journal = {Quantitative Marketing and Economics},
	number = {1},
	pages = {25--70},
	publisher = {Springer},
	title = {Consumer learning, switching costs, and heterogeneity: A structural examination},
	volume = {9},
	year = {2011}}

@article{keane1997career,
	author = {Keane, Michael P and Wolpin, Kenneth I},
	journal = {Journal of political Economy},
	number = {3},
	pages = {473--522},
	publisher = {The University of Chicago Press},
	title = {The career decisions of young men},
	volume = {105},
	year = {1997}}

@article{francesconi2002joint,
	author = {Francesconi, Marco},
	journal = {Journal of labor Economics},
	number = {2},
	pages = {336--380},
	publisher = {The University of Chicago Press},
	title = {A joint dynamic model of fertility and work of married women},
	volume = {20},
	year = {2002}}

@article{gowrisankaran2012dynamics,
	author = {Gowrisankaran, Gautam and Rysman, Marc},
	journal = {Journal of political Economy},
	number = {6},
	pages = {1173--1219},
	publisher = {University of Chicago Press Chicago, IL},
	title = {Dynamics of consumer demand for new durable goods},
	volume = {120},
	year = {2012}}

@article{arcidiacono2011practical,
	author = {Arcidiacono, Peter and Ellickson, Paul B},
	journal = {Annu. Rev. Econ.},
	number = {1},
	pages = {363--394},
	publisher = {Annual Reviews},
	title = {Practical methods for estimation of dynamic discrete choice models},
	volume = {3},
	year = {2011}}

@article{dubois2020well,
	author = {Dubois, Pierre and Griffith, Rachel and O'Connell, Martin},
	journal = {American Economic Review},
	number = {11},
	pages = {3661--3704},
	publisher = {American Economic Association 2014 Broadway, Suite 305, Nashville, TN 37203},
	title = {How well targeted are soda taxes?},
	volume = {110},
	year = {2020}}

\clearpage

\appendix

\section{Proofs} \label{sec:proofs}

\subsection{Supporting Lemmas}

\noindent Throughout the appendix, $C$ denotes a generic finite positive constant whose value may change from one occurrence to the next. Unless otherwise noted, we suppress the type index $m$ and work type by type. We write $z=(\theta,P,\tilde{\theta})$ and $z_0=(\theta_0,P_0,\tilde{\theta}_0)$ with $\tilde{\theta}_0=\theta_0$, and use the shorthand $\Gamma^q(z,Y)\coloneqq\Gamma^q(\theta,Y,P,\tilde{\theta})$, where $Y$ serves as both the initial guess and the center, with $\Gamma^q_z$ and $\Gamma^q_Y$ the Jacobians with respect to $z$ and $Y$. Similarly, we write $Y_{\mathrm{fp}}(z,\tilde{Y})\coloneqq Y_{\mathrm{fp}}(\theta,P,\tilde{\theta},\tilde{Y})$, with $(Y_{\mathrm{fp}})_z$ and $(Y_{\mathrm{fp}})_{\tilde{Y}}$ the Jacobians with respect to $z$ and $\tilde{Y}$. The true value $Y_0$ satisfies $Y_0 = Y_{\mathrm{fp}}(z_0,Y_0)$, and hence $Y_0 = \Gamma^q(z_0,Y_0)$ for all $q$ by \Cref{assumption: same fixed point}. All derivatives of $\Gamma^q$ and $Y_{\mathrm{fp}}$ are evaluated at $(z_0,Y_0)$ unless otherwise noted. Define the exact-solver counterpart of $\Gamma^q$ as $\Gamma^{\mathrm{NPL}}(z,Y)\coloneqq Y_{\mathrm{fp}}(z,Y)$; consequently, $\Gamma^{\mathrm{NPL}}_z=(Y_{\mathrm{fp}})_z$ and $\Gamma^{\mathrm{NPL}}_Y=(Y_{\mathrm{fp}})_{\tilde{Y}}$. For any $q$-dependent quantity $X^q$, we write $X^{\mathrm{NPL}}$ for its exact-solver counterpart, obtained by replacing $\Gamma^q$ with $\Gamma^{\mathrm{NPL}}$; for example, $s_i^{\mathrm{NPL}}$ denotes the score $s_i^q$ with $\Gamma^q$ replaced by $\Gamma^{\mathrm{NPL}}$.

\begin{lemma} \label{lemma: outer product difference bound}
    Let $x_{1},x_{2},y_{1},y_{2}$ be random vectors, then
    \begin{equation*}
        \|E[x_{1}x_{2}^{\top} - y_{1}y_{2}^{\top}]\|_F \le \sqrt{E[\|x_1\|^2]}\sqrt{E[\|x_2-y_2\|^2]} + \sqrt{E[\|y_2\|^2]} \sqrt{E[\|x_1-y_1\|^2]}
    \end{equation*}
\end{lemma}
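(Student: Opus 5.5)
The plan is the standard add-and-subtract decomposition of a difference of products, followed by the triangle inequality and Cauchy--Schwarz. Write
\begin{equation*}
    x_1x_2^{\top} - y_1y_2^{\top} = x_1(x_2-y_2)^{\top} + (x_1-y_1)y_2^{\top}.
\end{equation*}
Taking expectations and applying the triangle inequality for the Frobenius norm gives
\begin{equation*}
    \|E[x_1x_2^{\top} - y_1y_2^{\top}]\|_F \le \|E[x_1(x_2-y_2)^{\top}]\|_F + \|E[(x_1-y_1)y_2^{\top}]\|_F .
\end{equation*}
Both terms on the right have the common form $\|E[ab^{\top}]\|_F$. The argument therefore reduces to a single inequality, $\|E[ab^{\top}]\|_F \le \sqrt{E\|a\|^2}\sqrt{E\|b\|^2}$.

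To prove that inequality, I would first use Jensen's inequality: the Frobenius norm is convex, so $\|E[ab^{\top}]\|_F \le E\|ab^{\top}\|_F$. Next, for a rank-one matrix, $\|ab^{\top}\|_F = \|a\|\,\|b\|$, which follows from $\operatorname{tr}(ba^{\top}ab^{\top}) = \|a\|^2\|b\|^2$. Finally, the scalar Cauchy--Schwarz inequality gives $E[\|a\|\|b\|] \le \sqrt{E\|a\|^2}\sqrt{E\|b\|^2}$.

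Applying this with $(a,b)=(x_1,\,x_2-y_2)$ and with $(a,b)=(x_1-y_1,\,y_2)$ yields exactly the two terms in the statement. The pairing of factors (a bound involving $x_1$ and $y_2$) matches the decomposition chosen above, so no reordering is needed.

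The only point needing care is integrability. If any of the second moments on the right-hand side is infinite, the bound holds trivially. Otherwise, Cauchy--Schwarz ensures that each product has a finite expectation, so every expectation in the argument is well defined. There is no real obstacle beyond this; the step that most needs checking is the rank-one Frobenius identity, and that is immediate.
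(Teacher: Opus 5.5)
Your proposal is correct and is essentially the paper's proof: the same decomposition $x_1x_2^{\top}-y_1y_2^{\top}=x_1(x_2-y_2)^{\top}+(x_1-y_1)y_2^{\top}$, Jensen's inequality together with the triangle inequality, the rank-one identity $\|ab^{\top}\|_F=\|a\|\,\|b\|$, and Cauchy--Schwarz on each term. Your integrability remark goes beyond the paper, which omits it, with one small correction: if a zero factor multiplies an infinite one (e.g., $x_2=y_2$ a.s.\ with $E\|x_1\|^2=\infty$), the bound is not trivial, but it still holds under the convention $0\cdot\infty=0$, using the second term alone.
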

\begin{proof}
    By Jensen's inequality and the triangle inequality, $\|E[x_1 x_2^{\top} - y_1 y_2^{\top}]\|_F \le E[\|x_1(x_2-y_2)^{\top}\|_F] + E[\|(x_1-y_1)y_2^{\top}\|_F] = E[\|x_1\|\|x_2-y_2\|] + E[\|x_1-y_1\|\|y_2\|]$. Applying the Cauchy--Schwarz inequality to each term gives the result.
\end{proof}

\begin{lemma}[Approximation bound] \label{lemma: derivative approximation bound}
    Under Assumptions~\ref{assump: LS primitives}, \ref{assumption: same fixed point}, \ref{assump: LS regularity}, and \ref{assumption: approximation error upper bound}, at $(z_0,Y_0)$,
    \[
        \|\Gamma_z^q(z_0,Y_0)-(Y_{\mathrm{fp}})_z(z_0,Y_0)\|_F \leq C f(q),
        \qquad
        \|\Gamma_Y^q(z_0,Y_0)-(Y_{\mathrm{fp}})_{\tilde{Y}}(z_0,Y_0)\|_F \leq C f(q).
    \]
\end{lemma}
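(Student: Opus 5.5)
The plan is to turn the pointwise approximation bound of \Cref{assumption: approximation error upper bound} into a derivative bound by differentiating along directions through $(z_0,Y_0)$. Define the error map
\[
E^q(z,Y)\coloneqq \Gamma^q(z,Y)-Y_{\mathrm{fp}}(z,Y).
\]
By \Cref{assumption: approximation error upper bound},
\[
\|E^q(z,Y)\|\le f(q)\,\|Y_{\mathrm{fp}}(z,Y)-Y\|
\]
on a neighborhood of $(z_0,Y_0)$. At the true value, $Y_0=Y_{\mathrm{fp}}(z_0,Y_0)=\Gamma^q(z_0,Y_0)$ by \Cref{assumption: same fixed point}. Hence $E^q(z_0,Y_0)=0$, and the right-hand side also vanishes there.

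Next I would bound the right-hand side linearly in the displacement. \Cref{assump: fp jacobian} makes $Y_{\mathrm{fp}}$ continuously differentiable near $(z_0,Y_0)$. By the mean value inequality, for $(z,Y)=(z_0,Y_0)+h$ with $\|h\|$ small,
\[
\|Y_{\mathrm{fp}}(z,Y)-Y\|=\|Y_{\mathrm{fp}}(z,Y)-Y_{\mathrm{fp}}(z_0,Y_0)-(Y-Y_0)\|\le C\|h\|,
\]
where $C$ depends on the sup of $\|(Y_{\mathrm{fp}})_{(z,\tilde Y)}\|$ over a fixed neighborhood and does not depend on $q$. Combining the two bounds gives $\|E^q((z_0,Y_0)+h)\|\le C f(q)\|h\|$ for all small $h$.

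Finally I would pass to the derivative. $\Gamma^q$ is $C^2$ by \Cref{assump: LS smoothness}, and $Y_{\mathrm{fp}}$ is $C^1$, so $E^q$ is differentiable at $(z_0,Y_0)$. Fix a unit direction $u$. Then
\[
DE^q(z_0,Y_0)u=\lim_{t\downarrow 0} t^{-1}\,E^q\bigl((z_0,Y_0)+tu\bigr),
\]
and the previous bound gives $\|DE^q(z_0,Y_0)u\|\le Cf(q)$. Taking the sup over unit $u$ bounds the operator norm by $Cf(q)$. Passing to the Frobenius norm costs only a dimension factor, which is absorbed into $C$. Restricting to the $z$-block and the $Y$-block yields the two stated inequalities, since $\Gamma^q_Y$ is the total derivative including the center $\tilde Y=Y$, and this matches $(Y_{\mathrm{fp}})_{\tilde Y}$ under the convention $\Gamma^{\mathrm{NPL}}_Y=(Y_{\mathrm{fp}})_{\tilde Y}$.

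The main obstacle is uniformity. The neighborhood in \Cref{assumption: approximation error upper bound} and the Lipschitz constant of $Y_{\mathrm{fp}} - Y$ must not depend on $q$. The assumption is stated on a fixed neighborhood, so I will take the constant $C$ from $Y_{\mathrm{fp}}$ alone. That quantity is $q$-free, and the limit argument needs no uniform control on the derivatives of $\Gamma^q$.
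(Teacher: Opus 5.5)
Your proof is correct and takes essentially the same route as the paper: both use $\Gamma^q(z_0,Y_0)=Y_{\mathrm{fp}}(z_0,Y_0)=Y_0$ so that the error map vanishes at the truth, then divide the pointwise bound of \Cref{assumption: approximation error upper bound} along a fixed direction by the step size and let the step go to zero. The differences are cosmetic. The paper treats the $z$- and $Y$-directions separately and reads the constant off the pointwise derivatives $\|(Y_{\mathrm{fp}})_z\|$ and $\|(Y_{\mathrm{fp}})_{\tilde Y}-I\|$ directly in the limit, so your intermediate mean-value (Lipschitz-on-a-neighborhood) step is harmless but not needed.
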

\begin{proof}
    From \Cref{assumption: approximation error upper bound}, for $(z,Y)$ in a neighborhood of $(z_0,Y_0)$,
    \[
        \|\Gamma^q(z,Y)-Y_{\mathrm{fp}}(z,Y)\| \leq \|Y_{\mathrm{fp}}(z,Y)-Y\| f(q).
    \]
    Here $\Gamma^q$ and $Y_{\mathrm{fp}}$ are continuously differentiable by \Cref{assump: LS smoothness} and \Cref{assump: fp jacobian}, and $\Gamma^q(z_0,Y_0)=Y_{\mathrm{fp}}(z_0,Y_0)=Y_0$ by \Cref{assumption: same fixed point}.

    For the first bound, fix a unit vector $v$ and set $z_h=z_0+hv$. Evaluating the displayed inequality at $(z_h,Y_0)$ and using $Y_{\mathrm{fp}}(z_0,Y_0)=Y_0$,
    \[
        \left\|\frac{(\Gamma^q(z_h,Y_0)-\Gamma^q(z_0,Y_0)) - (Y_{\mathrm{fp}}(z_h,Y_0)-Y_{\mathrm{fp}}(z_0,Y_0))}{h}\right\|
        \leq
        \left\|\frac{Y_{\mathrm{fp}}(z_h,Y_0)-Y_{\mathrm{fp}}(z_0,Y_0)}{h}\right\| f(q).
    \]
    Letting $h\to0$ gives $\|(\Gamma_z^q(z_0,Y_0)-(Y_{\mathrm{fp}})_z(z_0,Y_0))v\| \leq \|(Y_{\mathrm{fp}})_z(z_0,Y_0)v\|f(q)$. Taking the supremum over unit vectors $v$ and using $\|(Y_{\mathrm{fp}})_z(z_0,Y_0)\|\le C$ gives the first bound, with the Frobenius-norm statement following from equivalence of norms in finite dimensions.

    For the second bound, fix a unit vector $u$ and set $Y_h=Y_0+hu$. Evaluating the displayed inequality at $(z_0,Y_h)$ and using $\Gamma^q(z_0,Y_0)=Y_{\mathrm{fp}}(z_0,Y_0)=Y_0$,
    \begin{align*}
        & \left\|\frac{(\Gamma^q(z_0,Y_h)-\Gamma^q(z_0,Y_0)) - (Y_{\mathrm{fp}}(z_0,Y_h)-Y_{\mathrm{fp}}(z_0,Y_0))}{h}\right\| \\
        & \leq
        \left\|\frac{(Y_{\mathrm{fp}}(z_0,Y_h)-Y_{\mathrm{fp}}(z_0,Y_0)) - (Y_h-Y_0)}{h}\right\| f(q).
    \end{align*}
    Letting $h\to0$ gives $\|(\Gamma_Y^q(z_0,Y_0)-(Y_{\mathrm{fp}})_{\tilde{Y}}(z_0,Y_0))u\| \leq \|((Y_{\mathrm{fp}})_{\tilde{Y}}(z_0,Y_0)-I)u\|f(q)$. Taking the supremum over unit vectors $u$ and using $\|(Y_{\mathrm{fp}})_{\tilde{Y}}(z_0,Y_0)\|\le C$ gives the second bound.
\end{proof}

\begin{lemma} \label{lemma: gradient bound}
    Suppose Assumptions~\ref{assump: stationary initial}, \ref{assump: LS primitives}, \ref{assumption: same fixed point}, \ref{assump: LS regularity}, and \ref{assumption: approximation error upper bound} hold. Then, uniformly in $q$,
    \begin{lemmaitems}
        \item \label{lemma: gradient bound1} $\sup_{x,a,j} \| \nabla_{\theta} \log\Lambda_j(\theta_{0}, \Gamma^{q}(\theta_{0}, Y_{0}, P_{0}, {\theta}_{0}),P_{0})(a|x)\| \leq C$.
        \item \label{lemma: gradient bound2} $\sup_{x,a,j} \| \nabla_{\theta} \log\Lambda_j(\theta_{0}, Y_{\mathrm{fp}}(\theta_{0}, P_{0}, {\theta}_{0}, Y_{0}),P_{0})(a|x)\| \leq C$.
        \item \label{lemma: gradient bound5} $\sup_{x,a,j} \|\nabla_{Y} \log\Lambda_j(\theta_{0},Y_{0},P_{0})(a|x)\| \leq C $.
        \item \label{lemma: gradient bound6} $\left\| \nabla_{(\theta, P, \tilde{\theta}, \tilde{Y})} Y_{\mathrm{fp}}(\theta_{0}, P_{0}, {\theta}_{0}, Y_{0})\right\|_{F} \leq C $.
        \item \label{lemma: gradient bound9} $\left\| \nabla_{(\theta, Y, P, \tilde{\theta})} \Gamma^{q}(\theta_{0}, Y_{0}, P_{0}, {\theta}_{0})\right\|_{F} \leq C$.
        \item \label{lemma: gradient bound10} $\sup_{x} \|\nabla_P \log p_*(x;P_0)\| \leq C$.
    \end{lemmaitems}
\end{lemma}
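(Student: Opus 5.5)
The plan is to reduce every item to continuity on a compact set plus the uniform-in-$q$ control of the solver Jacobians supplied by \Cref{lemma: derivative approximation bound}. The key structural fact is that all quantities are evaluated at the true point $(z_0,Y_0)$. There $\Gamma^q(\theta_0,Y_0,P_0,\theta_0)=Y_0=Y_{\mathrm{fp}}(\theta_0,P_0,\theta_0,Y_0)$ for every $q$ by \Cref{assumption: same fixed point}. Hence the \emph{level} arguments entering $\Lambda_j$ in items (i) and (ii) do not depend on $q$; only the derivatives of $\Gamma^q$ do. The state space $\mX$ and action set $\mA$ are finite, so each supremum over $(x,a,j)$ is a maximum over finitely many entries, and it suffices to bound each entry by a constant free of $q$.

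I will do items (iii), (iv) and (vi) first, since they involve no $q$. For (iii), $\log\Lambda_j(\theta_0,Y_0,P_0)(a|x)$ is differentiable in $Y$ by \Cref{assump: LS smoothness}. Its gradient is $\nabla_Y\Lambda_j/\Lambda_j$, and the denominator is bounded away from zero by \Cref{assump: LS positive ccp} at this single point. So the gradient is a finite vector, and its maximum over finitely many $(x,a,j)$ is a constant.

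For (iv), \Cref{assump: fp jacobian} gives nonsingularity of $I-G_Y$. The implicit function theorem applied to $Y=G(\theta,Y,P,\tilde\theta,\tilde Y)$ then yields
\[
\nabla_{(\theta,P,\tilde\theta,\tilde Y)}Y_{\mathrm{fp}}=(I-G_Y)^{-1}\nabla_{(\theta,P,\tilde\theta,\tilde Y)}G .
\]
This is a fixed finite matrix, so its Frobenius norm is a constant.

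For (vi), $p_*(\cdot;P)$ is the unique stationary distribution by \Cref{assump: stationary initial existence}, i.e.\ the normalized solution of $p^\top(I-F^P)=0$ with $p^\top\mathbf 1=1$. Uniqueness means the eigenvalue $1$ of $F^P$ is simple. Then the bordered system obtained by replacing one equation with the normalization is nonsingular, so $p_*$ is a smooth rational function of the entries of $F^P$, which are polynomial in $P$. Its derivative at $P_0$ is therefore finite. We also need $p_*(x;P_0)>0$, which follows from the full-support condition in \Cref{assump: LS iid}; with it, $\nabla_P\log p_*=\nabla_P p_*/p_*$ is bounded.

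Item (v) is the only place where uniformity in $q$ genuinely matters, and I expect it to be the main obstacle. I will split the Jacobian of $\Gamma^q$ into its $z=(\theta,P,\tilde\theta)$ block and its $Y$ block. By \Cref{lemma: derivative approximation bound},
\[
\|\Gamma^q_z\|_F\le\|(Y_{\mathrm{fp}})_z\|_F+Cf(q),\qquad \|\Gamma^q_Y\|_F\le\|(Y_{\mathrm{fp}})_{\tilde Y}\|_F+Cf(q).
\]
The first terms are bounded by (iv). Since $f(q)\to0$, $\sup_q f(q)$ is finite on the tail beyond some $q_0$. The finitely many $q\le q_0$ each have finite Jacobians by \Cref{assump: LS smoothness}, so taking the maximum gives a single $C$. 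The point to check carefully is that the constant in \Cref{lemma: derivative approximation bound} is itself $q$-free. Its proof uses only $\|(Y_{\mathrm{fp}})_z\|$ and $\|(Y_{\mathrm{fp}})_{\tilde Y}-I\|$, so this holds.

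Finally, (i) and (ii) follow from the chain rule:
\[
\nabla_\theta\log\Lambda_j(\theta,\Gamma^q(\theta,\cdot),P_0)=\nabla_\theta\log\Lambda_j+\nabla_Y\log\Lambda_j\cdot\Gamma^q_\theta ,
\]
with both terms on the right evaluated at the $q$-independent point $(\theta_0,Y_0,P_0)$. The first term is a fixed finite vector, as in (iii). The second is bounded by (iii) times (v). Item (ii) is identical, with $(Y_{\mathrm{fp}})_\theta$ in place of $\Gamma^q_\theta$ and (iv) in place of (v).
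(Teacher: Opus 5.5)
Your proposal is correct and follows the paper's proof essentially step for step. Items (iii), (iv) and (vi) are $q$-free finiteness statements, obtained from smoothness, positivity of the CCPs, nonsingularity of $I-G_Y$, simplicity of the unit eigenvalue, and full support. Item (v) comes from \Cref{lemma: derivative approximation bound} together with part (iv) for large $q$, with the finitely many remaining $q$ absorbed into the constant. Items (i) and (ii) then follow by the chain rule. Your explicit implicit-function formula for $\nabla Y_{\mathrm{fp}}$ and your bordered-system argument for $p_*$ spell out details the paper leaves implicit, but they do not change the route.
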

\begin{proof}
At each $(a,x)$, the gradients $\nabla_\theta \log\Lambda_j(a|x)$ and $\nabla_Y \log\Lambda_j(a|x)$ are finite because $\Lambda_j$ is continuously differentiable and $\Lambda_j(a|x)>0$ by \Cref{assump: LS primitives}. Part (iii) then follows since $\mA\times\mX\times\{1,\ldots,J\}$ is finite. Part (iv) is the finite Jacobian of $Y_{\mathrm{fp}}$ from \Cref{assump: fp jacobian}. For part (v), \Cref{assump: LS smoothness} makes $\nabla_{(\theta, Y, P, \tilde{\theta})}\Gamma^q$ finite for each fixed $q$, while \Cref{lemma: derivative approximation bound} and part (iv) bound it for all large $q$; enlarging the constant covers the remaining finitely many $q$, so the bound holds uniformly in $q$. Parts (i) and (ii) follow from the chain rule together with parts (iii), (iv), and (v). For part (vi), $p_*(\cdot;P_0)$ is the unique stationary distribution of the finite-state Markov kernel $f_x^{P_0}$ by \Cref{assump: stationary initial existence}. Uniqueness makes the eigenvalue one of $f_x^{P_0}$ simple, so the implicit function theorem applied to the stationary equation, together with the continuous differentiability of the transition primitives from \Cref{assump: LS smoothness}, makes $p_*(\cdot;P)$ continuously differentiable in $P$ at $P_0$. Moreover, stationarity in \Cref{assump: LS iid} makes $p_*(\cdot;P_0)$ the conditional law of $x_{it}$ given the type, so the full-support condition there gives $p_*(x;P_0)>0$ for every $x\in\mX$. Hence the sensitivity of $\log p_*$ to $P$ is finite at every $x$, and the supremum over the finite set $\mX$ is bounded.
\end{proof}

\begin{lemma}[M-step sensitivity approximation: $\boldsymbol{\vartheta}$] \label{lemma: H approximation}
    Suppose Assumptions~\ref{assump: stationary initial}, \ref{assump: LS primitives}, \ref{assumption: same fixed point}, \ref{assump: LS regularity}, \ref{assumption: approximation error upper bound}, and \ref{assumption: mixture convergence regularity} hold. Then, for each $\boldsymbol{\eta}\in\{\boldsymbol{\theta},\bfY,\bfP,\boldsymbol{\pi}\}$,
    \[
        \|H_{\boldsymbol{\eta}}^q-H_{\boldsymbol{\eta}}^{\mathrm{NPL}}\|_F=O(f(q)).
    \]
\end{lemma}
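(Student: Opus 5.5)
Since $H^q_{\boldsymbol{\eta}}=-A_q^{-1}B^q_{\boldsymbol{\eta}}$ with $A_q\coloneqq\nabla^2_{\boldsymbol{\vartheta}\boldsymbol{\vartheta}}\ell_E^q(\boldsymbol{\theta}_0;\boldsymbol{\zeta}_0)$ and $B^q_{\boldsymbol{\eta}}\coloneqq\nabla^2_{\boldsymbol{\vartheta}\boldsymbol{\eta}}\ell_E^q(\boldsymbol{\theta}_0;\boldsymbol{\zeta}_0)$, the plan is to reduce the statement to two approximation bounds, $\|A_q-A_{\mathrm{NPL}}\|_F=O(f(q))$ and $\|B^q_{\boldsymbol{\eta}}-B^{\mathrm{NPL}}_{\boldsymbol{\eta}}\|_F=O(f(q))$. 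We then combine them through the identity
\[
H^q_{\boldsymbol{\eta}}-H^{\mathrm{NPL}}_{\boldsymbol{\eta}}=-A_q^{-1}(B^q_{\boldsymbol{\eta}}-B^{\mathrm{NPL}}_{\boldsymbol{\eta}})+A_q^{-1}(A_q-A_{\mathrm{NPL}})A_{\mathrm{NPL}}^{-1}B^{\mathrm{NPL}}_{\boldsymbol{\eta}}.
\]
$A_{\mathrm{NPL}}$ is nonsingular by \Cref{assumption: mixture convergence regularity}. Given the Hessian bound, $\|A_q^{-1}\|$ is bounded uniformly for all large $q$ by continuity of inversion, and the finitely many small $q$ are absorbed into the constant, since for each fixed $q$ the difference is finite by \Cref{assumption: mixture convergence regularity}(iii). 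This gives the claim.

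For the derivative bounds, write $\ell_E^q$ type by type as $\sum_m\bE[w^q_{im}(\boldsymbol{\zeta})\,g^q_{im}(\vartheta^m;\boldsymbol{\zeta})]$, where $g^q_{im}$ is the log pseudo-likelihood with $\Gamma^q(\vartheta^m,Y^m,P^m,\theta^m)$ inside $\Lambda$. Every second derivative $\nabla_{\boldsymbol{\vartheta}\boldsymbol{\vartheta}}$ or $\nabla_{\boldsymbol{\vartheta}\boldsymbol{\eta}}$ of $w\cdot g$ expands by the chain rule into a finite sum of products. The factors are of three kinds:
\begin{itemize}
\item derivatives of $\log\Lambda_j$ and $\log p_*$ with respect to $(\theta,Y,P)$, which are the same under $q$ and NPL;
\item first derivatives of $\Gamma^q$ at $(z_0,Y_0)$, namely $\Gamma^q_z,\Gamma^q_Y$;
\item second derivatives of $\Gamma^q$.
\end{itemize}
Because $\Gamma^q(z_0,Y_0)=Y_{\mathrm{fp}}(z_0,Y_0)=Y_0$, both versions are evaluated at the same point. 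Moreover, the posterior weights $w^q_{im}(\boldsymbol{\zeta}_0)$ coincide with $w^{\mathrm{NPL}}_{im}(\boldsymbol{\zeta}_0)$, since $\mL_i^q$ at the truth uses $\Gamma^q(\theta_0,Y_0,P_0,\theta_0)=Y_0$. Only the derivatives of $w$ (through $\nabla\mL_i$) and of $g$ differ, and these differences run through $\Gamma^q_z-\Gamma^{\mathrm{NPL}}_z$ and $\Gamma^q_Y-\Gamma^{\mathrm{NPL}}_Y$, which are $O(f(q))$ by \Cref{lemma: derivative approximation bound}. The remaining factors are bounded uniformly in $q$ by \Cref{lemma: gradient bound}. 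Each product difference is telescoped term by term, and expectations of products (for instance $\bE[\nabla w\,\nabla g^\top]$) are handled with \Cref{lemma: outer product difference bound}. The required second moments are finite because $T$, $\mX$ and $\mA$ are finite and the scores are uniformly bounded.

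The main obstacle is that the Hessians involve second derivatives of $\Gamma^q$, and \Cref{lemma: derivative approximation bound} controls only first derivatives. We would try to avoid needing them. Differentiating $\ell_E$ twice in $\vartheta$ gives a term $\bE[w\,\nabla_Y\log\Lambda\cdot\nabla^2_{\theta\theta}\Gamma^q]$. At the truth, $\bE[w_{im}\nabla_Y\log\Lambda_j(a_{jt}|x_t)]$ is a conditional expectation of a score of a probability distribution evaluated at the true CCPs. It therefore vanishes by the information-equality argument, since $\sum_a\Lambda_j(a|x)\nabla\log\Lambda_j(a|x)=0$ and the posterior weights reproduce the true type-conditional law. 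The same holds for the cross derivatives with respect to $\bfY,\bfP$ that hit $\Gamma^q$ twice. After this cancellation, only products of first-derivative objects remain.

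If that cancellation fails, for example for the $\boldsymbol{\pi}$ or $\tilde{\theta}$-through-weights cross terms, the fallback is to strengthen the argument of \Cref{lemma: derivative approximation bound}. We would apply \Cref{assumption: approximation error upper bound} along second-order difference quotients and use the $C^3$ smoothness in \Cref{assump: Gamma smoothness} to pass to the limit, obtaining $\|\nabla^2\Gamma^q-\nabla^2Y_{\mathrm{fp}}\|=O(f(q))$ at $(z_0,Y_0)$.
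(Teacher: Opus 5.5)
Your proposal is correct, and its skeleton matches the paper's proof. Both write $H^q_{\boldsymbol\eta}=-A_q^{-1}B^q_{\boldsymbol\eta}$ and finish with an inverse-perturbation bound. Both observe that $w^q_{im}(\boldsymbol\zeta_0)=w^{\mathrm{NPL}}_{im}(\boldsymbol\zeta_0)$ exactly, because $\Gamma^q(z_0,Y_0)=Y_0$. Both control score and weight-derivative differences by $O(f(q))$ through \Cref{lemma: derivative approximation bound}, and use \Cref{lemma: outer product difference bound} for the products $\bE[\nabla_{\boldsymbol\eta}w\,(\nabla_{\vartheta}g)^{\top}]$.

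Where you differ is in how the second-derivative obstacle is removed, although both routes rest on correct specification at $\boldsymbol\zeta_0$. The paper uses the second-order Bartlett identity for the type-$m$ pseudo-log-likelihood, $\bE[w_{im}\nabla_{\xi^m}s^q_{\vartheta^m i}]=-\bE[w_{im}s^q_{\xi^m i}(s^q_{\vartheta^m i})^{\top}]$. Every Hessian and cross-Hessian then becomes an expectation of products of first-order scores, so no second derivative of $\Gamma^q$ or of $\log\Lambda_j$ ever appears. You instead expand by the chain rule and use only the first-order identity: the weights are true posteriors at $\boldsymbol\zeta_0$, and $\sum_a\Lambda_j(a|x)\nabla_Y\log\Lambda_j(a|x)=0$. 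This kills exactly the terms containing $\nabla^2\Gamma^q$. That step is legitimate, since $\nabla^2\Gamma^q$ at $(z_0,Y_0)$ is nonrandom and only ever multiplies $\nabla_Y\log\Lambda_j$. What survives are second derivatives of $\log\Lambda_j$ at $(\theta_0,Y_0,P_0)$, which do not depend on $q$, multiplied by first Jacobians of $\Gamma^q$, which differ from their NPL counterparts by $O(f(q))$. Your route needs a weaker identity and shows explicitly which terms create the gap. The price is that you need bounded second derivatives of $\log\Lambda_j$ at the truth. These follow immediately from \Cref{assump: LS primitives} and finiteness of $\mX$ and $\mA$, but they are not part of \Cref{lemma: gradient bound}. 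The paper's route is shorter and works entirely with scores.

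Your fallback is never triggered. The $\boldsymbol\pi$ cross term, and every term in which $\boldsymbol\eta$ hits the weights, has the form $\bE[\nabla_{\boldsymbol\eta}w\,(\nabla_\vartheta g)^{\top}]$ and involves only first derivatives of $\Gamma^q$. The term through the fourth argument of $\Gamma^q$, $\nabla_Y\log\Lambda_j\cdot\Gamma^q_{\theta\tilde\theta}$, is removed by the same cancellation. The fallback also would not work as stated. \Cref{assumption: approximation error upper bound} bounds $\Gamma^q-Y_{\mathrm{fp}}$ by a quantity that is of exact order $h$ along a generic line $z_0+hv$. The inequality therefore carries no information about the $h^2$ coefficient, and second-order difference quotients do not yield $\|\nabla^2(\Gamma^q-Y_{\mathrm{fp}})\|=O(f(q))$.
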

\begin{proof}
    For each type $m$, let $\xi^m \coloneqq (\vartheta^m,Y^m,P^m,\theta^m)$ and define the type-specific criterion as
    \[
        r_i^{q}(\xi^m)
        \coloneqq \log \bigl( \mL_i^{q}(\xi^m) \bigr)=
        \log p_*^m(x_{i1};P^m) + \sum_{t=1}^{T}\sum_{j=1}^{J}
        \log\Lambda_j\bigl(\vartheta^m,\Gamma^q(\xi^m),P^m\bigr)(a_{jit}|x_{it}),
    \]
    which does not depend on $\boldsymbol{\pi}$ since $\boldsymbol{\pi}$ enters $\ell_E^q$ only through the posterior weight $w_{im}^q$. Let $\vartheta_0^m=\theta_0^m$ for each type $m$, and write $\xi_0^m \coloneqq (\vartheta_0^m,Y_0^m,P_0^m,\theta_0^m)$. Define the score for each type as $s_{\xi^m i}^{q}(\xi^m)\coloneqq\nabla_{\xi^m} r_i^{q}(\xi^m)$.
    
       Define $\rho_i^{q}(\vartheta^m;\boldsymbol{\zeta}) \coloneqq w_{im}^{q}(\boldsymbol{\zeta})\, r_i^{q}(\xi^m)$ so that $\nabla_{\boldsymbol{\vartheta}\boldsymbol{\eta}}^2\ell_E^q(\boldsymbol{\vartheta};\boldsymbol{\zeta}) = \sum_{m=1}^M \bE [\nabla_{\boldsymbol{\vartheta}\boldsymbol{\eta}}^2 \rho_i^{q}(\vartheta^m;\boldsymbol{\zeta})]$.  Observe that $\nabla_{\vartheta^m} w_{im}^q(\boldsymbol{\zeta})=0$ since $w_{im}^q(\boldsymbol{\zeta})$ does not depend on ${\vartheta^m}$. It follows that
    \[
        \nabla_{ \xi^m \vartheta^m}^2
        \rho_i^{q}(\vartheta_0^m;\boldsymbol{\zeta}_0)
        =
 	(\nabla_{\xi^m} w_{im}^q(\boldsymbol{\zeta}_0))s_{\vartheta^m i}^{q}(\xi_0^m) + w_{im}^q(\boldsymbol{\zeta}_0) \nabla_{\xi^m} s_{\vartheta^m i}^{q}(\xi_0^m).
    \]
        For the second term on the right-hand side, we can apply the Bartlett identity because $r_i^q$ is the correctly specified type-$m$ log-density at the fixed point. Consequently, $\bE\!\left[w_{im}^q(\boldsymbol{\zeta}_0) \nabla_{\xi^m} s_{\vartheta^m i}^{q}(\xi_0^m)\right]= -\bE\!\left[w_{im}^q(\boldsymbol{\zeta}_0) s_{\xi^m i}^q(s_{\vartheta^m i}^q)^{\top}\right]$, and hence
    \begin{equation} \label{eq: E-derivative}
            \bE\!\left[\nabla_{ \xi^m \vartheta^m}^2
        \rho_i^{q}(\vartheta_0^m;\boldsymbol{\zeta}_0)\right]
        =
 	\bE\!\left[(\nabla_{\xi^m} w_{im}^q(\boldsymbol{\zeta}_0))s_{\vartheta^m i}^{q}(\xi_0^m) \right] -\bE\!\left[w_{im}^q(\boldsymbol{\zeta}_0) s_{\xi^m i}^q(s_{\vartheta^m i}^q)^{\top}\right].
    \end{equation}
For the $\vartheta^m$ component of $\xi^m$, the first term on the right-hand side of \eqref{eq: E-derivative} vanishes because $\nabla_{\vartheta^m} w_{im}^q=0$, so the M-step Hessian is given by the Bartlett term alone.
The gradient of $w_{im}^q(\boldsymbol{\zeta}_0)$ does depend on $q$. Write $w_{im}^q(\boldsymbol{\zeta})$ as
    \[
        w_{im}^q(\boldsymbol{\zeta}) = \frac{\pi_m\exp\bigl(r_i^q(\theta^m,Y^m,P^m,\theta^m)\bigr)}{\sum_{\ell=1}^M \pi_{\ell} \exp\bigl(r_i^q(\theta^{\ell},Y^{\ell},P^{\ell},\theta^{\ell})\bigr)}.
    \]
    For $\eta\in\{\theta,Y,P\}$ and each type $\ell$, let $s_{\eta i}^{q,\ell} \coloneqq \nabla_\eta r_i^q(\theta^\ell,Y^\ell,P^\ell,\theta^\ell)$, where for $\eta=\theta$ the derivative is total over the first and fourth arguments of $r_i^q$. Then, for each type $\ell$,
    \[
        \nabla_{\eta^\ell} w_{im}^q(\boldsymbol{\zeta}) =  w_{im}^q(\boldsymbol{\zeta}) \bigl[ s_{\eta i}^{q,m} \mathbbm{1}\{\ell=m\} -  w_{i\ell}^q(\boldsymbol{\zeta})s_{\eta i}^{q,\ell} \bigr].
    \]
    Since $\Gamma^q(\xi_0^m) = \Gamma^{\mathrm{NPL}}(\xi_0^m) = Y_0^m$ by \Cref{assumption: same fixed point} and the definition of $\Gamma^{\mathrm{NPL}}$, $w_{im}^q(\boldsymbol{\zeta}_0) = w_{im}^{\mathrm{NPL}}(\boldsymbol{\zeta}_0)$ exactly. From \Cref{lemma: derivative approximation bound}, \Cref{lemma: gradient bound}, and the chain rule, we have
    \begin{equation}\label{eq: s_diff}
        \sup_{i,m}\|s_{\xi^m i}^{q}(\xi_0^m)-s_{\xi^m i}^{\mathrm{NPL}}(\xi_0^m)\|=O(f(q)),
    \end{equation}
    and since $\theta_0^\ell=\vartheta_0^\ell$ for every type $\ell$, the same bound holds with $s_{\eta i}^{q,\ell}$ evaluated at $(\theta_0^\ell,Y_0^\ell,P_0^\ell,\theta_0^\ell)$ in place of $s_{\xi^\ell i}^q(\xi_0^\ell)$. In conjunction with this, we obtain, for $\eta\in\{\theta,Y,P\}$,
    \begin{equation}\label{eq: w_diff}
        \sup_{i,m,\ell}\left\|\nabla _{\eta^\ell} w_{im}^q(\boldsymbol{\zeta}_0) - \nabla _{\eta^\ell} w_{im}^{\mathrm{NPL}}(\boldsymbol{\zeta}_0)\right\| = O(f(q)),
    \end{equation}
    where $\ell$ ranges over both the own type ($\ell=m$) and cross-type ($\ell\neq m$) contributions.

    We analyze the derivative of $\nabla_{\vartheta^m}\rho_i^{q}(\vartheta^m;\boldsymbol{\zeta})$ with respect to $\boldsymbol{\pi}$. Since $w_{im}^q(\boldsymbol{\zeta})$ does not depend on ${\vartheta^m}$ and $r_i^{q}(\xi^m)$ does not depend on $\boldsymbol{\pi}$, we obtain
            \begin{equation} \label{eq: E-derivative_2}
            \bE\!\left[\nabla_{\boldsymbol{\pi} \vartheta^m}^2
        \rho_i^{q}(\vartheta_0^m;\boldsymbol{\zeta}_0)\right]
        =
 	\bE\!\left[(\nabla_{\boldsymbol{\pi}} w_{im}^q(\boldsymbol{\zeta}_0)) s_{\vartheta^m i}^{q}(\xi_0^m) \right].
    \end{equation}
To avoid differentiating under the simplex constraint $\sum_{m=1}^M \pi_m=1$, reparameterize $\boldsymbol{\pi}$ via the multinomial-logit link $\pi_m = \exp(\gamma_m)/\sum_{\ell=1}^M\exp(\gamma_\ell)$, with $\gamma_M\coloneqq 0$ for identification, so that $\boldsymbol{\gamma}=(\gamma_1,\ldots,\gamma_{M-1})\in\bR^{M-1}$ is unconstrained. Substituting into $w_{im}^q(\boldsymbol{\zeta})$ gives
    \[
        w_{im}^q(\boldsymbol{\zeta}) = \frac{\exp\bigl(\gamma_m + r_i^q(\theta^m,Y^m,P^m,\theta^m)\bigr)}{\sum_{\ell=1}^M \exp\bigl(\gamma_\ell + r_i^q(\theta^\ell,Y^\ell,P^\ell,\theta^\ell)\bigr)}.
    \]
By the same computation as for $\nabla_{\eta^\ell} w_{im}^q$, for each $\ell=1,\ldots,M-1$,
    \[
        \nabla_{\gamma_\ell} w_{im}^q(\boldsymbol{\zeta}) = w_{im}^q(\boldsymbol{\zeta}) \bigl[\mathbbm{1}\{\ell=m\}-w_{i\ell}^q(\boldsymbol{\zeta})\bigr].
    \]
    Because $w_{im}^q(\boldsymbol{\zeta}_0) = w_{im}^{\mathrm{NPL}}(\boldsymbol{\zeta}_0)$ exactly for every type as shown above, $\nabla_{\gamma_\ell} w_{im}^q(\boldsymbol{\zeta}_0) = \nabla_{\gamma_\ell} w_{im}^{\mathrm{NPL}}(\boldsymbol{\zeta}_0)$ exactly. Therefore, $\nabla_{\boldsymbol{\pi}} w_{im}^q(\boldsymbol{\zeta}_0)=\nabla_{\boldsymbol{\pi}} w_{im}^{\mathrm{NPL}}(\boldsymbol{\zeta}_0)$ holds.

    \eqref{eq: E-derivative}, \eqref{eq: s_diff}, \eqref{eq: w_diff}, \eqref{eq: E-derivative_2}, \Cref{lemma: outer product difference bound}, and boundedness of posterior weights and scores imply, for $\boldsymbol{\eta}\in\{\boldsymbol{\vartheta},\boldsymbol{\theta},\bfY,\bfP,\boldsymbol{\pi}\}$,
    \[
        \left\|
        \nabla_{\boldsymbol{\vartheta}\boldsymbol{\eta}}^2\ell_E^q(\boldsymbol{\vartheta}_0;\boldsymbol{\zeta}_0)
        -
        \nabla_{\boldsymbol{\vartheta}\boldsymbol{\eta}}^2\ell_E^{\mathrm{NPL}}(\boldsymbol{\vartheta}_0;\boldsymbol{\zeta}_0)
        \right\|_F
        =
        O(f(q)).
    \]
Since $\nabla_{\boldsymbol{\vartheta}\boldsymbol{\vartheta}}^2\ell_E^{\mathrm{NPL}}$ and $\nabla_{\boldsymbol{\vartheta}\boldsymbol{\vartheta}}^2\ell_E^q$ are nonsingular from \Cref{assumption: mixture convergence regularity}, the matrix inverse perturbation bound gives
$\left\|
        \{\nabla_{\boldsymbol{\vartheta}\boldsymbol{\vartheta}}^2\ell_E^q (\boldsymbol{\vartheta}_0;\boldsymbol{\zeta}_0) \}^{-1}
        -
        \{\nabla_{\boldsymbol{\vartheta}\boldsymbol{\vartheta}}^2\ell_E^{\mathrm{NPL}} (\boldsymbol{\vartheta}_0;\boldsymbol{\zeta}_0) \}^{-1}
        \right\|_F
        =
        O(f(q))$, and the stated bound follows from the definition of $H_{\boldsymbol{\eta}}^q$ and $H_{\boldsymbol{\eta}}^{\mathrm{NPL}}$.
\end{proof}

\begin{lemma}[M-step sensitivity approximation: $\boldsymbol{\pi}$] \label{lemma: Pi approximation}
    Suppose Assumptions~\ref{assump: LS primitives}, \ref{assumption: same fixed point}, \ref{assump: LS regularity}, and \ref{assumption: approximation error upper bound} hold. Then, $\Pi_{\boldsymbol{\pi}}^q=\Pi_{\boldsymbol{\pi}}^{\mathrm{NPL}}$ and, for $\boldsymbol{\eta}\in\{\boldsymbol{\theta},\bfY,\bfP\}$,
    \[
        \|\Pi_{\boldsymbol{\eta}}^q-\Pi_{\boldsymbol{\eta}}^{\mathrm{NPL}}\|_F=O(f(q)).
    \]
\end{lemma}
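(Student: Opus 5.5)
We work directly from the definition $\Pi^q(\boldsymbol{\zeta})=\bE[w_{i1}^q(\boldsymbol{\zeta}),\ldots,w_{iM}^q(\boldsymbol{\zeta})]^\top$ and differentiate under the expectation. Interchanging derivative and expectation is justified because the posterior weights lie in $[0,1]$, and their gradients are products of weights and scores. By \Cref{lemma: gradient bound} and the finiteness of $\mX$ and $\mA$, these scores are uniformly bounded near $\boldsymbol{\zeta}_0$. Thus $\Pi^q_{\boldsymbol{\eta}}=\bE[\nabla_{\boldsymbol{\eta}} w_{im}^q(\boldsymbol{\zeta}_0)]_{m=1}^M$, and the analogous formula holds with NPL in place of $q$.

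For the $\boldsymbol{\pi}$-block, we reuse the observation from the proof of \Cref{lemma: H approximation}: $w_{im}^q(\boldsymbol{\zeta}_0)=w_{im}^{\mathrm{NPL}}(\boldsymbol{\zeta}_0)$ exactly. This holds because $\Gamma^q(\xi_0^m)=Y_{\mathrm{fp}}(\xi_0^m)=Y_0^m$ by \Cref{assumption: same fixed point}, so $r_i^q(\xi_0^m)=r_i^{\mathrm{NPL}}(\xi_0^m)$ for every type and every observation. Since $r_i^q$ does not involve $\boldsymbol{\pi}$, the derivative $\nabla_{\boldsymbol{\pi}} w_{im}^q$ at $\boldsymbol{\zeta}_0$ depends only on $\boldsymbol{\pi}_0$ and on the values $r_i^q(\xi_0^\ell)$ for $\ell=1,\ldots,M$. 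In the free coordinates, with $\pi_M=1-\sum_{m<M}\pi_m$,
\[
\partial_{\pi_\ell} w_{im} = \frac{w_{im}}{\pi_m}\mathbbm{1}\{\ell=m\} - \frac{w_{im}}{\pi_M}\mathbbm{1}\{m=M\} - w_{im}\Bigl(\frac{w_{i\ell}}{\pi_\ell}-\frac{w_{iM}}{\pi_M}\Bigr).
\]
Equivalently, one can pass through the logit reparameterization used in \Cref{lemma: H approximation}. Either way, $\nabla_{\boldsymbol{\pi}} w_{im}^q(\boldsymbol{\zeta}_0)=\nabla_{\boldsymbol{\pi}} w_{im}^{\mathrm{NPL}}(\boldsymbol{\zeta}_0)$ holds pointwise in the data. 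Taking expectations gives $\Pi^q_{\boldsymbol{\pi}}=\Pi^{\mathrm{NPL}}_{\boldsymbol{\pi}}$ exactly.

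For $\eta\in\{\theta,Y,P\}$, we use the formula
\[
\nabla_{\eta^\ell} w_{im}^q = w_{im}^q\bigl[s_{\eta i}^{q,m}\mathbbm{1}\{\ell=m\}-w_{i\ell}^q s_{\eta i}^{q,\ell}\bigr]
\]
derived in the proof of \Cref{lemma: H approximation}. Subtracting its NPL counterpart, the weights cancel exactly at $\boldsymbol{\zeta}_0$, and what remains is a weighted difference of scores, bounded by
\[
2\sup_{i,\ell}\|s_{\eta i}^{q,\ell}-s_{\eta i}^{\mathrm{NPL},\ell}\|.
\]
By \eqref{eq: s_diff}, this supremum is $O(f(q))$. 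Equation \eqref{eq: s_diff} itself follows from the chain rule: \Cref{lemma: derivative approximation bound} controls $\Gamma^q_z-(Y_{\mathrm{fp}})_z$ and $\Gamma^q_Y-(Y_{\mathrm{fp}})_{\tilde Y}$, while \Cref{lemma: gradient bound} bounds $\nabla_Y\log\Lambda$ and the other primitive gradients. The $p_*$ term cancels entirely because it does not involve $\Gamma^q$. This is exactly \eqref{eq: w_diff}. Since the bound is uniform in $i$, taking expectations yields $\|\Pi^q_{\boldsymbol{\eta}}-\Pi^{\mathrm{NPL}}_{\boldsymbol{\eta}}\|_F=O(f(q))$, where the Frobenius norm is handled by finite dimensionality ($M$ rows, finitely many columns).

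The main point requiring care is the $\theta$-block, because $\theta^\ell$ enters $r_i^q$ both as the current parameter and as the fourth argument $\tilde\theta$. The total derivative therefore involves both $\Gamma^q_\theta$ and $\Gamma^q_{\tilde\theta}$. Both are components of $\Gamma^q_z$, so \Cref{lemma: derivative approximation bound} still gives an $O(f(q))$ bound. This is the step I would check most carefully when invoking \eqref{eq: s_diff}.
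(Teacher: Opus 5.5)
Your proposal is correct and follows essentially the same route as the paper: exact equality of the posterior weights at $\boldsymbol{\zeta}_0$ (from \Cref{assumption: same fixed point}) gives $\Pi^q_{\boldsymbol{\pi}}=\Pi^{\mathrm{NPL}}_{\boldsymbol{\pi}}$, and the other blocks follow from \eqref{eq: w_diff}, which in turn comes from the score-difference bound \eqref{eq: s_diff}. The only cosmetic difference is that you differentiate directly in the free coordinates $\pi_M=1-\sum_{m<M}\pi_m$ (your formula is correct), whereas the paper passes through the multinomial-logit link and then multiplies by a $q$-independent Jacobian.
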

\begin{proof}
    Recall $\Pi^q(\boldsymbol{\zeta}) = \bE[w_{i1}^q(\boldsymbol{\zeta}),\ldots,w_{iM}^q(\boldsymbol{\zeta})]^{\top}$. For $\Pi_{\boldsymbol{\pi}}^q$, since $\boldsymbol{\pi}_0 \in \mathrm{int}(\Delta^{M-1})$ from \Cref{assump: LS compactness}, the multinomial-logit link used in the proof of \Cref{lemma: H approximation} is continuously differentiable and invertible in a neighborhood of $\boldsymbol{\pi}_0$, with inverse Jacobian $\partial\boldsymbol{\gamma}/\partial\boldsymbol{\pi}$ that does not depend on $q$. That proof shows $\nabla_{\gamma_\ell}w_{im}^q(\boldsymbol{\zeta}_0)=\nabla_{\gamma_\ell}w_{im}^{\mathrm{NPL}}(\boldsymbol{\zeta}_0)$ exactly for every $(m,\ell)$. Multiplying by the same $q$-independent Jacobian gives $\nabla_{\pi_\ell}w_{im}^q(\boldsymbol{\zeta}_0)=\nabla_{\pi_\ell}w_{im}^{\mathrm{NPL}}(\boldsymbol{\zeta}_0)$, and hence $\Pi_{\boldsymbol{\pi}}^q=\Pi_{\boldsymbol{\pi}}^{\mathrm{NPL}}$.
    
    For $\Pi_{\boldsymbol{\eta}}^q$, recall that $\Pi_{\boldsymbol{\eta}}^q$ has $(m,\ell)$ block $\bE[\nabla_{\eta^\ell}w_{im}^q(\boldsymbol{\zeta}_0)]$. For $\eta\in\{\theta,Y,P\}$, \eqref{eq: w_diff} in the proof of \Cref{lemma: H approximation} gives $\sup_{i,m,\ell}\|\nabla_{\eta^\ell}w_{im}^q(\boldsymbol{\zeta}_0)-\nabla_{\eta^\ell}w_{im}^{\mathrm{NPL}}(\boldsymbol{\zeta}_0)\|=O(f(q))$. The stated result follows immediately.
\end{proof}

\begin{lemma}[Mixture convergence matrix difference] \label{lemma: M difference bound}
    Suppose Assumptions~\ref{assump: stationary initial}, \ref{assump: LS primitives}, \ref{assumption: same fixed point}, \ref{assump: LS regularity}, \ref{assumption: approximation error upper bound}, and \ref{assumption: mixture convergence regularity} hold. Then
    \begin{equation*}
        \| R^{q} - R^{\mathrm{NPL}} \|_{F} = O(f(q)).
    \end{equation*}
\end{lemma}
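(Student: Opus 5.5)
We bound $R^q - R^{\mathrm{NPL}}$ block by block in the $(\boldsymbol{\pi},\boldsymbol{\theta},\bfY,\bfP)$ ordering of \Cref{proposition: mixture convergence matrix}. The Frobenius norm of the whole matrix is at most the sum of the Frobenius norms of the sixteen blocks, so it suffices to show each block difference is $O(f(q))$.

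The first two block rows are immediate. In the first row, \Cref{lemma: Pi approximation} gives $\Pi^q_{\boldsymbol{\pi}}=\Pi^{\mathrm{NPL}}_{\boldsymbol{\pi}}$ and $\|\Pi^q_{\boldsymbol{\eta}}-\Pi^{\mathrm{NPL}}_{\boldsymbol{\eta}}\|_F=O(f(q))$ for $\boldsymbol{\eta}\in\{\boldsymbol{\theta},\bfY,\bfP\}$. In the second row, \Cref{lemma: H approximation} gives $\|H^q_{\boldsymbol{\eta}}-H^{\mathrm{NPL}}_{\boldsymbol{\eta}}\|_F=O(f(q))$ for every $\boldsymbol{\eta}\in\{\boldsymbol{\pi},\boldsymbol{\theta},\bfY,\bfP\}$.

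The third and fourth rows consist of products and sums of Jacobians of $\boldsymbol{\Gamma}^q$ or $\boldsymbol{\Lambda}^q$ with the $H$ blocks. For the products we use the elementary decomposition
\[
\|A^qB^q-A^{\mathrm{NPL}}B^{\mathrm{NPL}}\|_F\le \|A^q-A^{\mathrm{NPL}}\|_F\|B^q\|_F+\|A^{\mathrm{NPL}}\|_F\|B^q-B^{\mathrm{NPL}}\|_F .
\]
This reduces the task to three ingredients.

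\emph{(a) Jacobian approximation for $\boldsymbol{\Gamma}^q$.} We need $\|\boldsymbol{\Gamma}^q_{\boldsymbol{\eta}}-\boldsymbol{\Gamma}^{\mathrm{NPL}}_{\boldsymbol{\eta}}\|_F=O(f(q))$ for $\boldsymbol{\eta}\in\{\boldsymbol{\theta},\bfY,\bfP,\tilde{\boldsymbol{\theta}}\}$. This follows from \Cref{lemma: derivative approximation bound} applied type by type, since $z=(\theta,P,\tilde\theta)$ contains the $\boldsymbol{\theta}$, $\bfP$ and $\tilde{\boldsymbol{\theta}}$ directions. Block-diagonality across types means the bounds simply add over $m$.

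\emph{(b) Jacobian approximation for $\boldsymbol{\Lambda}^q$.} Since $\Lambda^q(\theta,Y,P,\tilde\theta)=\Lambda(\theta,\Gamma^q(\theta,Y,P,\tilde\theta),P)$, the chain rule gives, for example, $\Lambda^q_{\theta}=\Lambda_\theta+\Lambda_Y\Gamma^q_\theta$ and $\Lambda^q_{P}=\Lambda_P+\Lambda_Y\Gamma^q_P$. The partials of $\Lambda$ are evaluated at $(\theta_0,\Gamma^q(z_0,Y_0),P_0)=(\theta_0,Y_0,P_0)$, which does not depend on $q$ by \Cref{assumption: same fixed point}. Hence every difference $\boldsymbol{\Lambda}^q_{\boldsymbol{\eta}}-\boldsymbol{\Lambda}^{\mathrm{NPL}}_{\boldsymbol{\eta}}$ equals $\Lambda_Y(\Gamma^q_{\eta}-\Gamma^{\mathrm{NPL}}_{\eta})$, which is $O(f(q))$ because $\Lambda_Y$ is bounded and (a) holds.

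\emph{(c) Uniform boundedness.} The factors multiplying a difference must be bounded uniformly in $q$. For $\boldsymbol{\Gamma}^q_{\boldsymbol{\eta}}$ this is \Cref{lemma: gradient bound9}; for $\boldsymbol{\Lambda}^q_{\boldsymbol{\eta}}$ it follows from the chain rule together with \Cref{lemma: gradient bound9}. The $H^q_{\boldsymbol{\eta}}$ blocks are bounded because $H^{\mathrm{NPL}}_{\boldsymbol{\eta}}$ is finite and $H^q_{\boldsymbol{\eta}}$ lies within $O(f(q))$ of it by \Cref{lemma: H approximation}; for the finitely many small $q$, we enlarge the constant.

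Combining (a)--(c) with the product decomposition shows that every block in the third and fourth rows differs by $O(f(q))$, and summing over blocks yields the claim.

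The main obstacle is bookkeeping rather than analysis, and it arises at two points. First, $\Gamma^q_Y$ is a total derivative that includes the center $\tilde Y=Y$; it must be matched to $(Y_{\mathrm{fp}})_{\tilde Y}$, which is exactly the second bound of \Cref{lemma: derivative approximation bound}. Second, the current-$\theta$ and previous-$\tilde\theta$ directions must be separated correctly: \Cref{lemma: derivative approximation bound} bounds the full $z$-Jacobian, and a sub-block of that Jacobian has Frobenius norm no larger than the full Jacobian's, so both directions inherit the $O(f(q))$ bound.
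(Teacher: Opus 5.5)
Your proposal is correct and follows essentially the same route as the paper's proof. Rows one and two come from the $\Pi$ and $H$ approximation lemmas, and rows three and four from a telescoping product bound, the derivative approximation lemma for the $\boldsymbol{\Gamma}^q$ Jacobians, and the chain-rule identity $\boldsymbol{\Lambda}^q_{\boldsymbol{\eta}}-\boldsymbol{\Lambda}^{\mathrm{NPL}}_{\boldsymbol{\eta}}=\boldsymbol{\Lambda}_{\bfY}(\boldsymbol{\Gamma}^q_{\boldsymbol{\eta}}-\boldsymbol{\Gamma}^{\mathrm{NPL}}_{\boldsymbol{\eta}})$. The only cosmetic difference is the telescoping order: you keep $H^q$ on the right, so you need (and correctly supply) uniform boundedness of $H^q$. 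The paper writes $\boldsymbol{\Gamma}^q_{\boldsymbol{\theta}}(H^q-H^{\mathrm{NPL}})+(\boldsymbol{\Gamma}^q_{\boldsymbol{\theta}}-\boldsymbol{\Gamma}^{\mathrm{NPL}}_{\boldsymbol{\theta}})H^{\mathrm{NPL}}$ and relies implicitly on the uniform bound for $\boldsymbol{\Gamma}^q_{\boldsymbol{\theta}}$ instead.
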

\begin{proof}
    The first row of $R^q - R^{\mathrm{NPL}}$ is $O(f(q))$ since $\Pi_{\boldsymbol{\pi}}^q = \Pi_{\boldsymbol{\pi}}^{\mathrm{NPL}}$ and $\|\Pi_{\boldsymbol{\eta}}^q - \Pi_{\boldsymbol{\eta}}^{\mathrm{NPL}}\|_F = O(f(q))$ for $\boldsymbol{\eta}\in\{\boldsymbol{\theta},\bfY,\bfP\}$ from \Cref{lemma: Pi approximation}. The second row of $R^q - R^{\mathrm{NPL}}$ is $O(f(q))$ since $\|H^q_{\boldsymbol{\eta}} - H^{\mathrm{NPL}}_{\boldsymbol{\eta}}\|_F = O(f(q))$ for $\boldsymbol{\eta}\in\{\boldsymbol{\theta},\bfY,\bfP,\boldsymbol{\pi}\}$ by \Cref{lemma: H approximation}. We bound the $(3,1)$ block of $R^q - R^{\mathrm{NPL}}$, $\boldsymbol{\Gamma}^q_{\boldsymbol{\theta}} H^q_{{\boldsymbol{\pi}}} - \boldsymbol{\Gamma}^{\mathrm{NPL}}_{\boldsymbol{\theta}} H^{\mathrm{NPL}}_{{\boldsymbol{\pi}}}$. Note that $\|\boldsymbol{\Gamma}^q_{\boldsymbol{\eta}} - \boldsymbol{\Gamma}^{\mathrm{NPL}}_{\boldsymbol{\eta}}\|_F = O(f(q))$ for $\boldsymbol{\eta} \in \{\boldsymbol{\theta}, \bfY, \bfP, \tilde{\boldsymbol{\theta}}\}$ from \Cref{lemma: derivative approximation bound}. Using the identity $\boldsymbol{\Gamma}^q_{\boldsymbol{\theta}}H^q_{{\boldsymbol{\pi}}} - \boldsymbol{\Gamma}^{\mathrm{NPL}}_{\boldsymbol{\theta}} H^{\mathrm{NPL}}_{{\boldsymbol{\pi}}} = \boldsymbol{\Gamma}^q_{\boldsymbol{\theta}} (H^q_{{\boldsymbol{\pi}}}-H^{\mathrm{NPL}}_{{\boldsymbol{\pi}}}) + (\boldsymbol{\Gamma}^q_{\boldsymbol{\theta}} -\boldsymbol{\Gamma}^{\mathrm{NPL}}_{\boldsymbol{\theta}})H^{\mathrm{NPL}}_{{\boldsymbol{\pi}}}$, the triangle inequality, and submultiplicativity of the Frobenius norm, we obtain $\|\boldsymbol{\Gamma}^q_{\boldsymbol{\theta}}H^q_{{\boldsymbol{\pi}}} - \boldsymbol{\Gamma}^{\mathrm{NPL}}_{\boldsymbol{\theta}}H^{\mathrm{NPL}}_{{\boldsymbol{\pi}}}\|_F = O(f(q))$. Applying a similar argument to the other blocks, we conclude that the third row of $R^q - R^{\mathrm{NPL}}$ is $O(f(q))$.
    
    We bound the $(4,1)$ block of $R^q - R^{\mathrm{NPL}}$, $\boldsymbol{\Lambda}^q_{\boldsymbol{\theta}} H^q_{{\boldsymbol{\pi}}} - \boldsymbol{\Lambda}^{\mathrm{NPL}}_{\boldsymbol{\theta}} H^{\mathrm{NPL}}_{{\boldsymbol{\pi}}}$. The chain rule gives $\boldsymbol{\Lambda}^q_{\boldsymbol{\eta}} - \boldsymbol{\Lambda}^{\mathrm{NPL}}_{\boldsymbol{\eta}} = \boldsymbol{\Lambda}_{\bfY}\bigl(\boldsymbol{\Gamma}^q_{\boldsymbol{\eta}} - \boldsymbol{\Gamma}^{\mathrm{NPL}}_{\boldsymbol{\eta}}\bigr)$ for $\boldsymbol{\eta}\in\{\boldsymbol{\theta},\bfY,\bfP,\tilde{\boldsymbol{\theta}}\}$, so $\|\boldsymbol{\Lambda}^q_{\boldsymbol{\eta}} - \boldsymbol{\Lambda}^{\mathrm{NPL}}_{\boldsymbol{\eta}}\|_F \leq \|\boldsymbol{\Lambda}_{\bfY}\|_F \|\boldsymbol{\Gamma}^q_{\boldsymbol{\eta}} - \boldsymbol{\Gamma}^{\mathrm{NPL}}_{\boldsymbol{\eta}}\|_F = O(f(q))$. Hence, using the same argument as above gives $\|\boldsymbol{\Lambda}^q_{\boldsymbol{\theta}} H^q_{{\boldsymbol{\pi}}} - \boldsymbol{\Lambda}^{\mathrm{NPL}}_{\boldsymbol{\theta}} H^{\mathrm{NPL}}_{{\boldsymbol{\pi}}}\|_F = O(f(q))$, and, similarly, the fourth row of $R^q - R^{\mathrm{NPL}}$ is $O(f(q))$. Taking the Frobenius norm of the full matrix difference gives $\|R^q - R^{\mathrm{NPL}}\|_F = O(f(q))$.
\end{proof}

\subsection{Proofs in \Cref{sec: estimator} and \Cref{sec: theory}}

\begin{proof}[\textbf{Proof of \Cref{lemma: algorithm limit is fixed point}}]
By hypothesis, $(\boldsymbol{\theta}^{(k)}, \boldsymbol{\pi}^{(k)}, \bfY^{(k)}, \bfP^{(k)}) \to (\hat{\boldsymbol{\theta}}, \hat{\boldsymbol{\pi}}, \hat{\bfY}, \hat{\bfP})$. Then, $\hat{\bfY} = \boldsymbol{\Gamma}^q(\hat{\boldsymbol{\theta}}, \hat{\bfY}, \hat{\bfP}, \hat{\boldsymbol{\theta}})$ and $\hat{\bfP} = \boldsymbol{\Lambda}(\hat{\boldsymbol{\theta}}, \hat{\bfY}, \hat{\bfP})$ hold since $\Gamma^q$ and $\Lambda$ are continuous from \Cref{assump: LS smoothness}.

For the first-order condition, we apply the classical EM stationarity argument of \citet{dempster1977maximum} and \citet{wu1983convergence} to the type-by-type M-step with nuisance parameters $(\bfY,\bfP)$. Let $w_{im}(\vartheta,\varpi) \coloneqq \varpi_m \mL_i^{q}(\vartheta^m;\hat{Y}^m,\hat{P}^m,\hat{\theta}^m) / \sum_{m'} \varpi_{m'} \mL_i^{q}(\vartheta^{m'};\hat{Y}^{m'},\hat{P}^{m'},\hat{\theta}^{m'})$ denote the posterior weight $(\vartheta,\varpi)$ induces. By \Cref{assump: LS positive ccp}, $w_{im}^{(k)} \to \hat{w}_{im} \coloneqq w_{im}(\hat{\boldsymbol{\theta}},\hat{\boldsymbol{\pi}})$.  For $(\vartheta,\varpi) \in \Theta^M \times \Delta^{M-1}$, define
    \[
        Q_N(\vartheta,\varpi) \coloneqq \frac{1}{N}\sum_i \sum_m \hat{w}_{im} \log\bigl(\varpi_m\, \mL_i^{q}(\vartheta^m;\hat{Y}^m,\hat{P}^m,\hat{\theta}^m)\bigr),
    \]
and define $H_N(\vartheta,\varpi) \coloneqq \frac{1}{N}\sum_i \sum_m \hat{w}_{im} \log w_{im}(\vartheta,\varpi)$. Because $\log\bigl(\sum_m \varpi_m \mL_i^{q}(\vartheta^m;\ldots)\bigr) = \sum_m \hat{w}_{im}\bigl[\log(\varpi_m \mL_i^{q}(\vartheta^m;\ldots)) - \log w_{im}(\vartheta,\varpi)\bigr]$ for any weights summing to one, we have the identity
    \begin{equation} \label{eq: Q minus H}
        \cQ_N^q(\vartheta,\varpi;\hat{\bfY},\hat{\bfP},\hat{\boldsymbol{\theta}}) = Q_N(\vartheta,\varpi) - H_N(\vartheta,\varpi), \qquad \forall (\vartheta,\varpi).
    \end{equation}
    We show that $(\hat{\boldsymbol{\theta}},\hat{\boldsymbol{\pi}})$ maximizes each of $Q_N$ and $H_N$ over $\Theta^M \times \Delta^{M-1}$.
For each type $m$, the M-step inequality \eqref{eq: M-step_obj} holds at every $k$; taking $k \to \infty$ and using continuity of $\mL_i^{q}$ and $w_{im}^{(k)} \to \hat{w}_{im}$ gives
    \[
        \sum_i \hat{w}_{im} \log \mL_i^{q}(\hat{\theta}^m;\hat{Y}^m,\hat{P}^m,\hat{\theta}^m) \geq \sum_i \hat{w}_{im} \log \mL_i^{q}(\vartheta^m;\hat{Y}^m,\hat{P}^m,\hat{\theta}^m), \qquad \forall \vartheta^m \in \Theta.
    \]
    Since $\varpi \mapsto \sum_m (N^{-1}\sum_i \hat{w}_{im}) \log \varpi_m$ is strictly concave on the simplex, the vector with components $N^{-1}\sum_i \hat{w}_{im}$ is its unique maximizer. Further, $\pi_m^{(k)} = N^{-1}\sum_i w_{im}^{(k)} \to N^{-1}\sum_i \hat{w}_{im} = \hat{\pi}_m$. Consequently,
    \begin{equation} \label{eq: Q_N argmax}
        (\hat{\boldsymbol{\theta}}, \hat{\boldsymbol{\pi}}) = \argmax_{(\vartheta,\varpi) \in \Theta^M \times \Delta^{M-1}} Q_N(\vartheta,\varpi).
    \end{equation}
Because $w_{im}(\hat{\boldsymbol{\theta}},\hat{\boldsymbol{\pi}}) = \hat{w}_{im}$, for any $(\vartheta,\varpi)$, Jensen's inequality gives
    \[
        H_N(\hat{\boldsymbol{\theta}},\hat{\boldsymbol{\pi}}) - H_N(\vartheta,\varpi) \geq 0.
    \]

Substituting $\varpi_M = 1 - \sum_{m=1}^{M-1}\varpi_m$ to eliminate the simplex constraint, $Q_N$ and $H_N$ are continuously differentiable in $(\boldsymbol{\vartheta}, \varpi_1, \ldots, \varpi_{M-1})$ from \Cref{assump: LS smoothness}. Since $(\hat{\boldsymbol{\theta}},\hat{\boldsymbol{\pi}})$ is an interior point of $\Theta^M \times \Delta^{M-1}$ by hypothesis, the first-order conditions give $\nabla Q_N(\hat{\boldsymbol{\theta}},\hat{\boldsymbol{\pi}}) = \nabla H_N(\hat{\boldsymbol{\theta}},\hat{\boldsymbol{\pi}}) = 0$ in this parametrization. By \eqref{eq: Q minus H}, $\nabla_{(\vartheta,\varpi)} \cQ_N^q(\hat{\boldsymbol{\theta}},\hat{\boldsymbol{\pi}};\hat{\bfY},\hat{\bfP},\hat{\boldsymbol{\theta}})= 0$.
\end{proof}

\begin{proof}[\textbf{Proof of \Cref{theorem: truncation invariance}}]
Under Assumptions \ref{assumption: theta separability} and \ref{assumption: same fixed point}, two facts hold. First, $\theta$-separability implies $Y_{\mathrm{fp}}(\theta, P, \tilde{\theta}, \tilde{Y})$ does not depend on $\theta$; we write $Y_{\mathrm{fp}}(P, \tilde{\theta}, \tilde{Y})$ for the fixed point. Second, since $G$ does not depend on the current $\theta$, neither do $\Gamma^q$ and $Y_{\mathrm{fp}}$, and by \Cref{assumption: same fixed point}, $Y = \Gamma^q(Y, P, \tilde{\theta})$ holds if and only if $Y = Y_{\mathrm{fp}}(P, \tilde{\theta}, Y)$.

We verify that all three conditions for $\mathcal{E}_N^q$ and $\mathcal{E}_N$ are equivalent. The $P$-condition does not involve $\Gamma^q$ or $Y_{\mathrm{fp}}$ and is therefore equivalent. From the second fact, the $Y$-conditions $Y = \Gamma^q(Y, P, \tilde{\theta})$ and $Y = Y_{\mathrm{fp}}(P, \tilde{\theta}, Y)$ are equivalent. For the argmax condition, consider any $(\boldsymbol{\theta}, \boldsymbol{\pi}, \bfY, \bfP)$ satisfying the $Y$-condition. Since neither $\Gamma^q$ nor $Y_{\mathrm{fp}}$ depends on the current $\theta$, for all $\vartheta$ and $q$ we have $\Gamma^q(\vartheta, Y, P, \tilde{\theta}) = \Gamma^q(Y, P, \tilde{\theta}) = Y$ and $Y_{\mathrm{fp}}(\vartheta, P, \tilde{\theta}, Y) = Y_{\mathrm{fp}}(P, \tilde{\theta}, Y) = Y$. Hence $\cQ_N^q$ and its exact-solver counterpart coincide as functions of $(\boldsymbol{\vartheta}, \boldsymbol{\varpi})$, and the argmax conditions are equivalent. Since all three conditions are equivalent, $\mathcal{E}_N^q = \mathcal{E}_N$ for all $q \geq 1$, and since the two pseudo-likelihoods coincide on this set, the estimators are numerically identical.
\end{proof}

\begin{proof}[\textbf{Proof of \Cref{theorem: LS mixture asymptotic}}]
    We prove consistency and asymptotic normality in sequence.

    \medskip
    \textit{Part~I: Consistency.}
    We extend the proof of Proposition~2 of \citet{aguirregabiria2007sequential} (hereafter AM07) to the mixture setting. AM07's proof proceeds in five steps: (1) $(\theta_0, P_0)$ uniquely maximizes the population pseudo-likelihood in the population fixed-point set; (2) every sample fixed point belongs to an arbitrarily small neighborhood of some population fixed point; (3) the sample NPL operator converges uniformly to the population NPL operator in a neighborhood of $P_0$; (4) a sample fixed point near $(\theta_0, P_0)$ exists with probability approaching 1, using the nonsingular Jacobian condition; and (5) the fixed point maximizing the sample pseudo-likelihood converges to $(\theta_0, P_0)$.

    The NPL operator in AM07 maps $P$ to $P$ with $\theta$ determined as a by-product. In our setting, $\boldsymbol{\theta}$ enters $\phi_q(\bfY, \bfP,\tilde{\boldsymbol{\theta}})$ through the $\tilde{\boldsymbol{\theta}}$ argument and cannot be eliminated from the state. We therefore define $\Phi_N$ as a mapping from $(\bfY, \bfP, \boldsymbol{\theta})$ to itself. Specifically, let $\boldsymbol{\alpha}_N(\bfY, \bfP, \boldsymbol{\theta}) \coloneqq \argmax_{\boldsymbol{\alpha}} \cQ_N^q(\boldsymbol{\alpha};\, \bfY, \bfP, \boldsymbol{\theta})$ denote the sample M-step and $\boldsymbol{\theta}_{\boldsymbol{\alpha}_N}$ its $\boldsymbol{\theta}$-component. Define
    \begin{equation*}
        \Phi_N(\bfY, \bfP, \boldsymbol{\theta}) \coloneqq \Bigl(\phi_q(\bfY, \bfP, \boldsymbol{\theta}),\; \boldsymbol{\theta}_{\boldsymbol{\alpha}_N(\bfY, \bfP, \boldsymbol{\theta})}\Bigr),
    \end{equation*}
    with population counterpart $\Phi_0$ defined analogously using $\cQ^q$. The EM-NPL($q$) fixed-point equation is $(\bfY, \bfP, \boldsymbol{\theta}) = \Phi_N(\bfY, \bfP, \boldsymbol{\theta})$. AM07's five steps extend to this higher-dimensional operator by the same arguments.

    Each step of AM07 extends as follows. Step~1 follows from \Cref{assump: nplq identification1}. Step~2 follows from uniform convergence of $\cQ_N^q$ to $\cQ^q$, which holds by Assumptions \ref{assump: LS smoothness} and \ref{assump: LS iid}. Step~3 holds because $\boldsymbol{\alpha}_N(\bfY, \bfP, \boldsymbol{\theta})$ converges uniformly to $\boldsymbol{\alpha}_0(\bfY, \bfP, \boldsymbol{\theta})$ by uniform convergence of $\cQ_N^q$ and the local uniqueness of the maximizer (\Cref{assump: LS local equiv}), so $\Phi_N \to_p \Phi_0$ uniformly in a neighborhood of $(\bfY_0, \bfP_0, \boldsymbol{\theta}_0)$. Step~4 follows from \Cref{assump: LS isolated fp}, which ensures the Jacobian of $\Phi_0 - I$ is nonsingular at $(\bfY_0, \bfP_0, \boldsymbol{\theta}_0)$, so the implicit function theorem guarantees a sample fixed point of $\Phi_N$ in a neighborhood of $(\bfY_0, \bfP_0, \boldsymbol{\theta}_0)$ for all large $N$. For Step~5, by Step~4 a sample fixed point of $\Phi_N$ near $(\bfY_0, \bfP_0, \boldsymbol{\theta}_0)$ exists with probability approaching~1. By Step~1 and uniform convergence of $\cQ_N^q$, this fixed point has a strictly higher sample pseudo-likelihood than any fixed point bounded away from $(\bfY_0, \bfP_0, \boldsymbol{\theta}_0)$ with probability approaching~1. Therefore the EM-NPL($q$) estimator, defined as the fixed point of $\Phi_N$ maximizing $\cQ_N^q$, converges in probability to $(\boldsymbol{\theta}_0, \boldsymbol{\pi}_0, \bfY_0, \bfP_0)$.

    \medskip
    \textit{Part~II: Asymptotic normality.} Let $(\hat{\boldsymbol{\theta}}, \hat{\boldsymbol{\pi}}, \hat{\bfY}, \hat{\bfP})$ denote the EM-NPL($q$) estimator, writing $\hat{\boldsymbol{\theta}} = \hat{\boldsymbol{\theta}}_{\mathrm{NPL}(q)}$, $\hat{\boldsymbol{\pi}} = \hat{\boldsymbol{\pi}}_{\mathrm{NPL}(q)}$, and $\hat{\boldsymbol{\alpha}} \coloneqq (\hat{\boldsymbol{\theta}}, \hat{\boldsymbol{\pi}})$, and suppressing the $\mathrm{NPL}(q)$ subscript for the remainder of this proof. By \Cref{def: em npl estimator}, the EM-NPL($q$) estimator satisfies
    \begin{align}
 &      \frac{1}{N} \sum_{i=1}^N \nabla_{\boldsymbol{\alpha}} \ell_i^q(\hat{\boldsymbol{\alpha}};\, \hat{\bfY}, \hat{\bfP}, \hat{\boldsymbol{\theta}}) = 0, \label{eq:foc} \\
 &    \hat{\bfY} = \boldsymbol{\Gamma}^q(\hat{\boldsymbol{\theta}}, \hat{\bfY}, \hat{\bfP}, \hat{{\boldsymbol{\theta}}}), \quad \hat{\bfP} = \boldsymbol{\Lambda}(\hat{\boldsymbol{\theta}}, \hat{\bfY}, \hat{\bfP}). \label{eq:fpc}
    \end{align}
    Define $\delta_{\boldsymbol{\alpha}} \coloneqq\sqrt{N}(\hat{\boldsymbol{\alpha}} - \boldsymbol{\alpha}_0)$, $\delta_{\boldsymbol{\theta}} \coloneqq\sqrt{N}(\hat{\boldsymbol{\theta}} - \boldsymbol{\theta}_0)$, $\delta_{\bfY} \coloneqq\sqrt{N}(\hat{\bfY} - \bfY_0)$, and $\delta_{\bfP} \coloneqq\sqrt{N}(\hat{\bfP} - \bfP_0)$. Note that  $\delta_{\boldsymbol{\theta}} = (\mathbf{I}_{M d_\theta },\, 0)\, \delta_{\boldsymbol{\alpha}}$.

    Expanding \eqref{eq:foc} around $(\boldsymbol{\alpha}_0, \bfY_0, \bfP_0, \boldsymbol{\theta}_0)$ and using the consistency of $(\hat{\boldsymbol{\alpha}}, \hat{\bfY}, \hat{\bfP}, \hat{\boldsymbol{\theta}})$ gives
    \begin{equation} \label{eq: MVE}
    \begin{split}
        \frac{1}{\sqrt{N}} \sum_{i=1}^N s_i^q &= \Omega^q_{\boldsymbol{\alpha}\boldsymbol{\alpha}}\, \delta_{\boldsymbol{\alpha}} + \Omega^q_{\boldsymbol{\alpha}\tilde{\boldsymbol{\theta}}}\, \delta_{\boldsymbol{\theta}} + \Omega^q_{\boldsymbol{\alpha} \bfY}\, \delta_{\bfY} + \Omega^q_{\boldsymbol{\alpha} \bfP}\, \delta_{\bfP} + o_p(1),
    \end{split}
    \end{equation}
    where $s_i^q \coloneqq\nabla_{\boldsymbol{\alpha}} \ell_i^q(\boldsymbol{\alpha}_0;\, \bfY_0, \bfP_0, \boldsymbol{\theta}_0)$ is the score at the true parameter value. Expanding \eqref{eq:fpc} around $(\boldsymbol{\theta}_0,\bfY_0, \bfP_0)$  and rearranging yields
    \begin{equation}\label{eq: Jacobian}
        \begin{pmatrix} I - \boldsymbol{\Gamma}^q_{\boldsymbol{Y}} & -\boldsymbol{\Gamma}^q_{\boldsymbol{P}} \\ -\boldsymbol{\Lambda}_{\boldsymbol{Y}} & I - \boldsymbol{\Lambda}_{\boldsymbol{P}} \end{pmatrix}
        \begin{pmatrix} \delta_{\bfY} \\ \delta_{\bfP} \end{pmatrix}
        =
        \begin{pmatrix} \boldsymbol{\Gamma}^q_{\boldsymbol{\theta}} + \boldsymbol{\Gamma}^q_{\tilde{\boldsymbol{\theta}}} \\ \boldsymbol{\Lambda}_{\boldsymbol{\theta}} \end{pmatrix} \delta_{\boldsymbol{\theta}} + o_p(1).
    \end{equation}
Differentiating $\bfY^q(\boldsymbol{\theta}) = \boldsymbol{\Gamma}^q(\boldsymbol{\theta}, \bfY^q(\boldsymbol{\theta}), \bfP^q(\boldsymbol{\theta}), \boldsymbol{\theta})$ and $\bfP^q(\boldsymbol{\theta}) = \boldsymbol{\Lambda}(\boldsymbol{\theta}, \bfY^q(\boldsymbol{\theta}), \bfP^q(\boldsymbol{\theta}))$ with respect to $\boldsymbol{\theta}$ at the true fixed point and rearranging yields
    \begin{equation} \label{eq: implicit function}
        \begin{pmatrix} I - \boldsymbol{\Gamma}^q_{\boldsymbol{Y}} & -\boldsymbol{\Gamma}^q_{\boldsymbol{P}} \\ -\boldsymbol{\Lambda}_{\boldsymbol{Y}} & I - \boldsymbol{\Lambda}_{\boldsymbol{P}} \end{pmatrix}
        \begin{pmatrix} \bfY^q_{\boldsymbol{\theta}} \\ \bfP^q_{\boldsymbol{\theta}} \end{pmatrix}
        =
        \begin{pmatrix} \boldsymbol{\Gamma}^q_{\boldsymbol{\theta}} + \boldsymbol{\Gamma}^q_{\tilde{\boldsymbol{\theta}}} \\ \boldsymbol{\Lambda}_{\boldsymbol{\theta}} \end{pmatrix},
    \end{equation}
    where the Jacobian matrices are block-diagonal with the $m$-th diagonal block given by the corresponding Jacobian matrix evaluated at $(\theta_0^m, Y_0^m, P_0^m, \theta_0^m)$. It follows from \eqref{eq: Jacobian}, \eqref{eq: implicit function}, and \Cref{assump: implicit function theorem} that
    \begin{equation*}
        \begin{pmatrix} \delta_{\bfY} \\ \delta_{\bfP} \end{pmatrix}
        =
        \begin{pmatrix} \bfY^q_{\boldsymbol{\theta}} \\ \bfP^q_{\boldsymbol{\theta}} \end{pmatrix}
         \delta_{\boldsymbol{\theta}} + o_p(1).
    \end{equation*}
    Substituting this into \eqref{eq: MVE} and using $\delta_{\boldsymbol{\theta}} = (\mathbf{I}_{M d_\theta},\, 0)\, \delta_{\boldsymbol{\alpha}}$ gives
    \begin{equation*}
    \begin{split}
        \frac{1}{\sqrt{N}} \sum_{i=1}^N s_i^q &= \Bigl[\Omega^q_{\boldsymbol{\alpha}\boldsymbol{\alpha}} + \bigl(\Omega^q_{\boldsymbol{\alpha}\tilde{\boldsymbol{\theta}}} + \Omega^q_{\boldsymbol{\alpha} \bfY}\, \bfY^q_{\boldsymbol{\theta}} + \Omega^q_{\boldsymbol{\alpha} \bfP}\, \bfP^q_{\boldsymbol{\theta}}\bigr) (\mathbf{I}_{M d_\theta},\, 0)\Bigr]\, \delta_{\boldsymbol{\alpha}} + o_p(1) \\
        &= \bigl(\Omega^q_{\boldsymbol{\alpha}\boldsymbol{\alpha}} + S^q\bigr)\, \delta_{\boldsymbol{\alpha}} + o_p(1).
    \end{split}
    \end{equation*}
    Rearranging gives $\sqrt{N}(\hat{\boldsymbol{\alpha}} - \boldsymbol{\alpha}_0) = (\Omega^q_{\boldsymbol{\alpha}\boldsymbol{\alpha}} + S^q)^{-1} \, N^{-1/2} \sum_{i=1}^N s_i^q + o_p(1)$.
\Cref{assumption: same fixed point} ensures the $q$-step likelihood is correctly specified at the truth, so the Bartlett identity and the positivity and smoothness conditions in \Cref{assump: LS primitives} give $\bE[s_i^q] = 0$ and $\mathrm{Var}(s_i^q) = \Omega^q_{\boldsymbol{\alpha}\boldsymbol{\alpha}}$, and the central limit theorem yields $N^{-1/2} \sum_{i=1}^N s_i^q \xrightarrow{d} \mathcal{N}(0,\, \Omega^q_{\boldsymbol{\alpha}\boldsymbol{\alpha}})$. The asymptotic distribution of $\sqrt{N}(\hat{\boldsymbol{\alpha}} - \boldsymbol{\alpha}_0)$ then follows immediately.
\end{proof}

\begin{proof}[\textbf{Proof of \Cref{proposition: variance approximation}}]
    Let $\Omega^{\mathrm{NPL}}_{\boldsymbol{\alpha}\boldsymbol{\alpha}}$, $\Omega^{\mathrm{NPL}}_{\boldsymbol{\alpha}\bfY}$, $\Omega^{\mathrm{NPL}}_{\boldsymbol{\alpha}\bfP}$, $\Omega^{\mathrm{NPL}}_{\boldsymbol{\alpha}\tilde{\boldsymbol{\theta}}}$, and $S^{\mathrm{NPL}}$ denote the counterparts of $\Omega^q_{\boldsymbol{\alpha}\boldsymbol{\alpha}}$, $\Omega^q_{\boldsymbol{\alpha}\bfY}$, $\Omega^q_{\boldsymbol{\alpha}\bfP}$, $\Omega^q_{\boldsymbol{\alpha}\tilde{\boldsymbol{\theta}}}$, and $S^q$ with $\Gamma^q$ replaced by $\Gamma^{\mathrm{NPL}}$. Write $\Sigma^q = A(q)\, B(q)\, A(q)^{\top}$ and $\Sigma_{\mathrm{NPL}} = A_{\mathrm{NPL}}\, B_{\mathrm{NPL}}\, A_{\mathrm{NPL}}^{\top}$, where $A(q) \coloneqq (\Omega^q_{\boldsymbol{\alpha}\boldsymbol{\alpha}} + S^q)^{-1}$, $B(q) \coloneqq \Omega^q_{\boldsymbol{\alpha}\boldsymbol{\alpha}}$, $A_{\mathrm{NPL}} \coloneqq (\Omega^{\mathrm{NPL}}_{\boldsymbol{\alpha}\boldsymbol{\alpha}} + S^{\mathrm{NPL}})^{-1}$, and $B_{\mathrm{NPL}} \coloneqq \Omega^{\mathrm{NPL}}_{\boldsymbol{\alpha}\boldsymbol{\alpha}}$.

    We establish the two preliminary bounds
    \begin{align*}
        \text{(a)}& \quad \|B(q) - B_{\mathrm{NPL}}\|_F = O(f(q)), \\
        \text{(b)}& \quad \|A(q) - A_{\mathrm{NPL}}\|_F = O(f(q)).
    \end{align*}

    \textit{Part~(a).} Since $\Gamma^q(z_0, Y_0) = \Gamma^{\mathrm{NPL}}(z_0, Y_0) = Y_0$ by \Cref{assumption: same fixed point}, the Bartlett identity gives $\Omega^q_{\boldsymbol{\alpha}\boldsymbol{\alpha}} = \bE[s_i^q (s_i^q)^{\top}]$ and $\Omega^{\mathrm{NPL}}_{\boldsymbol{\alpha}\boldsymbol{\alpha}} = \bE[s_i^{\mathrm{NPL}} (s_i^{\mathrm{NPL}})^{\top}]$, where $s_i^q \coloneqq\nabla_{\boldsymbol{\alpha}} \ell_i^q(\boldsymbol{\alpha}_0;\, \bfY_0, \bfP_0, \boldsymbol{\theta}_0)$ is the score at the true parameter value. By \Cref{lemma: derivative approximation bound}, \Cref{lemma: gradient bound}, and the chain rule, $\sup_i \|s_i^q - s_i^{\mathrm{NPL}}\| = O(f(q))$. Boundedness of the scores and \Cref{lemma: outer product difference bound} then give $\|B(q) - B_{\mathrm{NPL}}\|_F = O(f(q))$.

    \textit{Part~(b).} It suffices to show $\|(\Omega^q_{\boldsymbol{\alpha}\boldsymbol{\alpha}} + S^q) - (\Omega^{\mathrm{NPL}}_{\boldsymbol{\alpha}\boldsymbol{\alpha}} + S^{\mathrm{NPL}})\|_F = O(f(q))$. Part~(a) gives $\|\Omega^q_{\boldsymbol{\alpha}\boldsymbol{\alpha}} - \Omega^{\mathrm{NPL}}_{\boldsymbol{\alpha}\boldsymbol{\alpha}}\|_F = O(f(q))$. For the cross-derivative matrices, the same score-covariance argument with \Cref{lemma: outer product difference bound} gives
    \[
        \|\Omega^q_{\boldsymbol{\alpha}\tilde{\boldsymbol{\theta}}} - \Omega^{\mathrm{NPL}}_{\boldsymbol{\alpha}\tilde{\boldsymbol{\theta}}}\|_F + \|\Omega^q_{\boldsymbol{\alpha}\bfY} - \Omega^{\mathrm{NPL}}_{\boldsymbol{\alpha}\bfY}\|_F + \|\Omega^q_{\boldsymbol{\alpha}\bfP} - \Omega^{\mathrm{NPL}}_{\boldsymbol{\alpha}\bfP}\|_F = O(f(q)).
    \]
    For the implicit function derivatives, $(\bfY^q_{\boldsymbol{\theta}}, \bfP^q_{\boldsymbol{\theta}})^{\top}$ solves \eqref{eq: implicit function}, and $(\bfY^{\mathrm{NPL}}_{\boldsymbol{\theta}}, \bfP^{\mathrm{NPL}}_{\boldsymbol{\theta}})^{\top}$ solves the counterpart of \eqref{eq: implicit function} with $\Gamma^q$ replaced by $\Gamma^{\mathrm{NPL}}$. By \Cref{lemma: derivative approximation bound}, the coefficient matrix and the right-hand side of \eqref{eq: implicit function} differ from their NPL counterparts by $O(f(q))$. Since both coefficient matrices are invertible by \Cref{assump: implicit function theorem}, the matrix inverse perturbation bound gives $\|\bfY^q_{\boldsymbol{\theta}} - \bfY^{\mathrm{NPL}}_{\boldsymbol{\theta}}\|_F + \|\bfP^q_{\boldsymbol{\theta}} - \bfP^{\mathrm{NPL}}_{\boldsymbol{\theta}}\|_F = O(f(q))$. Combining these bounds gives $\|S^q - S^{\mathrm{NPL}}\|_F = O(f(q))$, hence $\|(\Omega^q_{\boldsymbol{\alpha}\boldsymbol{\alpha}} + S^q) - (\Omega^{\mathrm{NPL}}_{\boldsymbol{\alpha}\boldsymbol{\alpha}} + S^{\mathrm{NPL}})\|_F = O(f(q))$. Since $\Omega^{\mathrm{NPL}}_{\boldsymbol{\alpha}\boldsymbol{\alpha}} + S^{\mathrm{NPL}}$ is invertible by \Cref{assump: LS info matrix}, the matrix inverse perturbation bound gives $\|A(q) - A_{\mathrm{NPL}}\|_F = O(f(q))$, where $\|A(q)\|_F$ is uniformly bounded for sufficiently large $q$.

    Finally, decompose $\Sigma^q-\Sigma_{\mathrm{NPL}}$ as
    \begin{align*}
       & A(q)\,B(q)\,A(q)^{\top} - A_{\mathrm{NPL}}\,B_{\mathrm{NPL}}\,A_{\mathrm{NPL}}^{\top}\\
       & = (A(q)-A_{\mathrm{NPL}})\,B(q)\,A(q)^{\top}
        + A_{\mathrm{NPL}}\,(B(q)-B_{\mathrm{NPL}})\,A(q)^{\top}
        + A_{\mathrm{NPL}}\,B_{\mathrm{NPL}}\,(A(q)-A_{\mathrm{NPL}})^{\top}.
    \end{align*}
    Applying sub-multiplicativity of the Frobenius norm to each term and using the triangle inequality, parts~(a) and~(b), and the uniform boundedness of $A(q)$, $B(q)$, $A_{\mathrm{NPL}}$, $B_{\mathrm{NPL}}$ (which follows from \Cref{assump: LS primitives} and \Cref{assump: LS regularity}) gives $\|\Sigma^q - \Sigma_{\mathrm{NPL}}\|_F = O(f(q))$.
\end{proof}

\begin{proof}[\textbf{Proof of \Cref{proposition: mixture convergence matrix}}]
    We linearize the EM-NPL($q$) update around the EM-NPL($q$) estimator $\hat{\boldsymbol{\zeta}} = (\hat{\boldsymbol{\theta}}, \hat{\bfY}, \hat{\bfP},\hat{\boldsymbol{\pi}})$. The state at iteration $k-1$ is $\boldsymbol{\zeta}^{(k-1)}=(\boldsymbol{\theta}^{(k-1)}, \bfY^{(k-1)}, \bfP^{(k-1)}, \boldsymbol{\pi}^{(k-1)})$.

The E-step computes posterior weights $w_{im}^{(k)}$ from $\boldsymbol{\zeta}^{(k-1)}$; the M-step then updates $\pi_m^{(k)} = N^{-1} \sum_i w_{im}^{(k)}$. Write the stacked M-step map for $\boldsymbol{\pi}$ as $\Pi_N(\boldsymbol{\zeta})$. Then $\boldsymbol{\pi}^{(k)} = \Pi_N(\boldsymbol{\zeta}^{(k-1)})$ and $\hat{\boldsymbol{\pi}} = \Pi_N(\hat{\boldsymbol{\zeta}})$. Expanding $\Pi_N(\boldsymbol{\zeta}^{(k-1)})$ around $\hat{\boldsymbol{\zeta}}$ gives $\boldsymbol{\pi}^{(k)} - \hat{\boldsymbol{\pi}} = \Pi_{\boldsymbol{\zeta}}\Delta\boldsymbol{\zeta}^{(k-1)} + O_p(N^{-1/2}\|\delta^{(k-1)}\| + \|\delta^{(k-1)}\|^2)$, which is the first row of $R^q$. Here, the remainder order follows because the sample Jacobian of $\Pi_N(\boldsymbol{\zeta})$ at $\hat{\boldsymbol{\zeta}}$ differs from $\Pi_{\boldsymbol{\zeta}}$ by $O_p(N^{-1/2})$ from the $\sqrt{N}$-consistency of the EM-NPL($q$) estimator, smoothness of $\Pi_N(\boldsymbol{\zeta})$, and the $\sqrt{N}$-consistency of the sample average to its population mean. Henceforth, we write $r_N$ for a generic (conformably dimensioned) term of order $O_p(N^{-1/2}\|\delta^{(k-1)}\| + \|\delta^{(k-1)}\|^2)$, whose value may change from line to line.

Let $\ell_N^q(\boldsymbol{\vartheta};\boldsymbol{\zeta})$ denote the sample counterpart of the population M-step criterion $\ell_E^q(\boldsymbol{\vartheta};\boldsymbol{\zeta})$. Then $\boldsymbol{\theta}^{(k)}$ and $\hat{\boldsymbol{\theta}}$ satisfy the first-order conditions $\nabla_{\boldsymbol{\theta}}\ell_N^q(\boldsymbol{\theta}^{(k)};\boldsymbol{\zeta}^{(k-1)})=0$ and $\nabla_{\boldsymbol{\theta}}\ell_N^q(\hat{\boldsymbol{\theta}};\hat{\boldsymbol{\zeta}})=0$. Expanding the first-order condition $\nabla_{\boldsymbol{\theta}}\ell_N^q(\boldsymbol{\theta}^{(k)};\boldsymbol{\zeta}^{(k-1)})=0$ around $(\hat{\boldsymbol{\theta}},\hat{\boldsymbol{\zeta}})$, inverting the sample Hessian $\nabla_{\boldsymbol{\theta}\boldsymbol{\theta}}^2\ell_N^q$, which is nonsingular with probability approaching one by \Cref{assumption: mixture convergence regularity}(iii), and using the same sampling-plus-smoothness argument as for the M-step map for $\boldsymbol{\pi}$ gives the second row of $R^q$,
    \begin{equation*}
        \Delta\boldsymbol{\theta}^{(k)} = H_{\boldsymbol{\pi}}^q\, \Delta\boldsymbol{\pi}^{(k-1)} + H_{\boldsymbol{\theta}}^q\, \Delta\boldsymbol{\theta}^{(k-1)} + H_{\bfY}^q\, \Delta\bfY^{(k-1)} + H_{\bfP}^q\, \Delta\bfP^{(k-1)} + r_N.
    \end{equation*}

Now we analyze the $(\bfY, \bfP)$-update. The updates are $\bfY^{(k)} = \boldsymbol{\Gamma}^q(\boldsymbol{\theta}^{(k)}, \bfY^{(k-1)}, \bfP^{(k-1)}, \boldsymbol{\theta}^{(k-1)})$ and $\bfP^{(k)} = \boldsymbol{\Lambda}^q(\boldsymbol{\theta}^{(k)}, \bfY^{(k-1)}, \bfP^{(k-1)}, \boldsymbol{\theta}^{(k-1)})$. Expanding around the EM-NPL($q$) estimator, by the same argument as above, we obtain
    \begin{equation*}
        \begin{pmatrix} \Delta\bfY^{(k)} \\ \Delta\bfP^{(k)} \end{pmatrix}
        =
        \begin{pmatrix} \boldsymbol{\Gamma}_{\boldsymbol{\theta}}^q \\ \boldsymbol{\Lambda}^{q}_{\boldsymbol{\theta}} \end{pmatrix} \Delta\boldsymbol{\theta}^{(k)}
        +
        \begin{pmatrix} \boldsymbol{\Gamma}_{\tilde{\boldsymbol{\theta}}}^q \\ \boldsymbol{\Lambda}^{q}_{\boldsymbol{\tilde{\theta}}} \end{pmatrix} \Delta\boldsymbol{\theta}^{(k-1)}
        +
        \begin{pmatrix} \boldsymbol{\Gamma}_{\boldsymbol{Y}}^q & \boldsymbol{\Gamma}_{\boldsymbol{P}}^q \\ \boldsymbol{\Lambda}^{q}_{\boldsymbol{Y}} & \boldsymbol{\Lambda}^{q}_{\boldsymbol{P}} \end{pmatrix}
        \begin{pmatrix} \Delta\bfY^{(k-1)} \\ \Delta\bfP^{(k-1)} \end{pmatrix}
        + r_N.
    \end{equation*}
Substituting $\Delta\boldsymbol{\theta}^{(k)}$ from the M-step and collecting terms yields the third and fourth row of $R^q$.
 \end{proof}

\begin{proof}[\textbf{Proof of \Cref{theorem: npl(q) local convergence}}]
    By \Cref{lemma: M difference bound}, $\|R^q - R^{\mathrm{NPL}}\|_F = O(f(q)) \to 0$. The spectral radius is continuous in finite dimensions, so $\rho(R^q)\to\rho(R^{\mathrm{NPL}})$. Since $\rho(R^{\mathrm{NPL}})<1$, there exists $\bar q$ such that $\rho(R^q)<1$ for all $q\geq \bar q$.

    Fix any such $q$. By \Cref{proposition: mixture convergence matrix}, $\delta^{(k)} = R^q\delta^{(k-1)} + r_N$ with $\rho(R^q)<1$. Because $\rho(R^q)<1$, there exists an equivalent vector norm $\|\cdot\|_*$ such that the induced matrix norm satisfies $\|R^q\|_*<1$. Shrinking the neighborhood if necessary gives a contraction in this norm with probability approaching one. Hence the EM-NPL($q$) sequence converges locally for every sufficiently large $q$.
\end{proof}

\begin{proof}[\textbf{Proof of \Cref{theorem: number of iterations bound}}]
    By \Cref{lemma: M difference bound}, $\|R^{q} - R^{\mathrm{NPL}}\|_{F} = O(f(q)) \to 0$. Recall that $\nu$ is the size of the largest Jordan block of $R^{\mathrm{NPL}}$. The generalized Bauer--Fike theorem \citep[Theorems~3 and 3A]{chu1986generalization} bounds the spectral variation induced by any sufficiently small perturbation of $R^{\mathrm{NPL}}$. Applying it to the perturbation $R^q - R^{\mathrm{NPL}}$ implies the spectral perturbation bound
    \[
        \operatorname{dist}_H\{\mathrm{spec}(R^q),\mathrm{spec}(R^{\mathrm{NPL}})\}
        =
        O(\|R^q-R^{\mathrm{NPL}}\|_F^{1/\nu})
        =
        O(f(q)^{1/\nu}),
    \]
    where $\operatorname{dist}_H$ denotes Hausdorff distance between finite sets. Since $|\rho(A)-\rho(B)|\le\operatorname{dist}_H\{\mathrm{spec}(A),\mathrm{spec}(B)\}$ for any matrices $A,B$ from the definition of Hausdorff distance, there exists a sequence $g(q)=O(f(q)^{1/\nu})$ such that
    \[
        |\rho(R^q)-\rho(R^{\mathrm{NPL}})|\le g(q).
    \]

    Under \Cref{assumption: theta separability}, \Cref{theorem: truncation invariance} implies that the EM-NPL($q$) and EM-NPL algorithms share the same fixed point. Fix $c>0$ such that $\rho(R^{\mathrm{NPL}})+c<1$, and set $r \coloneqq \rho(R^{\mathrm{NPL}}) + c/2$, so $\rho(R^{\mathrm{NPL}}) < r < \rho(R^{\mathrm{NPL}})+c$. Since $g(q)\to0$ and both $c/2$ and $1-\rho(R^{\mathrm{NPL}})-c$ are strictly positive, there exists $\bar{q}$ such that $g(q) < \min\{c/2,\, 1-\rho(R^{\mathrm{NPL}})-c\}$ for all $q \ge \bar{q}$. For such $q$ we then have $\rho(R^q) \le \rho(R^{\mathrm{NPL}}) + g(q) < r \le \rho(R^q)+c$ and $\rho(R^q)+c < 1$; in particular, the circle $\Gamma \coloneqq \{z \in \bC : |z|=r\}$ encloses $\mathrm{spec}(R^q)$ for every such $q$, as well as $\mathrm{spec}(R^{\mathrm{NPL}})$. By the Cauchy integral representation of $A^k$ \citep[Theorem 6.2.28]{Horn_Johnson_1991}, for any $k \ge 0$ and $q \ge \bar{q}$,
    \[
        (R^q)^k = \frac{1}{2\pi i} \oint_{\Gamma} z^k (zI - R^q)^{-1} \, dz,
    \]
hence $\|(R^q)^k\|_F \le \frac{1}{2\pi} \oint_{\Gamma} |z|^k \|(zI - R^q)^{-1}\|_F |dz| \le r^{k+1} \max_{z \in \Gamma} \|(zI-R^q)^{-1}\|_F$.
    Since $R^q \to R^{\mathrm{NPL}}$ from \Cref{lemma: M difference bound} and $\rho(R^{\mathrm{NPL}}),\rho(R^q) < r$ for all $q \ge \bar{q}$, the resolvent $(zI-A)^{-1}$ is jointly continuous in $(z,A)$ on the compact set $\Gamma \times (\{R^q : q \ge \bar{q}\} \cup \{R^{\mathrm{NPL}}\})$, so $M \coloneqq \sup_{q \ge \bar{q}} \max_{z \in \Gamma} \|(zI-R^q)^{-1}\|_F$ is finite. Setting $C_c \coloneqq rM$, a constant not depending on $q$, and using $r \le \rho(R^q)+c$ from above, we obtain
    \[
        \|(R^q)^k\|_F
        \le
        C_c\{\rho(R^q)+c\}^k
        \qquad
        \text{for all } k\ge0 \text{ and all } q \ge \bar{q}.
    \]
    Therefore, the linearized recursion $\delta_q^{(k)}=(R^q)^k\delta^{(0)}$ satisfies
    \[
        \|\delta_q^{(k)}\|
        \le
        C_c\{\rho(R^q)+c\}^k\|\delta^{(0)}\|.
    \]
    If $\varepsilon<C_c\|\delta^{(0)}\|$, then the right-hand side is no larger than $\varepsilon$ whenever
    \[
        k
        \ge
        \left\lceil
        \frac{\log\varepsilon-\log C_c-\log\|\delta^{(0)}\|}
        {\log\{\rho(R^q)+c\}}
        \right\rceil
        =
        K_q(c).
    \]

    For $q \ge \bar{q}$, the choice of $\bar{q}$ gives $\rho(R^{\mathrm{NPL}})+g(q)+c<1$. Since $\rho(R^q)\le \rho(R^{\mathrm{NPL}})+g(q)$,
    \[
        \log\{\rho(R^q)+c\}
        \le
        \log\{\rho(R^{\mathrm{NPL}})+g(q)+c\}
        <0.
    \]
    It follows that $\frac{\log\varepsilon-\log C_c-\log\|\delta^{(0)}\|}{\log\{\rho(R^q)+c\}} \le \frac{\log\varepsilon-\log C_c-\log\|\delta^{(0)}\|}{\log\{\rho(R^{\mathrm{NPL}})+g(q)+c\}}$. Taking ceilings gives the displayed conservative upper bound for $K_q(c)$.
\end{proof}

\subsection{Additional Implementation Details} \label{sec: additional implementation}

\subsubsection{Initializing the EM-NPL($q$) Algorithm} \label{sec: initialization}

\noindent As with any EM algorithm, the quality of the initialization can affect convergence speed and the risk of converging to a local optimum. Sieve maximum likelihood estimation is one approach. For a sieve basis $B_K(x) = (b_1(x),\ldots,b_K(x))^{\top}$ with $K$ terms, define the type-specific sieve CCP as
\begin{equation*}
    P^m_{\mathrm{sieve}}(a|x) = \frac{\exp\bigl(B_K(x)^{\top}\alpha_a^m\bigr)}{\sum_{a' \in \mA} \exp\bigl(B_K(x)^{\top}\alpha_{a'}^m\bigr)}.
\end{equation*}
To initialize $\alpha_a^m$, one can estimate CCPs for each market separately and then cluster them into $M$ groups using $k$-means. Then, for each cluster $m$, $\alpha_a^m$ is initialized by multinomial logit using the data from firms in cluster $m$. Alternatively, $\alpha_a^m$ can be initialized randomly. One then runs the sieve EM algorithm to convergence to obtain the initial $(\bfP^{(0)}, \boldsymbol{\pi}^{(0)})$. To initialize $\boldsymbol{\theta}^{(0)}$, first estimate a pooled single-type model by maximum likelihood. For each type $m$, use the pooled estimate as the starting value and perform one weighted pseudo-likelihood M-step, using the posterior type probabilities from the sieve MLE as weights and $P_{\mathrm{sieve}}^m$ in the policy valuation mapping. This yields $\theta^{m,(0)}$. Given $(P^{m,(0)},\theta^{m,(0)})$, solve the selected fixed-point equation to full convergence for each type to obtain $Y^{m,(0)}$. If the mapping depends on the previous parameter iterate, set $\tilde{\theta}^{m,(0)}=\theta^{m,(0)}$.

\subsubsection{GMRES Algorithm} \label{sec: gmres algorithm}

\noindent Consider the linear system $AY=b$, where $A\in\mathbb{R}^{n\times n}$ is nonsingular and $b\in\mathbb{R}^n$. Starting from $Y^{(0)}\in\mathbb{R}^n$, standard GMRES applies the following Arnoldi recursion whenever the initial residual is nonzero.

\begin{itemize}
    \item \textbf{Input:} $A\in\mathbb{R}^{n\times n}$, $b\in\mathbb{R}^n$, initial guess $Y^{(0)}\in\mathbb{R}^n$, and inner iteration count $q$.
    \item \textbf{Step 0:} Compute $r_0=b-AY^{(0)}$ and $\rho_0=\|r_0\|_2$, and set $v_1=r_0/\rho_0$ when $\rho_0>0$.
    \item \textbf{Step 1 (Arnoldi process):} For $j=1,\ldots,q$:
    \begin{itemize}
        \item[(a)] Set $w=Av_j$.
        \item[(b)] For $i=1,\ldots,j$, set $h_{ij}=v_i^{\top}w$ and update $w\leftarrow w-h_{ij}v_i$.
        \item[(c)] Set $h_{j+1,j}=\|w\|_2$ and $v_{j+1}=w/h_{j+1,j}$ when $h_{j+1,j}>0$; otherwise terminate.
    \end{itemize}
    \item \textbf{Step 2 (Least-squares problem):} Let $\bar H_q\in\mathbb{R}^{(q+1)\times q}$ contain the coefficients $h_{ij}$ and let $e_1=(1,0,\ldots,0)^{\top}\in\mathbb{R}^{q+1}$. Compute
    \[
        y_q=\argmin_{y\in\mathbb{R}^q}\|\rho_0e_1-\bar H_qy\|_2.
    \]
    \item \textbf{Step 3 (Solution update):} Set
    \[
        Y^{(q)}=Y^{(0)}+V_qy_q,
        \quad
        V_q=[v_1,\ldots,v_q]\in\mathbb{R}^{n\times q}.
    \]
    \item \textbf{Output:} $\Gamma_{\mathrm{GMRES}}^q(\theta,Y^{(0)},P,\tilde{\theta})\coloneqq Y^{(q)}$.
\end{itemize}

For matrix right-hand sides $AX=B$, the implementation applies this vector algorithm to $\operatorname{vec}(X)$; the Euclidean inner product and norm then equal their Frobenius counterparts used in the code.

The Arnoldi normalization is undefined at $r_0=0$, so we smooth the numerical GMRES map locally. Let $\varepsilon_{\mathrm{in}}$ be its stopping tolerance, set $\tau=0.01\varepsilon_{\mathrm{in}}$, and let $s(u)$ equal zero for $u\leq0$, one for $u\geq1$, and $u^4(35-84u+70u^2-20u^3)$ otherwise. When $\rho_0\leq2\tau$, compute $q$ fallback iterations
\[
    Y_F^{(\ell+1)}=Y_F^{(\ell)}+D\{b-AY_F^{(\ell)}\},
    \quad
    Y_F^{(0)}=Y^{(0)},
\]
where $D=I$ gives Richardson iteration for policy valuation and $D=\omega A^{\top}$ gives Landweber iteration for EPL, with $0<\omega<2/\|A\|_2^2$, and return
\[
    Y_{\mathrm{smooth}}^{(q)}
    =
    \bigl[1-s\{(\rho_0-\tau)/\tau\}\bigr]Y_F^{(q)}
    +s\{(\rho_0-\tau)/\tau\}Y^{(q)}.
\]
Standard numerical GMRES is used when $\rho_0>2\tau$. At an exact fixed point, $\rho_0<\tau$ throughout some neighborhood. Hence, if $A$, $b$, $Y^{(0)}$, and $D$ are three times continuously differentiable locally, so is the smoothed map; it also leaves the fixed point unchanged.

\subsubsection{Kronecker Structure} \label{sec: kronecker}

\noindent Suppose $x_t=(x_{1t},\ldots,x_{Kt})$ has $K$ independently evolving components, each with $L$ grid points and transition matrix $F_k\in\mathbb{R}^{L\times L}$. The joint state space has $L^K$ points, and its transition matrix is
\begin{equation*}
    F = F_1 \otimes F_2 \otimes \cdots \otimes F_K.
\end{equation*}
The inner solvers compute $Fv$ without forming the full $L^K\times L^K$ matrix. For $v \in \bR^{L^K}$, let $\mV = \operatorname{mat}(v) \in \bR^{L \times \cdots \times L}$ denote the order-$K$ tensor whose $(x_1, \ldots, x_K)$ entry is the component of $v$ indexed by state $(x_1, \ldots, x_K)$, with $x_1$ varying slowest, consistent with the index ordering of $F_1\otimes\cdots\otimes F_K$. Let $\mV \times_k F_k$ denote the mode-$k$ product, which contracts the $k$-th index of $\mV$ with $F_k$ \citep{kolda2009tensor}. Then
\[
    Fv = \operatorname{vec}\!\left(
      \mV \times_1 F_1 \times_2 F_2 \cdots \times_K F_K \right),
\]
where $\operatorname{vec} = \operatorname{mat}^{-1}$, and these products are applied sequentially for $k = 1,\ldots,K$. The savings grow exponentially with the number of state dimensions. The cost per matrix-vector product falls from $\mathcal{O}(L^{2K})$ to $\mathcal{O}(K L^{K+1})$, and memory from $\mathcal{O}(L^{2K})$ to $\mathcal{O}(K L^{2})$.

\clearpage
\section{Additional Simulation Results} \label{sec: additional simulation results}

\begin{table}[!htbp]
\centering
\begin{threeparttable}
\caption{Mixture Model Results ($M=3$, $N=5000$, $\beta=0.80$)}
\small
\setlength{\tabcolsep}{4pt}
\label{tab:mixture_beta_0_80}
\begin{tabular}{ll|cccc|cccc}
\toprule
 & & \multicolumn{4}{c|}{Finite Dependence} & \multicolumn{4}{c}{Non-Finite Dependence} \\
\cmidrule(lr){3-6} \cmidrule(lr){7-10}
Method & $q$ & CT (s) & MSE & Avg Iter & Conv (\%) & CT (s) & MSE & Avg Iter & Conv (\%) \\
\midrule
PV\_GMRES & 4 & \phantom{0}2.84 (0.46) & 0.024 & \phantom{0}8.7 & 100.0 & \phantom{0}3.28 (0.28) & 0.034 & \phantom{0}6.0 & 100.0 \\
PV\_GMRES & 6 & \phantom{0}2.98 (0.49) & 0.024 & \phantom{0}8.6 & 100.0 & \phantom{0}3.34 (0.43) & 0.034 & \phantom{0}6.2 & 100.0 \\
PV\_GMRES & 8 & \phantom{0}3.15 (0.49) & 0.024 & \phantom{0}8.5 & 100.0 & \phantom{0}3.50 (0.49) & 0.034 & \phantom{0}6.2 & 100.0 \\
PV\_GMRES & 10 & \phantom{0}3.35 (0.57) & 0.024 & \phantom{0}8.5 & 100.0 & \phantom{0}3.64 (0.48) & 0.034 & \phantom{0}6.2 & 100.0 \\
\midrule
PV\_SA & 4 & \phantom{0}2.95 (0.47) & 0.024 & \phantom{0}8.7 & 100.0 & \phantom{0}3.37 (0.31) & 0.034 & \phantom{0}6.0 & 100.0 \\
PV\_SA & 6 & \phantom{0}3.14 (0.54) & 0.024 & \phantom{0}8.6 & 100.0 & \phantom{0}3.54 (0.60) & 0.034 & \phantom{0}6.4 & 100.0 \\
PV\_SA & 8 & \phantom{0}3.34 (0.59) & 0.024 & \phantom{0}8.6 & 100.0 & \phantom{0}3.64 (0.44) & 0.034 & \phantom{0}6.2 & 100.0 \\
PV\_SA & 10 & \phantom{0}3.54 (0.64) & 0.024 & \phantom{0}8.5 & 100.0 & \phantom{0}3.84 (0.71) & 0.034 & \phantom{0}6.3 & 100.0 \\
\midrule
BM\_SA & 4 & \phantom{0}5.51 (0.93) & 0.024 & \phantom{0}7.1 & 100.0 & \phantom{0}9.62 (1.24) & 0.034 & \phantom{0}6.9 & \phantom{0}99.8 \\
BM\_SA & 6 & \phantom{0}7.32 (1.28) & 0.024 & \phantom{0}7.2 & 100.0 & \phantom{0}9.83 (0.96) & 0.034 & \phantom{0}5.4 & 100.0 \\
BM\_SA & 8 & \phantom{0}9.11 (1.65) & 0.024 & \phantom{0}7.2 & 100.0 & 12.00 (1.70) & 0.034 & \phantom{0}5.7 & 100.0 \\
BM\_SA & 10 & 10.90 (2.01) & 0.024 & \phantom{0}7.2 & 100.0 & 14.62 (2.00) & 0.034 & \phantom{0}5.7 & 100.0 \\
\midrule
BM\_AD & 4 & \phantom{0}7.84 (1.59) & 0.024 & \phantom{0}7.4 & 100.0 & 11.69 (0.87) & 0.034 & \phantom{0}6.1 & 100.0 \\
BM\_AD & 6 & 10.44 (1.75) & 0.024 & \phantom{0}7.3 & 100.0 & 13.66 (1.03) & 0.034 & \phantom{0}5.5 & 100.0 \\
BM\_AD & 8 & 13.84 (2.06) & 0.024 & \phantom{0}6.7 & 100.0 & 18.55 (1.38) & 0.034 & \phantom{0}5.5 & 100.0 \\
BM\_AD & 10 & 17.01 (2.44) & 0.024 & \phantom{0}5.8 & 100.0 & 23.23 (1.72) & 0.034 & \phantom{0}5.2 & 100.0 \\
\midrule
BM\_NT & 4 & 17.36 (2.90) & 0.024 & \phantom{0}7.1 & 100.0 & 28.40 (2.74) & 0.034 & \phantom{0}5.7 & 100.0 \\
BM\_NT & 6 & 17.41 (2.91) & 0.024 & \phantom{0}7.1 & 100.0 & 28.54 (2.78) & 0.034 & \phantom{0}5.7 & 100.0 \\
BM\_NT & 8 & 17.39 (2.90) & 0.024 & \phantom{0}7.1 & 100.0 & 28.51 (2.76) & 0.034 & \phantom{0}5.7 & 100.0 \\
BM\_NT & 10 & 17.38 (2.88) & 0.024 & \phantom{0}7.1 & 100.0 & 28.50 (2.73) & 0.034 & \phantom{0}5.7 & 100.0 \\
\midrule
EE\_SA & 4 & \phantom{0}3.98 (0.73) & 0.024 & \phantom{0}6.9 & 100.0 & --- & --- & --- & --- \\
EE\_SA & 6 & \phantom{0}5.39 (1.01) & 0.024 & \phantom{0}7.2 & 100.0 & --- & --- & --- & --- \\
EE\_SA & 8 & \phantom{0}6.71 (1.26) & 0.024 & \phantom{0}7.3 & 100.0 & --- & --- & --- & --- \\
EE\_SA & 10 & \phantom{0}7.98 (1.49) & 0.024 & \phantom{0}7.2 & 100.0 & --- & --- & --- & --- \\
\midrule
PV\_GMRES & $\infty$ & \phantom{0}3.62 (0.59) & 0.024 & \phantom{0}8.5 & 100.0 & \phantom{0}4.01 (0.55) & 0.034 & \phantom{0}6.2 & 100.0 \\
BM\_NT & $\infty$ & 23.65 (4.38) & 0.024 & \phantom{0}7.2 & 100.0 & 38.69 (5.11) & 0.034 & \phantom{0}5.7 & 100.0 \\
EE\_SA & $\infty$ & 12.48 (2.41) & 0.024 & \phantom{0}7.2 & 100.0 & --- & --- & --- & --- \\
\bottomrule
\end{tabular}
\vspace{0.5em}
{\footnotesize \textit{Notes:} Results are averaged across 500 simulations ($M=3$ types, $N=5000$, $\beta=0.80$). $q$ denotes the number of inner iterations. CT reports the average computational time in seconds with standard deviation in parentheses. MSE is $\sum_{m=1}^{M}(\|\hat{\theta}_m - \theta_{m,0}\|^2 + \|\hat{\pi}_m - \pi_{m,0}\|^2)$ averaged across simulations. EM convergence is declared when $\max\{\|P^{(k)} - P^{(k-1)}\|_\infty,\, \|\theta^{(k)} - \theta^{(k-1)}\|_\infty,\, \|Y^{(k)} - Y^{(k-1)}\|_\infty / (1 + \|Y^{(k-1)}\|_\infty)\} < 10^{-3}$. The rows with $q=\infty$ show the results when the fixed-point equation is solved to full convergence.}
\end{threeparttable}
\end{table}

\begin{table}[!htbp]
\centering
\begin{threeparttable}
\caption{Mixture Model Results ($M=3$, $N=5000$, $\beta=0.9999$)}
\small
\setlength{\tabcolsep}{4pt}
\label{tab:mixture_beta_0_9999}
\begin{tabular}{ll|cccc|cccc}
\toprule
 & & \multicolumn{4}{c|}{Finite Dependence} & \multicolumn{4}{c}{Non-Finite Dependence} \\
\cmidrule(lr){3-6} \cmidrule(lr){7-10}
Method & $q$ & CT (s) & MSE & Avg Iter & Conv (\%) & CT (s) & MSE & Avg Iter & Conv (\%) \\
\midrule
PV\_GMRES & 4 & \phantom{0}3.39 (0.66) & 0.024 & \phantom{0}8.0 & 100.0 & 10.21 (8.84) & 0.061 & \phantom{0}6.4 & 100.0 \\
PV\_GMRES & 6 & \phantom{0}4.84 (1.03) & 0.024 & 14.6 & 100.0 & 11.69 (8.89) & 0.061 & 12.2 & 100.0 \\
PV\_GMRES & 8 & \phantom{0}3.65 (0.68) & 0.024 & \phantom{0}7.8 & 100.0 & 10.89 (8.83) & 0.061 & \phantom{0}7.8 & 100.0 \\
PV\_GMRES & 10 & \phantom{0}3.80 (0.77) & 0.024 & \phantom{0}7.6 & 100.0 & 10.21 (8.88) & 0.061 & \phantom{0}5.4 & 100.0 \\
\midrule
PV\_SA & 4 & \phantom{0}3.36 (0.65) & 0.024 & \phantom{0}7.5 & 100.0 & 10.03 (8.85) & 0.061 & \phantom{0}5.3 & 100.0 \\
PV\_SA & 6 & \phantom{0}3.54 (0.69) & 0.024 & \phantom{0}7.6 & 100.0 & 10.21 (8.89) & 0.061 & \phantom{0}5.8 & 100.0 \\
PV\_SA & 8 & \phantom{0}3.74 (0.74) & 0.024 & \phantom{0}7.6 & 100.0 & 10.16 (8.81) & 0.061 & \phantom{0}5.4 & 100.0 \\
PV\_SA & 10 & \phantom{0}3.91 (0.74) & 0.024 & \phantom{0}7.6 & 100.0 & 10.27 (8.86) & 0.061 & \phantom{0}5.4 & 100.0 \\
\midrule
BM\_SA & 4 & 33.52 (4.68) & 0.024 & 100.0 & \phantom{00}0.0 & 44.21 (9.15) & 0.061 & 100.0 & \phantom{00}0.0 \\
BM\_SA & 6 & 44.39 (6.32) & 0.024 & 100.0 & \phantom{00}0.0 & 53.64 (9.47) & 0.061 & 100.0 & \phantom{00}0.0 \\
BM\_SA & 8 & 55.12 (7.92) & 0.024 & 100.0 & \phantom{00}0.0 & 62.79 (9.79) & 0.061 & 100.0 & \phantom{00}0.0 \\
BM\_SA & 10 & 66.49 (9.74) & 0.024 & 100.0 & \phantom{00}0.0 & 73.70 (10.29) & 0.061 & 100.0 & \phantom{00}0.0 \\
\midrule
BM\_AD & 4 & 191.09 (34.45) & 0.029 & 100.0 & \phantom{00}0.0 & 358.86 (46.46) & 0.278 & 100.0 & \phantom{00}0.0 \\
BM\_AD & 6 & 233.46 (84.66) & 0.025 & 71.1 & \phantom{0}75.2 & 461.05 (66.05) & 0.079 & 96.3 & \phantom{0}24.2 \\
BM\_AD & 8 & 39.74 (10.52) & 0.024 & 11.3 & 100.0 & 137.31 (36.71) & 0.061 & 22.8 & 100.0 \\
BM\_AD & 10 & 28.35 (6.36) & 0.024 & \phantom{0}7.4 & 100.0 & 65.74 (16.61) & 0.061 & \phantom{0}8.8 & 100.0 \\
\midrule
BM\_NT & 4 & 26.22 (3.78) & 0.024 & \phantom{0}7.3 & 100.0 & 53.99 (10.92) & 0.061 & \phantom{0}5.4 & 100.0 \\
BM\_NT & 6 & 26.29 (3.78) & 0.024 & \phantom{0}7.3 & 100.0 & 54.16 (10.96) & 0.061 & \phantom{0}5.4 & 100.0 \\
BM\_NT & 8 & 26.35 (3.88) & 0.024 & \phantom{0}7.3 & 100.0 & 54.00 (10.84) & 0.061 & \phantom{0}5.4 & 100.0 \\
BM\_NT & 10 & 26.33 (3.87) & 0.024 & \phantom{0}7.3 & 100.0 & 53.98 (10.95) & 0.061 & \phantom{0}5.4 & 100.0 \\
\midrule
EE\_SA & 4 & \phantom{0}4.73 (0.87) & 0.024 & \phantom{0}7.0 & 100.0 & --- & --- & --- & --- \\
EE\_SA & 6 & \phantom{0}6.22 (1.08) & 0.024 & \phantom{0}7.3 & 100.0 & --- & --- & --- & --- \\
EE\_SA & 8 & \phantom{0}7.64 (1.22) & 0.024 & \phantom{0}7.3 & 100.0 & --- & --- & --- & --- \\
EE\_SA & 10 & \phantom{0}9.05 (1.47) & 0.024 & \phantom{0}7.4 & 100.0 & --- & --- & --- & --- \\
\midrule
PV\_GMRES & $\infty$ & \phantom{0}4.58 (0.84) & 0.024 & \phantom{0}7.6 & 100.0 & 10.98 (8.87) & 0.061 & \phantom{0}5.4 & 100.0 \\
BM\_NT & $\infty$ & 38.01 (6.47) & 0.024 & \phantom{0}7.3 & 100.0 & 76.33 (13.45) & 0.061 & \phantom{0}5.4 & 100.0 \\
EE\_SA & $\infty$ & 16.23 (2.60) & 0.024 & \phantom{0}7.4 & 100.0 & --- & --- & --- & --- \\
\bottomrule
\end{tabular}
\vspace{0.5em}
{\footnotesize \textit{Notes:} Results are averaged across 500 simulations ($M=3$ types, $N=5000$, $\beta=0.9999$). $q$ denotes the number of inner iterations. CT reports the average computational time in seconds with standard deviation in parentheses. MSE is $\sum_{m=1}^{M}(\|\hat{\theta}_m - \theta_{m,0}\|^2 + \|\hat{\pi}_m - \pi_{m,0}\|^2)$ averaged across simulations. EM convergence is declared when $\max\{\|P^{(k)} - P^{(k-1)}\|_\infty,\, \|\theta^{(k)} - \theta^{(k-1)}\|_\infty,\, \|Y^{(k)} - Y^{(k-1)}\|_\infty / (1 + \|Y^{(k-1)}\|_\infty)\} < 10^{-3}$. The rows with $q=\infty$ show the results when the fixed-point equation is solved to full convergence.}
\end{threeparttable}
\end{table}

\end{document}